\definecolor{bluecite}{HTML}{0875b7}
\numberwithin{equation}{section}
\newtheorem{theorem}{Theorem}[section]
\newtheorem{corollary}[theorem]{Corollary}
\newtheorem{definition}[theorem]{Definition}
\newtheorem{proposition}[theorem]{Proposition}
\newtheorem{remark}[theorem]{Remark}
\newcommand{\beq} {\begin{equation}}
\newcommand{\eeq} {\end{equation}}
\newcommand{\df}{\textrm{d}}
\newcommand{\Df}{\textrm{D}}
\title{Schrödinger Connections: From Mathematical Foundations Towards Yano-Schrödinger Cosmology

}
\author{
  Lehel Csillag \thanks{Author to whom any correspondence should be addressed}\\
  Department of Physics,  
Babeș-Bolyai University  \\
Kogalniceanu Street, Cluj Napoca 400084, Romania and \\
Department of Mathematics and Computer Science,\\
Transilvania University of Brasov, Brașov, Romania \\
\texttt{lehel.csillag@ubbcluj.ro, lehel.csillag@unitbv.ro, lehel@csillag.ro}
\And
Anish Agashe \\
Department of Physics and Materials Science,
St. Mary's College of Maryland \\
47645 College Dr, St. Mary's City, Maryland, USA 20686 \\
\texttt{anagashe@smcm.edu} \\
  \And
 Damianos Iosifidis \\
Laboratory of Theoretical Physics, Institute of Physics,
University of Tartu \\
W. Ostwaldi 1, 50411 Tartu, Estonia \\
\texttt{damianos.iosifidis@ut.ee} \\
}
\begin{document}
\maketitle

\begin{abstract}
Schr\"odinger connections are a special class of affine connections, which despite being metric incompatible, preserve length of vectors under autoparallel transport. In the present paper, we introduce a novel coordinate-free formulation of Schrödinger connections. After recasting their basic properties in the language of differential geometry, we show that Schrödinger connections can be realized through torsion, non-metricity, or both. We then calculate the curvature tensors of Yano-Schrödinger geometry and present the first explicit example of a non-static Einstein manifold with torsion. We generalize the Raychaudhuri and Sachs equations to the Schr\"odinger geometry. The length-preserving property of these connections enables us to construct a Lagrangian formulation of the Sachs equation. We also obtain an equation for cosmological distances. After this geometric analysis, we build gravitational theories based on Yano-Schrödinger geometry, using both a metric and a metric-affine approach. For the latter, we introduce a novel cosmological hyperfluid that will source the Schr\"odinger geometry. Finally, we construct simple cosmological models within these theories and compare our results with observational data as well as the $\Lambda$CDM model. 

\end{abstract}

\newpage
 \tableofcontents

\section{Introduction}
The advancement of early 20th-century theoretical physics was closely intertwined with developments in mathematics. General relativity (GR) exemplifies this, being a product of collaborative efforts between mathematicians and physicists. The theory presented an elegant geometric perspective on gravity to the physics community, while simultaneously opening up new research domains in geometric analysis for mathematicians.
 
During the early days of GR, mathematicians identified two key mathematical premises of Einstein's theory: metric compatibility and torsion-freeness of the affine connection. These conditions uniquely define the Levi-Civita connection in terms of the metric. This realization led mathematicians to propose extensions to Einstein's theory, by relaxing either or both the conditions on the connection. A geometry equipped with such a general connection is known as metric-affine (or sometimes non-Riemannian) geometry \cite{Hehl:1994ue}.

Theories of gravity based on such geometries constitute promising and well motivated extensions of GR. The earliest instances of such extensions could be traced back to Weyl and Cartan merely a few years after Einstein proposed his theory \cite{Weyl,Cartan1,Cartan2,Cartan3,Cartan4}.
{It is also important to note that the rich geometric structure that comes from torsion and non-metricity has its origin in the microproperties of matter. Indeed, torsion is known to be sourced by the spin of matter, and non-metricity is sourced by the hadronic properties of matter associated to the so-called  shear and dilation currents of hypermomentum. We therefore see that such geometric modifications have their origin in the properties of the physical matter.  With many subcases possible due to inclusion (or exclusion) of different geometric features, there exists a whole landscape of non-Riemannian extensions of GR (see Figure \ref{fig-nrlandscape}). 

\begin{figure}[h]
    \centering
    \includegraphics[width=0.7\linewidth]{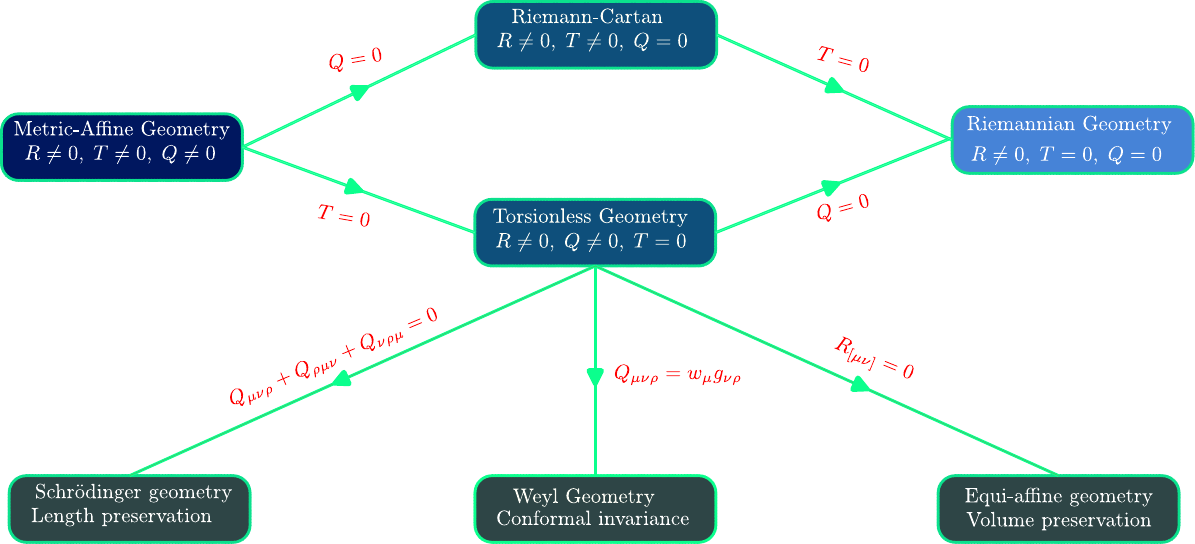}
    \caption{The landscape of non-Riemannian extensions of general relativity.} \label{fig-nrlandscape}
\end{figure}

These extensions have also been considered in the context of unified field theories \cite{unified}. The impact of non-Riemannian quantities on cosmology has also been of great interest (see \cite{Puetzfeld:2004yg} and the references within). Recently, torsional $f(T)$ modifications of gravity were used to address the Hubble tension, as pointed out in \cite{mandal2023h0,wang2020ft}. It is important to note, howevever, that $f(T)$ theories suffer from a strong coupling problem \cite{PhysRevD.103.024054}. For a detailed review on $f(T)$ theories consult 
\cite{cai2016ft}. Similarly, cosmological studies in theories containing non-metricity have also been done \cite{jim,khy}. Motivated by this, we analyse extensions of GR based on a particular type of non-metric connection: the Schr\"odinger connection. We detail the context behind using this connection below.

In 1924, Friedmann introduced a connection with vectorial torsion, which he called the semi-symmetric connection \cite{FriedmannSchouten}. Since then, mathematical aspects of semi-symmetric connections have been thoroughly examined \cite{Barua,Biswas,Imai,Agashe,Amur,Sharfudin,Meraj}. These investigations were propelled by the foundational work of Yano \cite{Kentaroyano}, in which a geometric, coordinate-free description of semi-symmetric metric connections was presented. This connection remained mostly unnoticed by physicists, until recently \cite{unificationsemisym, zangiabadi, Riccicalculusbook}. The reason perhaps being that the semi-symmetric connections lack a main physical requirement: they do not preserve lengths of parallely transported curves, unless they are metric-compatible. The same issue was also present in other extended gravity theories, most notably in Weyl's attempt at a unified field theory, as pointed out by Einstein \cite{Wheeler}. 



Schrödinger found the conditions that have to be imposed on a general affine connection in order to ensure the preservation of lengths during autoparallel transport \cite{Schrodinger}. In a manner similar to the semi-symmetric connection, Schrödinger's connection has also been overlooked in both mathematical and physics research, despite its remarkable property of being able to overcome Einstein's objections. Recently, it was reassessed from a physics standpoint in \cite{SilkeKlemm}, where a three-dimensional metric affine gravity theory was introduced, whose field equations give rise to a Schrödinger connection. Physically realistic cosmological models within a Weyl-Schrödinger type geometry were investigated in \cite{HarkoSchr}. Therein, the field equations were derived from a Lagrangian of the form $L=R+\frac{5}{24}Q_\rho Q^\rho+\frac{1} {6}\tilde{Q}_\rho\tilde{Q}^\rho+2T_\rho Q^\rho
+\zeta^{\rho\sigma}_{~~\alpha}T^\alpha_{~\rho\sigma}$, where $Q_\rho$, $\tilde{Q}_\mu$, and $T_\nu$ are the nonmetricity, torsion vectors, and $\zeta^{\rho\sigma}_{~~\alpha}$ are Lagrange multipliers, respectively. Note that hypermomentum was not considered in \cite{HarkoSchr}.

With this renewed interest in semi-symmetric connections, in this paper, we present a comprehensive investigation into the Schr\"odinger connections. This is of particular physical interest since a geometry equipped with a Schr\"odinger connection (dubbed as Schr\"odinger geometry) allows for fixed-length vectors. Our aim is to present results which would be of interest to both mathematicians and physicists. For the mathematical community, we introduce Schrödinger connections in a coordinate-free manner, providing a basis for further studies. For the physicists, we study the cosmological implications of two types of gravitational theories within Schr\"odinger geometry: one in the metric formalism and another in the Palatini formalism. We would like to note that the sections for physicists can be read almost independently from the formal mathematical parts.

This paper is organized as follows: section \ref{section2} provides a novel, geometric and coordinate-free exposition of Schrödinger connections. Within this section, in \ref{geometric} some physically remarkable properties of Schrödinger connections are presented. In subsection \ref{realizations} the focus shifts to providing examples of Schrödinger connections, culminating in theorem \ref{realizationtheorem}. Section \ref{section3} is dedicated to expressing Schrödinger connections in terms more familiar to the physics community. We start by showing that our general results from section \ref{section2} reproduce some formulae used in the physics literature, then move on to compute the curvature tensors of the Yano-Schrödinger geometry.  In subsection \ref{Einsteinmanifolds}, we study Einstein manifolds with vectorial, or semi-symmetric torsion, and the analogue of Schrödinger-Einstein manifolds. Through theorem \ref{Einsteinmanifoldtheorem}, we provide, for the first time in the literature, an explicit example of an Einstein manifold with torsion and a non-static metric. We conclude section \ref{section3} by presenting the kinematics of curves within the Schrödinger geometry, together with a novel Lagrangian formulation of the Sachs optical equation in modified gravity.

Section \ref{section4} is devoted to showing the physical applicability of Yano-Schrödinger connections. We propose extensions of general relativity by considering the Einstein field equations in Yano-Schrödinger geometry in two approaches: metric and Palatini. For the latter, we develop a novel type of hyperfluid, which sources the desired non-metricity. In both the cases, we consider the dynamics of a spatially flat Friedman-Lema\^itre-Robertson-Walker (FLRW) universe. We derive the generalized Friedmann equations and interpret the extra terms appearing in them as a form of geometric dark energy. We validate this interpretation through comparison with observational data. We reformulate the Friedmann equations in the redshift representation and examine a model where both ordinary matter and the additional terms are conserved. We compare our model's theoretical predictions with both observational data and the $\Lambda$CDM model.

A summary of results, outlook, and directions for further research are provided in section \ref{section5}. In appendix \ref{appendixA} and \ref{appendixB} some algebraic details regarding curvature tensors and the derivation of the Friedmann equations are presented. For the Palatini approach, the detailed calculations are given in \ref{appendixC} and \ref{appendixD}.

\section{Mathematical Foundations of Schrödinger Geometry}\label{section2}
Historically, Schrödinger's goal was to find the most general affine connection, $\nabla$, which preserves the length of vectors that are autoparallelly transported. In his book \cite{Schrodinger}, he concluded that this connection should be represented by the following connection coefficients
\begin{equation}
    \tensor{\Gamma}{^\lambda _\mu _\nu}=\tensor{\gamma}{^\lambda _\mu _\nu}+g^{\lambda \rho} S_{\rho \mu \nu},
\end{equation}
where $\tensor{\gamma}{^\lambda_\mu_\nu}$ denotes the coefficients of the Levi-Civita connection and $S_{\rho \mu \nu}$ is a tensor satisfying
\begin{equation}\label{Schrodingerconsideration}
    S_{\rho \mu \nu}=S_{\rho \nu \mu}, \; \; S_{(\rho \mu \nu)}=0.
\end{equation}
For contemporary mathematicians, reading Schrödinger's work might be challenging due to its reliance on local coordinates and outdated terminology. To bridge this gap, we reinterpret Schrödinger's concepts using a coordinate-free, geometric approach. Let's begin with the basic definitions.
\subsection{Geometric Definition and Physically Remarkable Properties}\label{geometric}
\begin{definition}
    Let $(M,g)$ be a semi-Riemannian manifold and the Levi-Civita connection be denoted with $\overset{\circ}{\nabla}$. An affine connection $\nabla$ on $M$ is called a \textbf{Schrödinger connection} if
    \begin{equation}\label{schrfree}
        \nabla_{X} Y=\overset{\circ}{\nabla}_{X} Y + U( -, X,Y) \; \; \forall X,Y \in \Gamma(TM),
    \end{equation}
    where $U$ is a $(1,2)$ tensor-field satisfying the following two conditions for all vector fields $X,Y$ and one-forms $\omega$:
    \begin{enumerate}
        \item[$(a)$] symmetry in last two entries
        \begin{equation}
            U(\omega,X,Y)=U(\omega,Y,X) ;
        \end{equation}
        \item[$(b)$] cyclicity
        \begin{equation}
        \begin{aligned}
U(\omega,X,Y)+U \left(\omega,Y,X\right) + U(X^{\flat},\omega^{\sharp},Y)
+U(X^{\flat},Y,\omega^{\sharp})+ U(Y^{\flat},\omega^{\sharp},X)+U(Y^{\flat},X,\omega^{\sharp})=0,
\end{aligned}
        \end{equation}
    where $X^{\flat}=g(X,-)$ and $\omega^{\sharp}=g^{-1}(\omega,-)$ are the musical isomorphisms.
    \end{enumerate}
\end{definition}
\begin{remark}
    The cyclicity  condition can be simplified by using symmetry. In particular, symmetry implies that
    \begin{equation}
        U(-,X,Y)=U(-,Y,X) \; \; \forall X,Y \in \Gamma(TM).
    \end{equation}
    Hence, cyclicity takes the form
    \begin{equation}
U(\omega,X,Y)+U(X^\flat,\omega^\sharp,Y)+U(Y^\flat,\omega^\sharp,X)=0.
    \end{equation}
\end{remark}
To relate with Schrödinger's work, we formally introduce the Schrödinger tensor, whose components coincide with \eqref{Schrodingerconsideration}.
\begin{definition}\label{definitionschrodingertensor}
    Let $(M,g)$ be a semi-Riemannian manifold with a given Schrödinger connection $(\nabla,U)$. The \textbf{Schrödinger tensor} $S$ associated to the Schrödinger connection $(\nabla,U)$ is defined as
    \begin{equation}\label{schrtensor}
    S:\Gamma(TM) \times \Gamma(TM) \times \Gamma(TM) \to C^{\infty}(M), \; \; S(A,X,Y):=U(A^{\flat},X,Y).
\end{equation}
\end{definition}
\begin{proposition}\label{Schrodingersymmetry}
    Let $\nabla$ be an affine connection on a semi-Riemannian manifold $(M,g)$ of the form
    \begin{equation}
        \nabla_{X} Y =\overset{\circ}{\nabla}_X Y+ U(-,X,Y),
    \end{equation}
    where $U(-,X,Y)$ is a $(1,2)$-tensor field and $X,Y$. Then the following are equivalent:
    \begin{enumerate}
        \item[$(i)$] The pair $(U,\nabla)$ is a Schrödinger connection;
        \item[$(ii)$] For all vector fields $A,X,Y$ the Schrödinger tensor $S$ associated to $U$ satisfies
        \begin{equation*}
            S(A,X,Y)=S(A,Y,X), \; \; S(A,X,Y)+S(Y,A,X)+S(X,Y,A)=0 .
        \end{equation*}
    \end{enumerate}
\end{proposition}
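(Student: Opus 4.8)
The plan is to translate conditions $(a)$ and $(b)$ in the definition of a Schrödinger connection directly into the two stated identities for $S$, exploiting that the musical isomorphism $\flat\colon \Gamma(TM)\to\Gamma(T^*M)$, $A\mapsto A^{\flat}=g(A,-)$, is a $C^{\infty}(M)$-linear bijection with inverse $\sharp$. Consequently every one-form $\omega$ can be written as $\omega=A^{\flat}$ with $A=\omega^{\sharp}$, and the substitution $\omega\leftrightarrow A$ is completely reversible; this reversibility is exactly what makes the implications go both ways. (Recall also that $U$ being a $(1,2)$-tensor field is part of the hypothesis, so nothing needs to be checked there; in passing one may note that $C^{\infty}(M)$-multilinearity of $S$ is automatic from that of $U$ together with $C^{\infty}(M)$-linearity of $\flat$, so $S$ is genuinely a tensor field.)

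First I would dispatch the symmetry part. Evaluating condition $(a)$ at $\omega=A^{\flat}$ gives $U(A^{\flat},X,Y)=U(A^{\flat},Y,X)$, i.e. $S(A,X,Y)=S(A,Y,X)$ for all $A,X,Y$; conversely, given the latter, for an arbitrary one-form $\omega$ set $A=\omega^{\sharp}$ and use $A^{\flat}=\omega$ to recover $(a)$. So $(a)$ holds if and only if $S$ is symmetric in its last two slots.

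Next, for the cyclicity I would invoke the Remark, which (using $(a)$) reduces $(b)$ to the three-term relation $U(\omega,X,Y)+U(X^{\flat},\omega^{\sharp},Y)+U(Y^{\flat},\omega^{\sharp},X)=0$. Setting $\omega=A^{\flat}$, so that $\omega^{\sharp}=A$, this becomes $U(A^{\flat},X,Y)+U(X^{\flat},A,Y)+U(Y^{\flat},A,X)=0$, that is $S(A,X,Y)+S(X,A,Y)+S(Y,A,X)=0$. Applying the symmetry $S(X,A,Y)=S(X,Y,A)$ already established, this is precisely the stated cyclic identity $S(A,X,Y)+S(Y,A,X)+S(X,Y,A)=0$, the summands being merely reordered. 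The converse again follows by running the substitution $A=\omega^{\sharp}$ backwards and using the Remark to pass from the reduced three-term relation back to the full condition $(b)$. Combining the two equivalences yields $(i)\Leftrightarrow(ii)$.

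As for difficulty, there is essentially no analytic content here: the only genuine care needed is the order of the arguments when moving between the "cyclic" form of $(b)$ and the reduced three-term form, and making sure the symmetry identity is applied in the correct slot. The main — and quite mild — obstacle is simply keeping the index gymnastics clean and confirming that the $\omega\leftrightarrow A^{\flat}$ substitution is used in a manifestly invertible way at each step.
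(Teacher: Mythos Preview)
Your argument is correct. Note, however, that the paper does not actually supply a proof of this proposition: it is stated and then the text moves on directly to discussing properties (S1) and (S2). Your approach --- translating conditions $(a)$ and $(b)$ into statements about $S$ via the bijection $\flat$, invoking the Remark to reduce $(b)$ to the three-term form, and using the already-established symmetry to match the slot ordering --- is the natural one and would serve well as the omitted proof.
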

Schrödinger connections have two remarkable properties:
\begin{enumerate}
    \item[(S$1$)] They are torsion-free;
    \item[(S$2$)] The length of autoparallelly transported vectors does not change.
\end{enumerate}
We begin by establishing property (S$2$), as it is this characteristic that renders Schrödinger connections physically significant, making them overcome Einstein's concerns.  First, we prove a more general proposition, from which (S$2$) can be directly obtained.
\begin{proposition}
   Let $\nabla$ be an affine connection on a semi-Riemannian manifold $(M,g)$  of the form
    \begin{equation}\label{definingequation}
        \nabla_{X} Y= \overset{\circ}{\nabla}_X Y + U(-,X,Y)
    \end{equation}
     for vector fields $X,Y$ and a $(1,2)$-tensor field $U$. For all vector fields $X$, which satisfy $\nabla_{X}X=0$, the following are equivalent:
     \begin{enumerate}
         \item[$(i)$] $\nabla_{X}(g(X,X))=0$,
         \item[$(ii)$] $U\left( X^{\flat},X,X \right)=0$.
     \end{enumerate}
\end{proposition}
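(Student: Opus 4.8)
The plan is to reduce everything to the metric compatibility of the Levi-Civita connection $\overset{\circ}{\nabla}$ together with the defining relation \eqref{definingequation}, being careful about the convention that a connection acting on a scalar is just directional differentiation. Since $g(X,X)$ is a smooth function, both $\nabla_X(g(X,X))$ and $\overset{\circ}{\nabla}_X(g(X,X))$ mean $X\bigl(g(X,X)\bigr)$; in particular the left-hand side of $(i)$ carries no correction term from $U$, and I may freely compute with $\overset{\circ}{\nabla}$.

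The computation itself is a short chain of equalities. First, using $\overset{\circ}{\nabla} g = 0$,
\[
X\bigl(g(X,X)\bigr)=\overset{\circ}{\nabla}_X\bigl(g(X,X)\bigr)=2\,g\bigl(\overset{\circ}{\nabla}_X X,\,X\bigr).
\]
Next, evaluating \eqref{definingequation} at the pair $(X,X)$ and invoking the hypothesis $\nabla_X X = 0$ gives $\overset{\circ}{\nabla}_X X = -\,U(-,X,X)$, where $U(-,X,X)\in\Gamma(TM)$ is the vector field obtained by leaving the covariant slot of $U$ free. Substituting yields
\[
X\bigl(g(X,X)\bigr)=-2\,g\bigl(U(-,X,X),\,X\bigr).
\]
Finally I would identify $g\bigl(U(-,X,X),X\bigr)$ with $U(X^\flat,X,X)$: by definition of the musical isomorphism $g(V,X)=X^\flat(V)$ for any vector field $V$, and evaluating the one-form-valued object $U(-,X,X)$ on $X^\flat$ is precisely feeding $X^\flat$ into the first slot, so $g\bigl(U(-,X,X),X\bigr)=X^\flat\bigl(U(-,X,X)\bigr)=U(X^\flat,X,X)$. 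Hence $X\bigl(g(X,X)\bigr)=-2\,U(X^\flat,X,X)$, from which the equivalence $(i)\Leftrightarrow(ii)$ is immediate.

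There is essentially no hard step here: the entire content is the compatibility of $\overset{\circ}{\nabla}$ with $g$ and careful bookkeeping of the musical isomorphisms and of the covariant slot of $U$. The only place that requires a moment's attention is making sure the left-hand side of $(i)$ is genuinely $X\bigl(g(X,X)\bigr)$ with no $U$-contribution; once that is fixed, the proposition follows from the two-line calculation above, and property (S$2$) is obtained as the special case in which $X$ is the tangent field of an autoparallel together with conditions $(a)$--$(b)$ on $U$.
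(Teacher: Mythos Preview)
Your proof is correct and follows essentially the same approach as the paper: use metric compatibility of $\overset{\circ}{\nabla}$ to write $\nabla_X(g(X,X))=2g(\overset{\circ}{\nabla}_X X,X)$, substitute $\overset{\circ}{\nabla}_X X=-U(-,X,X)$ from the autoparallel condition, and identify the result with $-2U(X^\flat,X,X)$. If anything, you are more explicit than the paper about why the scalar $g(X,X)$ receives no $U$-correction and about the role of the musical isomorphism in the final identification.
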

\begin{proof}
     Fix an arbitrary vector field $A \in \Gamma(TM)$ and suppose it is the tangent to a $\nabla$-autoparallel curve, so that $\nabla_{A}A=0$. Observe that substituting $X=Y=A$ in equation \eqref{definingequation}, we get
    \begin{equation}\label{usethis}
    \overset{\circ}{\nabla}_{A} A + U(-,A,A)=0.
    \end{equation}
    Since $\overset{\circ}{\nabla}$ is the Levi-Civita connection, we have $\overset{\circ}{\nabla}g=0$, so
    \begin{equation*}
        \begin{aligned}
            \nabla_{A}(g(A,A))&=\overset{\circ}{\nabla}_{A}(g(A,A)) =\overset{\circ}{\nabla}_{A} g(A,A)+ g \left( \overset{\circ}{\nabla}_{A} A,A\right) + g \left(A,\overset{\circ}{\nabla}_{A} A \right)\\
            &=2g \left( \overset{\circ}{\nabla}_{A} A,A \right)
            =-2g(U(-,A,A),A) =-2U\left( A^{\flat},A,A \right),
        \end{aligned}
    \end{equation*}
    where in the last step we used equation \eqref{usethis}. Hence
    \begin{equation}
        \nabla_{A}(g(A,A))=-2U \left(  A^{\flat},A,A\right).
    \end{equation}
    Therefore the condition that the parallel transport of $\nabla$ preserves lengths is equivalent to
    \begin{equation*}
       U \left( A^{\flat},A,A \right)=0, \; \; \text{as desired}.
    \end{equation*}
\end{proof}
\begin{corollary}
    Schrödinger connections preserve the length of autoparallelly transported vectors.
\end{corollary}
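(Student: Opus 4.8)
The plan is to combine the preceding Proposition with the algebraic constraints built into the definition of a Schrödinger connection. By that Proposition, along any $\nabla$-autoparallel curve $\gamma$ --- one whose tangent field $X=\dot\gamma$ satisfies $\nabla_{X}X=0$ --- the length $g(X,X)$ is constant if and only if $U(X^{\flat},X,X)=0$. Since $U$ is a tensor field and $\flat$ is fibrewise, the scalar $U(X^{\flat},X,X)$ at a point $p\in\gamma$ depends only on the value $\dot\gamma(p)\in T_pM$; hence it suffices to establish the pointwise algebraic identity $U(\xi^{\flat},\xi,\xi)=0$ for every $\xi\in T_pM$, equivalently $U(X^{\flat},X,X)=0$ for every vector field $X$, with no reference to autoparallelism.

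First I would invoke the simplified form of the cyclicity condition recorded in the Remark: for all vector fields $X,Y$ and one-forms $\omega$,
\[
U(\omega,X,Y)+U(X^{\flat},\omega^{\sharp},Y)+U(Y^{\flat},\omega^{\sharp},X)=0.
\]
Specializing to $\omega=X^{\flat}$ (so that $\omega^{\sharp}=X$) and $Y=X$, each of the three summands collapses to $U(X^{\flat},X,X)$, giving $3\,U(X^{\flat},X,X)=0$, hence $U(X^{\flat},X,X)=0$. Equivalently, one may use Proposition \ref{Schrodingersymmetry}: with $S(A,X,Y)=U(A^{\flat},X,Y)$, the cyclic identity $S(A,X,Y)+S(Y,A,X)+S(X,Y,A)=0$ evaluated at $A=X=Y$ yields $3\,S(X,X,X)=0$, i.e.\ $S(X,X,X)=U(X^{\flat},X,X)=0$.

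Finally I would feed this back into the Proposition: for any autoparallel $X$ we now have $U(X^{\flat},X,X)=0$, so condition $(ii)$ there holds and therefore $(i)$ holds, i.e.\ $\nabla_{X}\big(g(X,X)\big)=0$. Thus the length of the tangent to any autoparallel curve is preserved, which is the assertion. There is essentially no serious obstacle here; the only points requiring a little care are the reduction from the statement along a curve to the pointwise tensorial identity (so that one is entitled to substitute a curve's tangent into $U$), and the observation that under the chosen substitution the three terms of the reduced cyclicity relation genuinely coincide, rather than merely being related to one another by the symmetry in the last two slots.
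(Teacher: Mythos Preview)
Your proof is correct and follows essentially the same route as the paper: both specialize the simplified cyclicity relation to $\omega=A^{\flat}$, $X=Y=A$ to obtain $3\,U(A^{\flat},A,A)=0$, and then invoke the preceding Proposition. Your additional remarks on tensoriality and the alternative phrasing via the Schr\"odinger tensor $S$ are sound but not needed for the argument.
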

\begin{proof}
    By definition, the $(1,2)$-tensor field $U$ of a Schrödinger connection satisfies
   \begin{equation}
       U(\omega,X,Y)+ U \left(X^{\flat},\omega^{\sharp},Y \right) +U\left(Y^\flat, \omega^{\sharp},X \right)=0
    \end{equation}
    for all $X,Y \in \Gamma(TM)$ and $\omega \in \Omega^{1}(M)$. Choose $X=Y=A, \omega=A^{\flat}$, so
    \begin{equation*}
        U\left(A^{\flat},A,A \right)+ U \left(A^{\flat},A,A \right) +U( A^{\flat},A,A)=0 \iff   U\left(A^{\flat},A,A \right)=0.
    \end{equation*}
\end{proof}
This remarkable property is illustrated on Figure \ref{remarkable}.
\begin{figure*}[htbp]
\centering
\includegraphics[width=0.490\linewidth]{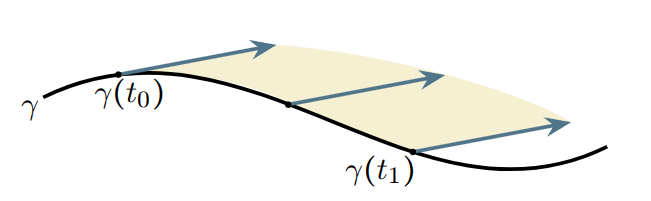} %
\includegraphics[width=0.490\linewidth]{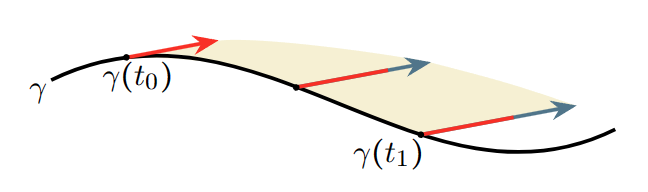}
\caption{ The left panel illustrates autoparallel transport of a Schrödinger connection, while the right panel depicts the autoparallel transport of a general affine connection.}
\label{remarkable}
\end{figure*}

In the following, we prove (S$1$).
\begin{proposition}
    Schrödinger connections are torsion-free.
\end{proposition}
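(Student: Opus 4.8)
The plan is to compute the torsion tensor $T(X,Y)=\nabla_X Y-\nabla_Y X-[X,Y]$ directly from the defining relation \eqref{schrfree} and reduce it to a statement about the symmetry of $U$. First I would substitute $\nabla_X Y=\overset{\circ}{\nabla}_X Y+U(-,X,Y)$ and $\nabla_Y X=\overset{\circ}{\nabla}_Y X+U(-,Y,X)$ into the definition of $T$, obtaining
\begin{equation*}
T(X,Y)=\bigl(\overset{\circ}{\nabla}_X Y-\overset{\circ}{\nabla}_Y X-[X,Y]\bigr)+U(-,X,Y)-U(-,Y,X).
\end{equation*}
The first parenthesis is exactly the torsion of the Levi-Civita connection $\overset{\circ}{\nabla}$, which vanishes, so $T(X,Y)=U(-,X,Y)-U(-,Y,X)$.

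It then remains to see that the vector field $U(-,X,Y)$ is symmetric under the exchange of $X$ and $Y$. This follows from condition $(a)$: for every one-form $\omega$ we have $U(\omega,X,Y)=U(\omega,Y,X)$, i.e. the vector fields $U(-,X,Y)$ and $U(-,Y,X)$ have the same pairing with every element of $\Omega^{1}(M)$; since a vector field is determined by its action on all one-forms (non-degeneracy of the natural pairing $TM\times T^{*}M\to C^{\infty}(M)$), the two coincide. Hence $T(X,Y)=0$ for all $X,Y\in\Gamma(TM)$, as claimed.

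There is essentially no obstacle here: the cyclicity condition $(b)$ plays no role, and the only point worth spelling out explicitly is the duality argument that turns the symmetry of the scalar $U(\omega,X,Y)$ in its last two slots into the symmetry of the vector field $U(-,X,Y)$. One could alternatively run the same computation through the Schrödinger tensor $S$ of Definition \ref{definitionschrodingertensor}, invoking the symmetry $S(A,X,Y)=S(A,Y,X)$ from Proposition \ref{Schrodingersymmetry}, but the direct computation above is the most transparent.
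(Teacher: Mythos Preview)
Your proof is correct and follows essentially the same approach as the paper: substitute \eqref{schrfree} into the torsion, use that the Levi-Civita part is torsion-free, and conclude from the symmetry condition $(a)$ that the remaining $U$-terms cancel. The only cosmetic difference is that the paper phrases the torsion as $T(\omega,X,Y)$ with the one-form slot explicit from the outset, so it never needs the duality argument you spell out; your version is equally valid and arguably more careful on that point.
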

\begin{proof}
    The torsion tensor of an affine connection $\nabla$ is defined as
    \begin{equation*}
        T(\omega,X,Y)=\omega(\nabla_{X} Y- \nabla_{Y}X - [X,Y]), \; \; \forall \omega \in \Omega^1(M), X,Y \in \Gamma(TM).
    \end{equation*}
    For  Schrödinger connections, one obtains
    \begin{equation*}
        T(\omega,X,Y)=\omega \left( \overset{\circ}{\nabla}_{X}Y +U(-,X,Y)- \overset{\circ}{\nabla}_{Y}X - U(-,Y,X)-[X,Y] \right),
    \end{equation*}
    which can be equivalently rewritten as
    \begin{equation}
        T(\omega,X,Y)=U(\omega,X,Y)-U(\omega,Y,X),
    \end{equation}
   since the Levi-Civita connection is torsion-free. Symmetry of  $U$ implies \begin{equation}
       T(\omega,X,Y)=0 \; \; \forall \omega \in \Omega^{1}(M), X,Y \in \Gamma(TM),
   \end{equation}
   thus concluding the proof.
\end{proof}
\begin{remark}
   In the modified gravity community, an affine connection is usually decomposed into two parts: one containing torsion, and one containing non-metricity. From the above proposition, we can see that Schrödinger connections, from the modified gravity point of view, contain only non-metricity. In this sense, the $U$-tensor (up to sign conventions) can be viewed as a special type of non-metricity, which admits fixed length vectors.
\end{remark}

\subsection{Realizations Using Torsion and Non-Metricity}\label{realizations}
After exploring the fundamental properties of Schrödinger connections, we move on to present particular instances of such connections. In \cite{SilkeKlemm}, it was shown that the Schr\"odinger connection can be realized through either torsion or non-metricity. Here, we clarify some aspects of such a realization, emphasizing that the said torsion must be of a \textit{distinct} affine connection. This is because the Schr\"odinger connection itself is, by construction, torsion-free as illustrated in the preceding proposition. The findings of this section can be summarized in the following theorem.

\begin{theorem}[Realization of Schrödinger Connections] \label{realizationtheorem} Let $\nabla$ be an affine connection  on a semi-Riemannian manifold $(M,g)$ of the form
\begin{equation}
    \nabla_{X}Y=\overset{\circ}{\nabla}_{X} Y + U(-,X,Y),
\end{equation}
    where $U(-,X,Y)$ is a $(1,2)$-tensor field. Fix a different affine connection $\overset{A}{\nabla}$ with torsion tensor $\overset{A}{T}$,non-metricity $\overset{A}{Q}$ satisfying
    \begin{equation}
        \overset{A}{Q}(X,Y,Z)+\overset{A}{Q}(Z,X,Y)+\overset{A}{Q}(Y,Z,X)=0
    \end{equation}
    and let $\pi,W$ be two one-forms  on $M$. If $U(\omega,X,Y)$ is given by any of the following list
    \begin{enumerate}
        \item[$(i)$] $ U(\omega,X,Y)=\frac{1}{2} \left( \overset{A}{T}\left(X^{\flat},Y,\omega^{\sharp} \right)+ \overset{A}{T} \left(Y^{\flat},X,\omega^{\sharp} \right) \right)$,
        \item[$(ii)$] $ U(\omega,X,Y)=\frac{1}{2} \left(X^{\flat} \left( \pi \left( \omega^{\sharp} \right) Y-\pi(Y) \omega^{\sharp} \right) + Y^{\flat} \left(\pi \left( \omega^{\sharp}\right)Y -\pi(X) \omega^{\sharp}\right)\right)$,
        \item[$(iii)$] $U(\omega,X,Y)=-\overset{A}{Q} \left(\omega^{\sharp},X,Y \right)$,
        \item[$(iv)$] $U(\omega,X,Y)=\frac{1}{2} \left( \overset{A}{T}\left(X^{\flat},Y,\omega^{\sharp} \right)+ \overset{A}{T} \left(Y^{\flat},X,\omega^{\sharp} \right) \right)- \overset{A}{Q}\left( \omega^{\sharp},X,Y \right)$,
        \item[$(v)$] $U(\omega,X,Y)=\pi\left( \omega^{\sharp} \right) W(X) W(Y) - \frac{1}{2} \left( W \left( \omega^{\sharp} \right) W(X) \pi(Y) + W\left( \omega^{\sharp} \right) W(Y) \pi(X) \right)$
    \end{enumerate}
    then $\nabla$ is a Schrödinger connection.
\end{theorem}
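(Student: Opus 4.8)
The plan is to check, for each of the prescriptions (i)--(v), the two defining conditions of a Schrödinger connection applied to the given $U$: symmetry in the last two arguments, $U(\omega,X,Y)=U(\omega,Y,X)$, and the cyclicity relation, which by the Remark following the definition may be used in the reduced form $U(\omega,X,Y)+U(X^{\flat},\omega^{\sharp},Y)+U(Y^{\flat},\omega^{\sharp},X)=0$; equivalently, by Proposition \ref{Schrodingersymmetry}, that the associated Schrödinger tensor $S(A,X,Y)=U(A^{\flat},X,Y)$ is symmetric in $(X,Y)$ and cyclic in $(A,X,Y)$. Both conditions are linear in $U$, so case (iv) is immediate once (i) and (iii) are done; and case (ii) is the specialization of (i) to the vectorial (semi-symmetric) torsion $\overset{A}{T}(\omega,X,Y)=\pi(Y)\,\omega(X)-\pi(X)\,\omega(Y)$, which is manifestly skew in its two vector slots. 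Thus only (i), (iii) and (v) require genuine work.

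For (i), symmetry in $(X,Y)$ is built into $\tfrac12\bigl(\overset{A}{T}(X^{\flat},Y,\omega^{\sharp})+\overset{A}{T}(Y^{\flat},X,\omega^{\sharp})\bigr)$. For cyclicity I would expand the three terms of the reduced relation, pushing the musical isomorphisms through with $(X^{\flat})^{\sharp}=X$, $(Y^{\flat})^{\sharp}=Y$, $(\omega^{\sharp})^{\flat}=\omega$; this yields six copies of $\overset{A}{T}$ which cancel in three pairs once one uses that torsion is antisymmetric in its two vector entries, and no property of $g$ beyond the musical maps is needed. For (iii), $U(\omega,X,Y)=-\overset{A}{Q}(\omega^{\sharp},X,Y)$ is symmetric in $(X,Y)$ since the non-metricity tensor inherits the symmetry of $g$ in its last two slots; and the reduced cyclicity relation collapses, after rewriting one term by that same symmetry, to minus the cyclic sum $\overset{A}{Q}(X,Y,Z)+\overset{A}{Q}(Z,X,Y)+\overset{A}{Q}(Y,Z,X)$ with $Z=\omega^{\sharp}$, which vanishes by the hypothesis imposed on $\overset{A}{Q}$. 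Case (iv) is then the sum $(i)+(iii)$.

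For (v) I would simply compute. Symmetry in $(X,Y)$ is visible term by term. For cyclicity, writing $a=\pi(\omega^{\sharp})$, $b=W(\omega^{\sharp})$, $p=\pi(X)$, $q=\pi(Y)$, $u=W(X)$, $v=W(Y)$, the three terms become $auv-\tfrac12 b(uq+vp)$, $pbv-\tfrac12 u(bq+va)$ and $qbu-\tfrac12 v(bp+ua)$, and their sum vanishes because the nine monomials cancel within the three families governed by $auv$, $buq$ and $bvp$. One could instead recognize the $U$ of (v) as $-\overset{A}{Q}$ for a non-metricity satisfying the required cyclic identity and reduce to (iii), but the direct check is shorter.

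There is no real obstacle here: the argument is a chain of tensorial bookkeeping. The only step that demands attention is the correct tracking of musical isomorphisms in the middle and last terms of the reduced cyclicity relation, where the first slot of $U$ is $X^{\flat}$ or $Y^{\flat}$ while the second is $\omega^{\sharp}$, so that the identifications $(X^{\flat})^{\sharp}=X$ and $(\omega^{\sharp})^{\flat}=\omega$ are applied in the right places; and, conceptually, recognizing in case (iii) that the cyclic symmetry assumed for $\overset{A}{Q}$ is exactly what the cyclicity axiom demands, which is also why that hypothesis is imposed in the statement.
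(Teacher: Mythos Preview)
Your proposal is correct and follows essentially the same route as the paper: the paper likewise proves (i), (iii), and (v) by direct verification of symmetry and of the reduced cyclicity identity (with cancellations in (i) and (v) coming from antisymmetry of torsion and term-by-term matching, and in (iii) from the assumed cyclic identity for $\overset{A}{Q}$), obtains (ii) as the semi-symmetric specialization of (i), and leaves (iv) implicit as the linear combination of (i) and (iii). Your abbreviation scheme in (v) is just a repackaging of the paper's color-coded cancellation.
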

\begin{remark}
    The aim of this theorem is to clarify a  claim made in \cite{SilkeKlemm}, where it is stated that a Schrödinger connection can be written in terms of torsion, nonmetricity or both. This statement is misleading since by its very geometric nature, torsion has no effect on lengths. Therefore whether a connection has torsion or not is totally irrelevant for classifying it as a length-preserving Schroedinger connection.  Strictly speaking, one has to consider the torsion of a \underline{different affine connection}, as per se by definition the Schrödinger tensor must be symmetric in its latter indices. This implies, as we have shown above, that it must be torsion-free.
\end{remark}
We will dedicate the rest of the section to prove each item in the above theorem in detail.
\begin{proposition}
    Let $(M,g)$ be a semi-Riemannian manifold with an affine connection $\overset{A}{\nabla}$, whose torsion $\overset{A}{T}$ is non-vanishing. Then, the pair $(\nabla,U)$ defined by
    \begin{equation}\label{condition1tobeshown}
        U(\omega,X,Y)=\frac{1}{2} \left( \overset{A}{T}\left(X^{\flat},Y,\omega^{\sharp} \right)+ \overset{A}{T} \left(Y^{\flat},X,\omega^{\sharp} \right) \right)
    \end{equation}
    is a Schrödinger connection.
\end{proposition}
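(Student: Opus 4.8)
The plan is to verify the two defining conditions of a Schrödinger connection directly from Proposition \ref{Schrodingersymmetry}, working with the associated Schrödinger tensor $S(A,X,Y) = U(A^\flat,X,Y)$ rather than with $U$ itself, since the index structure is cleaner. First I would rewrite the given formula: since $\overset{A}{T}$ is a $(1,2)$-tensor which is antisymmetric in its last two slots, lowering the upper index with $g$ gives a $(0,3)$-tensor $\overset{A}{T}_{\rho\mu\nu}$ (here I write $\overset{A}{T}(X^\flat,Y,Z)$ for the component pairing) that is antisymmetric in $\mu \leftrightarrow \nu$. The definition \eqref{condition1tobeshown} then reads, in this notation, $S(A,X,Y) = \tfrac12\big(\overset{A}{T}(X^\flat,Y,A) + \overset{A}{T}(Y^\flat,X,A)\big)$, which makes the symmetry $S(A,X,Y) = S(A,Y,X)$ manifest — that is condition $(a)$, and it is immediate.

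The substantive step is the cyclicity condition $S(A,X,Y) + S(Y,A,X) + S(X,Y,A) = 0$. Expanding each term via the definition, the left-hand side becomes a sum of six terms of the form $\tfrac12 \overset{A}{T}(\cdot^\flat,\cdot,\cdot)$. The key is to group these using the antisymmetry of $\overset{A}{T}$ in its last two arguments: three pairs will combine, and I expect each pair to collapse to an expression involving $\overset{A}{T}(\cdot,\cdot,\cdot) + \overset{A}{T}(\cdot,\cdot,\cdot)$ cyclically in the three vector fields $A,X,Y$. The claim is that the total vanishes purely by the antisymmetry of $\overset{A}{T}$ in its last two slots — no further hypothesis on $\overset{A}{T}$ (such as a Bianchi-type or cyclicity condition) is needed, because each of the three "diagonal" slots appears once as a middle argument and once as a last argument, and swapping them flips the sign. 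Concretely, after expansion one should be able to pair the term with $X^\flat$ coming from $S(A,X,Y)$ against the term with $X^\flat$ coming from $S(X,Y,A)$ (bearing in mind the second summand of the latter has $X$ in a lowered slot), and so on cyclically; each such pair is $\tfrac12\big(\overset{A}{T}(Z^\flat,P,Q) + \overset{A}{T}(Z^\flat,Q,P)\big) = 0$.

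The main obstacle — really the only place care is needed — is bookkeeping: making sure the six terms are paired correctly and that the flat/sharp musical isomorphisms are tracked so that the pairing genuinely uses $g$-compatibility of $\flat$ and $\sharp$ (i.e., $g(X^\flat{}^\sharp, \cdot)$ collapses correctly) rather than hiding an error. A clean way to avoid mistakes is to pass to a local $g$-orthonormal (or coordinate) frame, write everything in components $\overset{A}{T}_{\rho\mu\nu} = -\overset{A}{T}_{\rho\nu\mu}$, and check the identity $S_{\rho\mu\nu} + S_{\nu\rho\mu} + S_{\mu\nu\rho} = 0$ with $S_{\rho\mu\nu} = \tfrac12(\overset{A}{T}_{\mu\nu\rho} + \overset{A}{T}_{\nu\mu\rho})$ by a direct six-term cancellation; then translate back to the coordinate-free statement. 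Once cyclicity is established, Proposition \ref{Schrodingersymmetry} gives immediately that $(\nabla,U)$ is a Schrödinger connection, completing the proof.
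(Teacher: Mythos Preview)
Your proposal is correct and follows essentially the same approach as the paper: both verify symmetry by inspection and establish cyclicity by expanding into six torsion terms and pairing them off via the antisymmetry of $\overset{A}{T}$ in its last two arguments. The only cosmetic difference is that you route through the Schr\"odinger tensor $S$ and Proposition~\ref{Schrodingersymmetry} (and suggest a coordinate sanity check), whereas the paper works directly with $U(\omega,X,Y)$ and the cyclicity condition in its $U$-form; the pairing mechanism is identical.
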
   
\begin{proof}
    To prove $(\nabla,U)$ is a Schrödinger connection, we have to verify both symmetry and cyclicity. We start with symmetry and compute
    \begin{equation}\label{condition1shown}
        U(\omega,Y,X)=\frac{1}{2} \left( \overset{A}{T} \left( Y^{\flat},X,\omega^{\sharp} \right)+ \overset{A}{T} \left(X^{\flat},Y,\omega^{\sharp} \right) \right).
    \end{equation}
    It is evident that the right hand sides of equations \eqref{condition1tobeshown} and \eqref{condition1shown} agree. Consequently, it follows that
    \begin{equation}
        U(\omega,Y,X)=U(\omega,X,Y),
    \end{equation}
    thereby establishing symmetry. For cyclicity, we have
   \begin{equation}\label{addthis1}
        U(\omega,X,Y)=\frac{1}{2} \left( \textcolor{red}{\overset{A}{T}\left(X^{\flat},Y,\omega^{\sharp} \right)}+ \textcolor{blue}{\overset{A}{T} \left(Y^{\flat},X,\omega^{\sharp} \right)} \right)
    \end{equation}
    \begin{equation}\label{addthis2}
        U\left(X^{\flat},\omega^{\sharp},Y\right)=\frac{1}{2} \left( \textcolor{PineGreen}{\overset{A}{T}\left(\omega,Y,X\right)} + \textcolor{blue}{\overset{A}{T}\left(Y^{\flat},\omega^{\sharp},X \right)} \right)
    \end{equation}
    \begin{equation}\label{addthis3}
        U\left(Y^{\flat},\omega^{\sharp},X \right)=\frac{1}{2} \left( \textcolor{PineGreen}{\overset{A}{T} \left(\omega,X,Y \right)} + \textcolor{red}{\overset{A}{T} \left(X^{\flat},\omega^{\sharp},Y \right)} \right).
    \end{equation}
   After summing equations \eqref{addthis1}, \eqref{addthis2}, \eqref{addthis3}, the right-hand side cancels out entirely, as coded in colors, thanks to the symmetries of the torsion tensor. Hence, the desired equality
    \begin{equation}
         U(\omega,X,Y)+ U\left(X^{\flat},\omega^{\sharp},Y\right)+ U\left(Y^{\flat},\omega^{\sharp},X\right)=0
    \end{equation}
    is obtained.
\end{proof}
\begin{corollary}
    The Schrödinger tensor of a Schrödinger connection $(\nabla,U)$ implemented by torsion $\overset{A}{T}$ of an affine connection is given by
    \begin{equation} \label{Schrodingertorsion}
    \begin{aligned}
       & S:\Gamma(TM) \times \Gamma(TM) \times \Gamma(TM) \to C^{\infty}(M)\\
        & S(A,X,Y)=\frac{1}{2} \left( \overset{A}{T}
         \left(Y^{\flat},X, A\right) + \overset{A}{T} \left( X^{\flat},Y,A \right)\right)
    \end{aligned}
    \end{equation}
\end{corollary}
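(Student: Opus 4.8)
The proof will be a direct substitution, so the plan is simply to compose the two definitions already in play. Recall from Definition~\ref{definitionschrodingertensor} that the Schrödinger tensor associated to $(\nabla,U)$ is $S(A,X,Y) := U(A^{\flat},X,Y)$, and that in the case under consideration $U$ is the $(1,2)$-tensor field
\begin{equation*}
U(\omega,X,Y)=\frac{1}{2}\left(\overset{A}{T}\!\left(X^{\flat},Y,\omega^{\sharp}\right)+\overset{A}{T}\!\left(Y^{\flat},X,\omega^{\sharp}\right)\right)
\end{equation*}
built from the torsion $\overset{A}{T}$ of the auxiliary connection $\overset{A}{\nabla}$, which was shown in the preceding proposition to be a Schrödinger connection.

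The one substantive ingredient is that the musical isomorphisms are mutually inverse, so that evaluating $\omega = A^{\flat}$ gives $\omega^{\sharp} = (A^{\flat})^{\sharp} = A$. Plugging $\omega = A^{\flat}$ into the formula for $U$ then yields
\begin{equation*}
S(A,X,Y) = U(A^{\flat},X,Y) = \frac{1}{2}\left(\overset{A}{T}\!\left(X^{\flat},Y,A\right)+\overset{A}{T}\!\left(Y^{\flat},X,A\right)\right),
\end{equation*}
which is exactly the asserted identity once the two summands are written in the stated order (the value is a real-valued function, so the order of the summands is immaterial). That $S$ is a well-defined map into $C^{\infty}(M)$ and is $C^{\infty}(M)$-multilinear follows at once from the corresponding property of $U$, so nothing beyond this remains to check.

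There is no genuine obstacle; the only place care is needed is the slot bookkeeping of $\overset{A}{T}$ --- its first argument is a one-form and its last two are vector fields --- so that after the substitution the first slot carries $X^{\flat}$ (respectively $Y^{\flat}$) while the last slot carries the vector $A$ itself rather than $A^{\flat}$. Keeping the flats and sharps in their proper places is the entire content of the argument.
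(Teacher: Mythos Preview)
Your proof is correct and is exactly the intended argument: the paper states this as an immediate corollary without proof, since it follows by substituting $\omega=A^{\flat}$ into the formula for $U$ from the preceding proposition and using $(A^{\flat})^{\sharp}=A$. There is nothing to add.
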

In the introduction we mentioned that Kentaro Yano explored the mathematical properties of an interesting connection, with torsion of the semi-symmetric type. In the following, recalling this notion, we provide an example of a Schrödinger connection, which is realized this way.
\begin{definition}
    On a semi-Riemannian manifold $(M,g)$ a connection $\nabla$ is called \textbf{semi-symmetric} if there exists a one-form $\pi$ such that
    \begin{equation*}
        T(\omega,X,Y)=\omega(\pi(Y)X-\pi(X)Y) \; \; \forall X,Y \in \Gamma(TM), \omega \in \Omega^{1}(M).
    \end{equation*}
\end{definition}
\begin{corollary}
    Given a one-form $\pi$, the pair $(\nabla,U)$ defined by
    \begin{equation*}
        U(\omega,X,Y)=\frac{1}{2} \left(X^{\flat} \left( \pi \left( \omega^{\sharp} \right) Y-\pi(Y) \omega^{\sharp} \right) + Y^{\flat} \left(\pi \left( \omega^{\sharp}\right)Y -\pi(X) \omega^{\sharp}\right)\right)
    \end{equation*}
    is a Schrödinger connection, called the \textbf{Yano-Schrödinger} connection.
\end{corollary}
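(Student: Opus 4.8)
The plan is to obtain this Corollary as an immediate consequence of the preceding Proposition on torsion-realizations, i.e.\ item $(i)$ of Theorem \ref{realizationtheorem}. The one thing that must be checked first is that the semi-symmetric expression $\overset{A}{T}(\omega,X,Y)=\omega(\pi(Y)X-\pi(X)Y)$ really is the torsion tensor of an honest affine connection on $(M,g)$, so that the earlier Proposition applies verbatim. This is easy: for any one-form $\pi$ the connection $\overset{A}{\nabla}_X Y:=\overset{\circ}{\nabla}_X Y+\pi(Y)X-g(X,Y)\pi^{\sharp}$ (Yano's classical semi-symmetric metric connection) has exactly this torsion, since the $g(X,Y)\pi^{\sharp}$ terms cancel upon antisymmetrization in $X,Y$ and the remaining part gives $\pi(Y)X-\pi(X)Y$. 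More abstractly, \emph{any} $(1,2)$-tensor antisymmetric in its two vector slots is the torsion of $\overset{\circ}{\nabla}$ plus one half of that tensor, so such a connection exists automatically.

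Next I would substitute this $\overset{A}{T}$ into the formula $U(\omega,X,Y)=\tfrac12\bigl(\overset{A}{T}(X^{\flat},Y,\omega^{\sharp})+\overset{A}{T}(Y^{\flat},X,\omega^{\sharp})\bigr)$ supplied by item $(i)$. Direct evaluation gives $\overset{A}{T}(X^{\flat},Y,\omega^{\sharp})=X^{\flat}\bigl(\pi(\omega^{\sharp})Y-\pi(Y)\omega^{\sharp}\bigr)$ and $\overset{A}{T}(Y^{\flat},X,\omega^{\sharp})=Y^{\flat}\bigl(\pi(\omega^{\sharp})X-\pi(X)\omega^{\sharp}\bigr)$, which is precisely the displayed formula for the Yano-Schrödinger connection (the second summand being $\pi(\omega^{\sharp})X$). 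Since the Proposition already guarantees that any $U$ of the form in item $(i)$ defines a Schrödinger connection, we are done. It is worth also recording the closed form obtained via the musical identities $X^{\flat}(Y)=g(X,Y)$ and $X^{\flat}(\omega^{\sharp})=\omega(X)$, namely $U(\omega,X,Y)=\pi(\omega^{\sharp})\,g(X,Y)-\tfrac12\bigl(\pi(X)\,\omega(Y)+\pi(Y)\,\omega(X)\bigr)$, which makes both defining properties transparent.

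As a self-contained cross-check I would verify symmetry and cyclicity directly from this closed form. Symmetry in $X,Y$ is manifest, since $g(X,Y)$ and $\pi(X)\omega(Y)+\pi(Y)\omega(X)$ are both symmetric. For cyclicity one evaluates $U(X^{\flat},\omega^{\sharp},Y)$ and $U(Y^{\flat},\omega^{\sharp},X)$ from the same formula and adds the three terms: the coefficient of $g(X,Y)$ is $\pi(\omega^{\sharp})-\tfrac12\pi(\omega^{\sharp})-\tfrac12\pi(\omega^{\sharp})=0$, and the coefficients of $\pi(X)\omega(Y)$ and of $\pi(Y)\omega(X)$ each reduce to $-\tfrac12+1-\tfrac12=0$, so $U(\omega,X,Y)+U(X^{\flat},\omega^{\sharp},Y)+U(Y^{\flat},\omega^{\sharp},X)=0$ as required.

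There is no real obstacle here---the statement is a corollary by design. The only place that calls for a moment of attention is the first step: one must not simply write down the semi-symmetric $(1,2)$-tensor and call it ``torsion'', but genuinely exhibit (or invoke the existence of) an affine connection carrying it, since the Schrödinger connection itself is torsion-free and the realization is necessarily through the torsion of a \emph{different} connection, as the remark following the theorem stresses.
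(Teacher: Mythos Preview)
Your proposal is correct and follows exactly the route the paper intends: the corollary is obtained by specializing the preceding torsion-realization Proposition to the semi-symmetric torsion $\overset{A}{T}(\omega,X,Y)=\omega(\pi(Y)X-\pi(X)Y)$. Your extra care in exhibiting an actual connection (Yano's semi-symmetric metric connection) carrying this torsion, and your independent verification of symmetry and cyclicity from the closed form, are welcome additions that the paper leaves implicit.
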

In the following, we will implement a Schrödinger connection using the non-metricity tensor, which is defined as 
\begin{equation}
    Q(X,Y,Z)=-(\nabla_{X}g)(Y,Z), \; \; \forall X,Y,Z \in \Gamma(TM).
\end{equation}
\begin{proposition}
Let $(M,g)$ be a semi-Riemannian manifold with an affine connection $\overset{A}{\nabla}$, whose non-metricity $\overset{A}{Q}$ is non-vanishing and satisfies
\begin{equation}\label{assumption}
    \overset{A}{Q}(X,Y,Z)+\overset{A}{Q}(Z,Y,X)+\overset{A}{Q}(X,Z,Y)=0.
\end{equation} 
Then, there exists a Schrödinger connection $(\nabla,U)$, which is implemented by $\overset{A}{Q}$.
\end{proposition}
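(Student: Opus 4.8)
The plan is to take as the $U$-tensor the candidate already announced in item $(iii)$ of Theorem \ref{realizationtheorem}, namely
\begin{equation*}
    U(\omega,X,Y):=-\overset{A}{Q}\left(\omega^{\sharp},X,Y\right),
\end{equation*}
and to verify directly that this $(1,2)$-tensor field satisfies the two defining conditions of a Schrödinger connection. It is manifestly $C^\infty$-multilinear, so it defines an affine connection $\nabla$ via \eqref{schrfree}. The only structural input needed about non-metricity is that, because $g$ is symmetric, $\overset{A}{Q}(Z,Y,X)=-(\overset{A}{\nabla}_{Z}g)(Y,X)$ is symmetric in its last two arguments; this symmetry will be used repeatedly to reorder slots.

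Symmetry of $U$ in the last two entries is then immediate, since $U(\omega,X,Y)=-\overset{A}{Q}(\omega^{\sharp},X,Y)=-\overset{A}{Q}(\omega^{\sharp},Y,X)=U(\omega,Y,X)$. For cyclicity I would use the simplified form from the Remark following the Definition, i.e. check that $U(\omega,X,Y)+U(X^{\flat},\omega^{\sharp},Y)+U(Y^{\flat},\omega^{\sharp},X)=0$. Expanding each term and using $(X^{\flat})^{\sharp}=X$ and $(Y^{\flat})^{\sharp}=Y$, this sum becomes
\begin{equation*}
    -\left[\overset{A}{Q}\left(\omega^{\sharp},X,Y\right)+\overset{A}{Q}\left(X,\omega^{\sharp},Y\right)+\overset{A}{Q}\left(Y,\omega^{\sharp},X\right)\right].
\end{equation*}
Writing $W=\omega^{\sharp}$, which ranges over all vector fields, and reordering the last two arguments of each term via the symmetry of $\overset{A}{Q}$, the bracket is precisely the cyclic combination of $\overset{A}{Q}$ over its three slots, which vanishes by the hypothesis \eqref{assumption}. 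Hence $(\nabla,U)$ satisfies symmetry and cyclicity and is a Schrödinger connection by the Definition (equivalently, by Proposition \ref{Schrodingersymmetry}).

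Finally, to justify the phrase \emph{implemented by $\overset{A}{Q}$}, I would compute the non-metricity of $\nabla$ itself: since $\overset{\circ}{\nabla}g=0$, one gets $(\nabla_{X}g)(Y,Z)=-g(U(-,X,Y),Z)-g(Y,U(-,X,Z))=\overset{A}{Q}(Z,X,Y)+\overset{A}{Q}(Y,X,Z)$, so the non-metricity of $\nabla$ is a partial symmetrization of $\overset{A}{Q}$; combining the last-two-slot symmetry of $\overset{A}{Q}$ with the cyclic constraint shows this symmetrization cannot vanish identically unless $\overset{A}{Q}\equiv 0$, whence $\nabla\neq\overset{\circ}{\nabla}$ and the Schrödinger connection is genuinely carried by the non-metricity of $\overset{A}{\nabla}$. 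I expect the only delicate point in the whole argument to be bookkeeping: lining up the musical isomorphisms together with the symmetry of $\overset{A}{Q}$ so that the expression produced by the cyclicity test matches the hypothesis \eqref{assumption} exactly; once that is done the proof is purely formal, with no analytic content.
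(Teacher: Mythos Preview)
Your proof is correct and follows essentially the same route as the paper: define $U(\omega,X,Y)=-\overset{A}{Q}(\omega^{\sharp},X,Y)$, obtain symmetry from the last-two-slot symmetry of $\overset{A}{Q}$, and verify cyclicity by reducing the required three-term sum to the hypothesis on $\overset{A}{Q}$ via the musical isomorphisms. Your closing paragraph computing the non-metricity of $\nabla$ itself is a pleasant extra that the paper does not include, but it is not needed for the proof.
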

\begin{proof}
   We define the Schrödinger connection $(\nabla,U)$ as
   \begin{equation}\label{cyclicity1}
       U(\omega,X,Y)=-\overset{A}{Q} \left(\omega^{\sharp},X,Y \right) \; \; \forall \omega \in \Omega^{1}(M),X,Y \in \Gamma(TM).
   \end{equation}
   To see that $U$ is symmetric in its last two entries, it suffices to show that the non-metricity is, which is a direct consequence of the Leibniz rule for the connection $\overset{A}{\nabla}$. For cyclicity, consider
   \begin{equation}\label{cyclicity2}
       U\left(Y^{\flat},\omega^{\sharp},X \right)=-\overset{A}{Q} \left(Y,\omega^{\sharp},X \right),
   \end{equation}
   \begin{equation}\label{cyclicity3}
       U \left(X^{\flat},\omega^{\sharp},Y \right)= -\overset{A}{Q} \left(X,\omega^{\sharp},Y \right)=-\overset{A}{Q}\left(X,Y,\omega^{\sharp} \right).
   \end{equation}
   We add together the three equations \eqref{cyclicity1}, \eqref{cyclicity2}, and \eqref{cyclicity3}. By assumption \eqref{assumption}, the right hand sides vanish, so
   \begin{equation}
         U(\omega,X,Y)+   U\left(Y^{\flat},\omega^{\sharp},X \right)+ U \left(X^{\flat},\omega^{\sharp},Y \right)=0,
   \end{equation}
   which is exactly the cyclicity condition.
\end{proof}
Finally, we come to prove the last point of the theorem.
\begin{proposition}
    Let  $U(-,X,Y)$ be a $(1,2)$-tensor field on a semi-Riemannian manifold $(M,g)$, given by
    \begin{equation}
       U(\omega,X,Y)=\pi\left( \omega^{\sharp} \right) W(X) W(Y) - \frac{1}{2} \left( W \left( \omega^{\sharp} \right) W(X) \pi(Y) + W\left( \omega^{\sharp} \right) W(Y) \pi(X) \right).
    \end{equation}
    Then, the pair $(\nabla,U)$ is a Schrödinger connection.
\end{proposition}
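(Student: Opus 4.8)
The plan is to verify the two defining conditions of a Schrödinger connection directly for the stated $U$, using the simplified cyclicity relation of the Remark: $(\nabla,U)$ is a Schrödinger connection precisely when $U(\omega,X,Y)=U(\omega,Y,X)$ and $U(\omega,X,Y)+U(X^{\flat},\omega^{\sharp},Y)+U(Y^{\flat},\omega^{\sharp},X)=0$ for all $X,Y\in\Gamma(TM)$ and $\omega\in\Omega^{1}(M)$. Symmetry in the last two slots is immediate: in
\[
U(\omega,X,Y)=\pi(\omega^{\sharp})W(X)W(Y)-\frac{1}{2}\left(W(\omega^{\sharp})W(X)\pi(Y)+W(\omega^{\sharp})W(Y)\pi(X)\right)
\]
the first term is manifestly invariant under $X\leftrightarrow Y$, and the parenthesis is exactly the $X\leftrightarrow Y$ symmetrisation of $W(\omega^{\sharp})W(X)\pi(Y)$, so $U(\omega,X,Y)=U(\omega,Y,X)$.

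For cyclicity I would substitute into the formula, using the musical identities $(X^{\flat})^{\sharp}=X$, $(Y^{\flat})^{\sharp}=Y$, so that $\pi\left((X^{\flat})^{\sharp}\right)=\pi(X)$, $W\left((X^{\flat})^{\sharp}\right)=W(X)$, and likewise for $Y$. This gives
\[
U(X^{\flat},\omega^{\sharp},Y)=\pi(X)W(\omega^{\sharp})W(Y)-\frac{1}{2}W(X)W(\omega^{\sharp})\pi(Y)-\frac{1}{2}W(X)W(Y)\pi(\omega^{\sharp}),
\]
\[
U(Y^{\flat},\omega^{\sharp},X)=\pi(Y)W(\omega^{\sharp})W(X)-\frac{1}{2}W(Y)W(\omega^{\sharp})\pi(X)-\frac{1}{2}W(Y)W(X)\pi(\omega^{\sharp}).
\]
Adding these to $U(\omega,X,Y)$ and sorting the nine resulting scalars into the three monomial types $\pi(\omega^{\sharp})W(X)W(Y)$, $W(\omega^{\sharp})W(X)\pi(Y)$, $W(\omega^{\sharp})W(Y)\pi(X)$, each type occurs once with coefficient $+1$ and twice with coefficient $-\frac{1}{2}$; hence each sums to $1-\frac{1}{2}-\frac{1}{2}=0$ and the cyclicity identity holds. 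The only thing demanding attention is the bookkeeping — e.g.\ recognising $-\frac{1}{2}W(X)W(\omega^{\sharp})\pi(Y)$ from the second displayed line as the same monomial as the $-\frac{1}{2}W(\omega^{\sharp})W(X)\pi(Y)$ term of $U(\omega,X,Y)$ — but this is purely organisational and presents no genuine obstacle.

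As a conceptual cross-check, one may observe that this $U$ is the instance of case $(iii)$ of Theorem \ref{realizationtheorem} produced by the Weyl-type non-metricity tensor
\[
\overset{A}{Q}(Z,X,Y)=-\pi(Z)W(X)W(Y)+\frac{1}{2}\left(W(Z)W(X)\pi(Y)+W(Z)W(Y)\pi(X)\right)
\]
of a suitable affine connection: it is symmetric in $X,Y$, and a short computation of exactly the same combinatorial flavour shows it obeys the cyclic identity required there, so the statement also follows from the already-established case $(iii)$.
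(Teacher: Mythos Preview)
Your proof is correct and follows essentially the same route as the paper's: both dispense with symmetry as immediate and then verify cyclicity by writing out the three expressions $U(\omega,X,Y)$, $U(X^{\flat},\omega^{\sharp},Y)$, $U(Y^{\flat},\omega^{\sharp},X)$ and grouping the nine scalar terms into three monomial types that each cancel as $1-\tfrac{1}{2}-\tfrac{1}{2}$ (the paper does this via color-coding, you via explicit bookkeeping). Your additional cross-check through case~$(iii)$ of Theorem~\ref{realizationtheorem} is a nice observation not present in the paper, though you should be slightly more careful about asserting that the displayed $\overset{A}{Q}$ actually arises as the non-metricity of \emph{some} affine connection.
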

\begin{proof}
    For the pair $(\nabla,U)$ to be a Schrödinger connection, $U$ must be symmetric in its last two entries and it must satisfy cyclicity. As symmetry is clear by construction, we only verify the cyclicity condition. To this end, consider
    \begin{equation}\label{thisone}
        U(\omega,X,Y)=\textcolor{red}{\pi\left( \omega^{\sharp} \right) W(X) W(Y)} - \frac{1}{2} \left( \textcolor{blue}{W \left( \omega^{\sharp} \right) W(X) \pi(Y)} + \textcolor{orange}{W\left( \omega^{\sharp} \right) W(Y) \pi(X)} \right)
    \end{equation}
    \begin{equation}\label{thistwo}
        U\left(Y^{\flat},\omega^{\sharp},X \right)=\textcolor{blue}{\pi(Y) W\left( \omega ^{\sharp} \right) W(X)} - \frac{1}{2} \left(\textcolor{orange}{W(Y) W \left( \omega^{\sharp} \right) \pi(X)}+ \textcolor{red}{W(Y) W(X) \pi \left( \omega^{\sharp} \right)} \right)
    \end{equation}
    \begin{equation}\label{thisthree}
        U\left(X^{\flat},\omega^{\sharp},Y \right)=\textcolor{orange}{\pi(X) W\left( \omega^{\sharp} \right) W(Y)} - \frac{1}{2} \left( \textcolor{blue}{W(X) W \left(\omega^{\sharp} \right) \pi(Y)} + \textcolor{red}{W(X) W(Y) \pi \left( \omega^\sharp \right)} \right)
    \end{equation}
After summing equations \eqref{thisone}, \eqref{thistwo}, \eqref{thisthree}, the right hand side cancels out entirely, as coded in colors. Hence, the desired equality
\begin{equation}
    U(\omega,X,Y)+U \left(X^{\flat},\omega^{\sharp},Y \right) + U \left(Y^{\flat},\omega^{\sharp},X \right)=0
\end{equation}
is obtained. This shows that the pair $(\nabla,U)$ is a Schrödinger connection.
\end{proof}

\section{Local Description: A Physicist's Perspective}\label{section3}
In this section, in contrast to the earlier global approach, we choose local charts.  We calculate and present new coordinate expressions for the curvature tensors of Yano-Schrödinger geometry in Appendix \ref{coordcurv}.

\subsection{Einstein-Schrödinger Manifolds}\label{Einsteinmanifolds}
Einstein manifolds are of particular importance for physics, since they serve as solutions to the Einstein equations in vacuum. While these manifolds have been well-studied in (semi-)Riemannian geometry \cite{Besse}, they have only recently become a topic of interest in metric-affine theories.  In \cite{raveratorsionnonmetricity} Klemm and Ravera introduced the concept of an Einstein manifold with torsion and non-metricity and provided a characterization  in terms of partial differential equations in local coordinates.

A special case is when the non-metricity is of the Weyl type, leading to what's known as an Einstein-Weyl space. These spaces have been the subject of research \cite{lionelmason}, and recently, the challenge of finding an action principle for the Einstein-Weyl equations was resolved \cite{raveratorsionnonmetricity}.  Moreover, Einstein-Weyl spaces with torsion have also been explored \cite{mustafathesis}.

In this section we explore Einstein manifolds with semi-symmetric torsion and consequently Einstein-Schrödinger manifolds, where the Schrödinger connection is realized through a semi-symmetric type of torsion. We will present a concrete example of a non-static Einstein manifold with torsion, with the help of a semi-symmetric connection by solving the differential equations characterizing an Einstein manifold in a chart. To start, let us review the basic notion of an Einstein metric.

\begin{definition}
    A semi-Riemannian metric $g$ on a smooth manifold $M$ is called an \textbf{Einstein metric} if there exists a smooth function $\lambda:M \to \mathbb{R}$, such that
    \begin{equation}\label{einsteincondition}
        \overset{\circ}{R}_{\mu \nu}=\lambda g_{\mu \nu}.
    \end{equation}
\end{definition}
\begin{remark}
    Note that for a general affine connection $\nabla$ equation \eqref{einsteincondition} does not make sense, as $R_{\mu \nu}$ is not symmetric in general. Hence, we will symmetrize in our following definition accordingly.
\end{remark}
\begin{definition}
A semi-Riemannian manifold $(M,g)$ equipped with an affine connection $\nabla$ is called a \textbf{generalized Einstein manifold} if there exists a smooth function $\lambda:M \to \mathbb{R}$ such that
\begin{equation}
    R_{(\mu \nu)}=\lambda g_{\mu \nu}.
\end{equation}
In addition, the generalized Einstein manifold with $\nabla=(\nabla,U)$, a Yano-Schrödinger connection, will be called an \textbf{Einstein-Schrödinger} manifold.  
\end{definition}
\begin{remark}
    It should be noted that the above definition gives actually a constraint equation for the connections satisfying it.
\end{remark}
\begin{remark}
    Generalized Einstein manifolds need not be Einstein manifolds in the usual sense. Nevertheless, choosing $(\nabla,U)=\overset{\circ}{\nabla}$ to be the Levi-Civita connection, we obtain an Einstein manifold, upon choosing $\lambda=const$. In case we choose a different connection, but an Einstein metric, we get a partial differential equation in charts for the torsion and non-metricity part.
\end{remark}
We now turn our attention to a special case, namely that of an Einstein manifold with semi-symmetric type of torsion.
\begin{proposition}
    Let $(M,g)$ be a semi-Riemannian manifold equipped with a semi-symmetric metric connection. Then the following are equivalent:
    \begin{enumerate}
        \item[$(i)$] $(M,g)$ is a generalized Einstein-manifold.
        \item[$(ii)$] $ \overset{\circ}{R}_{\mu \nu} - \overset{\circ}{\nabla}_{\mu} \pi_{\nu} - \overset{\circ}{\nabla}_{\nu} \pi_{\mu}+ 2 \pi_{\nu} \pi_{\mu} +\frac{1}{2} g_{\mu \nu} \overset{\circ}{\nabla}_{\lambda} \pi^{\lambda} - \frac{1}{2} g_{\mu \nu} \pi_{\lambda} \pi^{\lambda}=\frac{1}{4} g_{\mu \nu} \overset{\circ}{R}$.
    \end{enumerate}
\end{proposition}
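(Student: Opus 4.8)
The plan is to compute the symmetrized Ricci tensor $R_{(\mu\nu)}$ of the semi-symmetric metric connection explicitly in terms of the Levi-Civita data $\overset{\circ}{R}_{\mu\nu}$, $\overset{\circ}{\nabla}$, and the torsion one-form $\pi$, and then simply rewrite the generalized Einstein condition $R_{(\mu\nu)}=\lambda g_{\mu\nu}$ using that expression, eliminating $\lambda$ by taking a trace. First I would recall the standard decomposition of a semi-symmetric metric connection: its coefficients are $\Gamma^\lambda{}_{\mu\nu}=\gamma^\lambda{}_{\mu\nu}+\delta^\lambda_\nu\pi_\mu-g_{\mu\nu}\pi^\lambda$ (metric compatibility together with $T^\lambda{}_{\mu\nu}=\delta^\lambda_\nu\pi_\mu-\delta^\lambda_\mu\pi_\nu$ forces exactly this combination). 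From here the distortion tensor is $N^\lambda{}_{\mu\nu}=\delta^\lambda_\nu\pi_\mu-g_{\mu\nu}\pi^\lambda$, and I would substitute into the general formula relating the full curvature to the Levi-Civita curvature,
\begin{equation*}
R^\rho{}_{\sigma\mu\nu}=\overset{\circ}{R}^\rho{}_{\sigma\mu\nu}+\overset{\circ}{\nabla}_\mu N^\rho{}_{\nu\sigma}-\overset{\circ}{\nabla}_\nu N^\rho{}_{\mu\sigma}+N^\rho{}_{\mu\lambda}N^\lambda{}_{\nu\sigma}-N^\rho{}_{\nu\lambda}N^\lambda{}_{\mu\sigma}.
\end{equation*}

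Next I would contract $\rho$ with $\mu$ to get the Ricci tensor. The linear-in-$N$ terms produce $\overset{\circ}{\nabla}_\mu N^\mu{}_{\nu\sigma}-\overset{\circ}{\nabla}_\nu N^\mu{}_{\mu\sigma}$, which with $N^\mu{}_{\nu\sigma}=\delta^\mu_\sigma\pi_\nu-g_{\nu\sigma}\pi^\mu$ and $N^\mu{}_{\mu\sigma}=(n-1)\pi_\sigma$ (in $n$ dimensions — here one keeps $n$ general or specializes to $n=4$ since the trace step produces the coefficient $\tfrac14$) gives terms like $\overset{\circ}{\nabla}_\sigma\pi_\nu-g_{\nu\sigma}\overset{\circ}{\nabla}_\mu\pi^\mu-(n-1)\overset{\circ}{\nabla}_\nu\pi_\sigma$; the quadratic terms $N^\mu{}_{\mu\lambda}N^\lambda{}_{\nu\sigma}-N^\mu{}_{\nu\lambda}N^\lambda{}_{\mu\sigma}$ unpack into combinations of $\pi_\nu\pi_\sigma$, $g_{\nu\sigma}\pi_\lambda\pi^\lambda$. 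After symmetrizing over $(\nu,\sigma)$ the antisymmetric pieces drop, and collecting everything should yield
\begin{equation*}
R_{(\mu\nu)}=\overset{\circ}{R}_{\mu\nu}-\overset{\circ}{\nabla}_\mu\pi_\nu-\overset{\circ}{\nabla}_\nu\pi_\mu+2\pi_\mu\pi_\nu+g_{\mu\nu}\overset{\circ}{\nabla}_\lambda\pi^\lambda-(n-1)g_{\mu\nu}\pi_\lambda\pi^\lambda
\end{equation*}
up to the precise numerical coefficients that the authors have fixed by their conventions; I would track signs and the factor on $\overset{\circ}{\nabla}\pi$ carefully to land on their stated form (which, reading off their $(ii)$, corresponds to $n=4$ with some terms combined). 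Then $(i)\Leftrightarrow R_{(\mu\nu)}=\lambda g_{\mu\nu}$; tracing both sides with $g^{\mu\nu}$ solves $\lambda$ in terms of $\overset{\circ}{R}$, $\overset{\circ}{\nabla}_\lambda\pi^\lambda$, and $\pi_\lambda\pi^\lambda$, and back-substituting eliminates $\lambda$ to produce exactly equation $(ii)$, with the $\tfrac14 g_{\mu\nu}\overset{\circ}{R}$ on the right-hand side being the fingerprint of dividing the trace by $n=4$.

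The main obstacle is purely bookkeeping: getting every sign and combinatorial coefficient right in the Ricci contraction — in particular the interplay between the two index orderings of non-metricity/distortion (the paper uses a specific slot convention for $Q$ and $T$), the $(n-1)$ factors from tracing $N^\mu{}_{\mu\sigma}$, and the fact that the full Ricci tensor is not symmetric so one must genuinely symmetrize rather than assume symmetry. A secondary subtlety is confirming that the hypothesis "semi-symmetric \emph{metric} connection" (as opposed to the length-preserving Schrödinger realization, where the same $\pi$ enters through non-metricity) is what makes $N$ take the clean form above; I would state this reduction at the outset so the computation is unambiguous. Once the Ricci formula is in hand, the equivalence with $(ii)$ is immediate from the trace manipulation and requires no further ideas.
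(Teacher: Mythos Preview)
Your approach is essentially the same as the paper's: compute the symmetrized Ricci tensor $R_{(\mu\nu)}$ of the semi-symmetric metric connection in terms of Levi-Civita data and $\pi$, trace the equation $R_{(\mu\nu)}=\lambda g_{\mu\nu}$ to solve for $\lambda$, and substitute back to obtain $(ii)$. The only difference is that the paper simply quotes the formula $R_{(\mu\nu)}=\overset{\circ}{R}_{\mu\nu}-\overset{\circ}{\nabla}_\mu\pi_\nu-\overset{\circ}{\nabla}_\nu\pi_\mu+2\pi_\nu\pi_\mu-g_{\mu\nu}\overset{\circ}{\nabla}_\lambda\pi^\lambda-2g_{\mu\nu}\pi_\lambda\pi^\lambda$ from an external reference rather than deriving it, whereas you propose to compute it from the distortion tensor; note that your tentative coefficients (the sign on the $g_{\mu\nu}\overset{\circ}{\nabla}_\lambda\pi^\lambda$ term and the $-(n-1)$ in front of $g_{\mu\nu}\pi_\lambda\pi^\lambda$) do not yet match the paper's, so the bookkeeping caution you flag is indeed where the work lies.
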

\begin{proof}
The symmetrized Ricci tensor is given by \cite{csillag2024semisymmetric}
\begin{equation}
    R_{(\mu \nu)}=\overset{\circ}{R}_{\mu \nu} -\overset{\circ}{\nabla}_\mu \pi_\nu - \overset{\circ}{\nabla}_\nu \pi_\mu +2 \pi_\nu \pi_\mu - g_{\mu \nu} \overset{\circ}{\nabla}_{\lambda} \pi^{\lambda} -2 g_{\mu \nu} \pi_{\lambda} \pi^{\lambda}.
\end{equation}
Therefore, by contracting the condition for $(M,g)$ to be a generalized Einstein manifold, we obtain
\begin{equation}
\begin{aligned}
    \overset{\circ}{R}- \overset{\circ}{\nabla}_{\mu} \pi^{\mu} - \overset{\circ}{\nabla}_{\nu} \pi^{\nu} + 2\pi_\nu \pi^\nu -4 \overset{\circ}{\nabla}_{\lambda} \pi^{\lambda} -8 \pi_{\lambda} \pi^{\lambda}= 4 \lambda,
\end{aligned}
\end{equation}
from which expressing $\lambda$ gives
\begin{equation}
    \lambda=\frac{1}{4} \left(\overset{\circ}{R} -6 \nabla_{\lambda} \pi^{\lambda}  -6 \pi_\lambda \pi^{\lambda} \right).
\end{equation}
Hence $(M,g)$ is a generalized Einstein manifold iff
\begin{equation}
    \overset{\circ}{R}_{\mu \nu} -\overset{\circ}{\nabla}_\mu \pi_\nu - \overset{\circ}{\nabla}_\nu \pi_\mu +2 \pi_\nu \pi_\mu - g_{\mu \nu} \overset{\circ}{\nabla}_{\lambda} \pi^{\lambda} -2 g_{\mu \nu} \pi_{\lambda} \pi^{\lambda}=\frac{1}{4} \left(\overset{\circ}{R} -6 \nabla_{\lambda} \pi^{\lambda}  -6 \pi_\lambda \pi^{\lambda} \right) g_{\mu \nu}.
\end{equation}
Rearranging leads to the desired result
\begin{equation}
    \overset{\circ}{R}_{\mu \nu} - \overset{\circ}{\nabla}_{\mu} \pi_{\nu} - \overset{\circ}{\nabla}_{\nu} \pi_{\mu}+ 2 \pi_{\nu} \pi_{\mu} +\frac{1}{2} g_{\mu \nu} \overset{\circ}{\nabla}_{\lambda} \pi^{\lambda} - \frac{1}{2} g_{\mu \nu} \pi_{\lambda} \pi^{\lambda}=\frac{1}{4} g_{\mu \nu} \overset{\circ}{R}.
\end{equation}
\end{proof}
\begin{corollary}
    A semi-Riemannian manifold $(M,g)$ equipped with a semi-symmetric connection and an Einstein metric is a generalized Einstein manifold iff
    \begin{equation} \label{piphi}
         -\overset{\circ}{\nabla}_{\mu} \pi_{\nu} - \overset{\circ}{\nabla}_{\nu} \pi_{\mu}+ 2 \pi_{\nu} \pi_{\mu} +\frac{1}{2} g_{\mu \nu} \overset{\circ}{\nabla}_{\lambda} \pi^{\lambda} - \frac{1}{2} g_{\mu \nu} \pi_{\lambda} \pi^{\lambda}=0.
    \end{equation}
\end{corollary}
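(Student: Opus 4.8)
The plan is to obtain this as an immediate specialization of the preceding proposition, so that essentially all the computational work is already in place. First I would recall that that proposition establishes the equivalence of $(M,g)$ being a generalized Einstein manifold with the equation
\begin{equation*}
\overset{\circ}{R}_{\mu \nu} - \overset{\circ}{\nabla}_{\mu} \pi_{\nu} - \overset{\circ}{\nabla}_{\nu} \pi_{\mu}+ 2 \pi_{\nu} \pi_{\mu} +\tfrac{1}{2} g_{\mu \nu} \overset{\circ}{\nabla}_{\lambda} \pi^{\lambda} - \tfrac{1}{2} g_{\mu \nu} \pi_{\lambda} \pi^{\lambda}=\tfrac{1}{4} g_{\mu \nu} \overset{\circ}{R},
\end{equation*}
which holds for any semi-symmetric metric connection with generating one-form $\pi$, with no assumption on $g$ beyond being semi-Riemannian.

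Next I would feed in the extra hypothesis that $g$ is an Einstein metric, i.e. $\overset{\circ}{R}_{\mu\nu}=\lambda\,g_{\mu\nu}$ for some smooth $\lambda$. Contracting with $g^{\mu\nu}$ in the four-dimensional setting used throughout the paper gives $\overset{\circ}{R}=4\lambda$, whence $\lambda=\tfrac14\overset{\circ}{R}$ and therefore $\overset{\circ}{R}_{\mu\nu}=\tfrac14\overset{\circ}{R}\,g_{\mu\nu}$; equivalently, the trace-free part of the Levi-Civita Ricci tensor vanishes. Substituting this identity into the displayed equation, the term $\overset{\circ}{R}_{\mu\nu}$ on the left-hand side cancels precisely the term $\tfrac14 g_{\mu\nu}\overset{\circ}{R}$ on the right-hand side, and what remains is exactly equation \eqref{piphi}. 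Since the Einstein condition is being \emph{assumed} rather than derived, this substitution is reversible: if \eqref{piphi} holds, adding back $\overset{\circ}{R}_{\mu\nu}=\tfrac14\overset{\circ}{R}\,g_{\mu\nu}$ reproduces the proposition's equation and hence shows $(M,g)$ is a generalized Einstein manifold. This gives the claimed equivalence.

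I do not expect any genuine obstacle here, as the statement really is a corollary in the strict sense — a one-line substitution into the previous result. The only point worth flagging is the role of the dimension: the cancellation of $\overset{\circ}{R}_{\mu\nu}$ against $\tfrac14 g_{\mu\nu}\overset{\circ}{R}$ uses $n=4$ (in dimension $n$ one would be left with an extra multiple of $\overset{\circ}{R}\,g_{\mu\nu}$), but $n=4$ is in any case already built into the coefficients of the preceding proposition, so nothing new needs to be assumed. I would simply make this explicit in passing and otherwise present the proof as the short substitution described above.
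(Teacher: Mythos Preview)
Your proposal is correct and matches the paper's approach exactly: the paper states this as a corollary with no written proof, treating it as the immediate substitution of the Einstein-metric identity $\overset{\circ}{R}_{\mu\nu}=\tfrac14\overset{\circ}{R}\,g_{\mu\nu}$ into the preceding proposition. Your observation about the implicit use of $n=4$ is also on point, since that dimension is already baked into the coefficients derived in the proposition's proof.
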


We have a similar characterization of Einstein-Schrödinger manifolds in local charts.
\begin{proposition}\label{yanoeinstein}
    Let $(M,g)$ be a semi-Riemannian manifold equipped with a Yano-Schrödinger connection $(\nabla,U)$. Then the following are equivalent:
    \begin{enumerate}
        \item[$(i)$] $(M,g)$ is an Einstein-Schrödinger manifold.
        \item[$(ii)$] $
            \overset{\circ}{R}_{\mu \nu} -\frac{1}{8} g_{\mu \nu} \overset{\circ}{\nabla}_{\alpha} \pi^{\alpha} + \frac{1}{2} \overset{\circ}{\nabla}_{\mu} \pi_{\nu} + \frac{1}{2} \overset{\circ}{\nabla}_{\nu} \pi_{\mu} + \frac{1}{16} g_{\mu \nu} \pi^{\alpha} \pi_{\alpha} - \frac{1}{4} \pi_{\nu} \pi_{\mu}=\frac{1}{4} \overset{\circ}{R}  g_{\mu \nu}$.
    \end{enumerate}
\end{proposition}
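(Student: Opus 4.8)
The plan is to imitate the proof of the preceding proposition on semi-symmetric metric connections: turn the defining condition $R_{(\mu\nu)}=\lambda g_{\mu\nu}$ into a purely Levi-Civita identity by inserting the explicit Yano-Schrödinger distortion, and then eliminate the free function $\lambda$ by a trace.

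First I would write the Yano-Schrödinger connection in a chart as $\tensor{\Gamma}{^\lambda_\mu_\nu}=\tensor{\gamma}{^\lambda_\mu_\nu}+\tensor{U}{^\lambda_\mu_\nu}$ with $\tensor{U}{^\lambda_\mu_\nu}=\pi^\lambda g_{\mu\nu}-\tfrac12(\delta^\lambda_\mu\pi_\nu+\delta^\lambda_\nu\pi_\mu)$, the component form of the $(1,2)$-tensor in the Yano-Schrödinger corollary of subsection~\ref{realizations}; equivalently the Schrödinger tensor is $S_{\rho\mu\nu}=\pi_\rho g_{\mu\nu}-\tfrac12(g_{\rho\mu}\pi_\nu+g_{\rho\nu}\pi_\mu)$, which at once satisfies \eqref{Schrodingerconsideration}. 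Substituting this distortion into the standard expression for the Ricci tensor of a connection that differs from $\overset{\circ}{\nabla}$ by a distortion tensor — the same computation that yields the curvature formulas of Appendix~\ref{coordcurv} — and symmetrizing in $\mu,\nu$, one gets $R_{(\mu\nu)}$ as a fixed linear combination of $\overset{\circ}{R}_{\mu\nu}$, $\overset{\circ}{\nabla}_{\mu}\pi_{\nu}+\overset{\circ}{\nabla}_{\nu}\pi_{\mu}$, $\pi_\mu\pi_\nu$, $g_{\mu\nu}\overset{\circ}{\nabla}_{\alpha}\pi^{\alpha}$ and $g_{\mu\nu}\pi_\alpha\pi^\alpha$. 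Both contractions $\tensor{U}{^\lambda_\lambda_\mu}$ and $\tensor{U}{^\lambda_\mu_\lambda}$ are nonzero and feed into the distortion-squared part of the Ricci formula, so tracking the resulting rational coefficients — which is ultimately what produces the $\tfrac18$, $\tfrac1{16}$ and $\tfrac14$ of statement $(ii)$ — is the delicate point.

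Granting $(i)$, so that $R_{(\mu\nu)}=\lambda g_{\mu\nu}$ for some smooth $\lambda:M\to\mathbb{R}$, I would contract the identity of the first step with $g^{\mu\nu}$ to obtain a scalar relation that determines $\lambda$ in terms of $\overset{\circ}{R}$, $\overset{\circ}{\nabla}_{\alpha}\pi^{\alpha}$ and $\pi_\alpha\pi^\alpha$ only, exactly as the formula $\lambda=\tfrac14(\overset{\circ}{R}-6\overset{\circ}{\nabla}_{\lambda}\pi^{\lambda}-6\pi_\lambda\pi^\lambda)$ appears in the semi-symmetric case (with different numbers here). Reinserting this $\lambda$ into $R_{(\mu\nu)}=\lambda g_{\mu\nu}$, the scalar terms $g_{\mu\nu}\overset{\circ}{\nabla}_{\alpha}\pi^{\alpha}$ and $g_{\mu\nu}\pi_\alpha\pi^\alpha$ recombine, and moving $\tfrac14\overset{\circ}{R}g_{\mu\nu}$ to the right gives precisely $(ii)$. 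Conversely, assuming $(ii)$, taking its $g^{\mu\nu}$-trace defines a smooth function $\lambda$ for which the first-step identity reads $R_{(\mu\nu)}=\lambda g_{\mu\nu}$, so $(M,g)$ is Einstein-Schrödinger; as noted in the remark above, $\lambda$ need not be constant.

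The only real obstacle is the first step — assembling the symmetrized Ricci tensor of the Yano-Schrödinger connection with the correct coefficients and the correct sign convention for the Ricci contraction, the quadratic-in-$\pi$ terms being the easiest to miscount. One cannot shortcut this by reusing the semi-symmetric-metric computation, because $U$ is only the symmetrized half of that connection's distortion and the Ricci tensor is not linear in the distortion. Once that identity is in place, the equivalence follows by the same one-line trace-and-rearrange as in the semi-symmetric-metric case treated just above.
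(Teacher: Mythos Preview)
Your proposal is correct and follows essentially the same route as the paper: compute the symmetrized Ricci tensor $R_{(\mu\nu)}$ of the Yano-Schrödinger connection from the distortion formula (this is equation \eqref{riccitensoryanoschrodinger} symmetrized, quoted at the start of the paper's proof), trace the Einstein-Schrödinger condition to solve for $\lambda$, substitute back, and rearrange. The paper simply quotes the symmetrized Ricci tensor rather than rederiving it, but otherwise the argument is identical; your remarks on the converse direction and on why the semi-symmetric-metric computation cannot be recycled are accurate elaborations of points the paper leaves implicit.
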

\begin{proof}
    The symmetrization of the Ricci tensor of the Yano-Schrödinger connection is given by
    \begin{equation}
    \begin{aligned}
        R_{(\mu \nu)}
        =\overset{\circ}{R}_{\mu \nu}+g_{\mu \nu} \overset{\circ}{\nabla}_{\alpha} \pi^{\alpha} + \frac{1}{4} \overset{\circ}{\nabla}_{\mu} \pi_{\nu} + \frac{1}{4} \overset{\circ}{\nabla}_{\nu} \pi_{\mu} - \frac{1}{2} g_{\mu \nu} \pi^\alpha \pi_\alpha - \frac{1}{4} \pi_\nu \pi_\mu .
    \end{aligned}
    \end{equation}
    By contracting the condition to be an Einstein-Schrödinger manifold and expressing $\lambda$, we obtain
    \begin{equation}
        \lambda=\frac{1}{4} \left( \overset{\circ}{R}+ \frac{9}{2} \overset{\circ}{\nabla}_{\mu} \pi^{\mu} - \frac{9}{4} \pi^\alpha \pi_\alpha \right).
    \end{equation}
    Hence $(M,g)$ is Einstein iff
    \begin{equation}
        \overset{\circ}{R}_{\mu \nu}+g_{\mu \nu} \overset{\circ}{\nabla}_{\alpha} \pi^{\alpha} + \frac{1}{2} \overset{\circ}{\nabla}_{\mu} \pi_{\nu} + \frac{1}{2} \overset{\circ}{\nabla}_{\nu} \pi_{\mu} - \frac{1}{2} g_{\mu \nu} \pi^\alpha \pi_\alpha - \frac{1}{4} \pi_\nu \pi_\mu = \frac{1}{4} \left( \overset{\circ}{R}+ \frac{9}{2}\overset{\circ}{\nabla}_{\alpha} \pi^{\alpha} - \frac{9}{4} \pi^\alpha \pi_\alpha \right) g_{\mu \nu}.
    \end{equation} 
    Rearranging leads to the desired result.
\end{proof}
\begin{corollary}
    A semi-Riemannian manifold $(M,g)$ equipped with an Einstein metric and a Yano-Schrödinger connection $(\nabla,U)$ is an Einstein-Schrödinger manifold iff
    \begin{equation}
        -\frac{1}{8} g_{\mu \nu} \overset{\circ}{\nabla}_{\alpha} \pi^{\alpha} + \frac{1}{2} \overset{\circ}{\nabla}_{\mu} \pi_{\nu} + \frac{1}{2} \overset{\circ}{\nabla}_{\nu} \pi_{\mu} + \frac{1}{16} g_{\mu \nu} \pi^{\alpha} \pi_{\alpha} - \frac{1}{4} \pi_{\nu} \pi_{\mu}=0.
    \end{equation}
\end{corollary}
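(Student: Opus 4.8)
The plan is to read this off Proposition~\ref{yanoeinstein} by adding the single extra hypothesis that $g$ is itself an Einstein metric. Proposition~\ref{yanoeinstein} already establishes that a Yano--Schrödinger connection $(\nabla,U)$ makes $(M,g)$ an Einstein--Schrödinger manifold if and only if
\begin{equation*}
\overset{\circ}{R}_{\mu \nu} -\tfrac{1}{8} g_{\mu \nu} \overset{\circ}{\nabla}_{\alpha} \pi^{\alpha} + \tfrac{1}{2} \overset{\circ}{\nabla}_{\mu} \pi_{\nu} + \tfrac{1}{2} \overset{\circ}{\nabla}_{\nu} \pi_{\mu} + \tfrac{1}{16} g_{\mu \nu} \pi^{\alpha} \pi_{\alpha} - \tfrac{1}{4} \pi_{\nu} \pi_{\mu}=\tfrac{1}{4} \overset{\circ}{R}\, g_{\mu \nu},
\end{equation*}
so the only task is to simplify the curvature contribution $\overset{\circ}{R}_{\mu\nu}-\tfrac14\overset{\circ}{R}\,g_{\mu\nu}$ under the Einstein assumption.

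First I would invoke the Einstein condition \eqref{einsteincondition}, $\overset{\circ}{R}_{\mu\nu}=\lambda g_{\mu\nu}$, and contract it with $g^{\mu\nu}$. In the four-dimensional setting used throughout this section this gives $\overset{\circ}{R}=4\lambda$, hence $\lambda=\tfrac14\overset{\circ}{R}$ and therefore $\overset{\circ}{R}_{\mu\nu}=\tfrac14\overset{\circ}{R}\,g_{\mu\nu}$; equivalently, the trace-free part of $\overset{\circ}{R}_{\mu\nu}$ vanishes. Substituting this into the equivalence above makes the terms $\overset{\circ}{R}_{\mu\nu}$ and $\tfrac14\overset{\circ}{R}\,g_{\mu\nu}$ cancel, and precisely the stated relation
\begin{equation*}
-\tfrac{1}{8} g_{\mu \nu} \overset{\circ}{\nabla}_{\alpha} \pi^{\alpha} + \tfrac{1}{2} \overset{\circ}{\nabla}_{\mu} \pi_{\nu} + \tfrac{1}{2} \overset{\circ}{\nabla}_{\nu} \pi_{\mu} + \tfrac{1}{16} g_{\mu \nu} \pi^{\alpha} \pi_{\alpha} - \tfrac{1}{4} \pi_{\nu} \pi_{\mu}=0
\end{equation*}
remains. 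Since Proposition~\ref{yanoeinstein} is an equivalence, running the argument in reverse (adding $\overset{\circ}{R}_{\mu\nu}=\tfrac14\overset{\circ}{R}\,g_{\mu\nu}$ back in) shows that if this relation holds for an Einstein metric then $(M,g)$ is Einstein--Schrödinger, so both implications come out at once.

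I do not expect a genuine obstacle here: the entire content is carried by Proposition~\ref{yanoeinstein}, and the corollary is just the observation that the hypothesis ``Einstein metric'' forces the Ricci terms to be pure trace and hence to drop out. The only point meriting a word of care is dimensional: the coefficient $\tfrac14$ multiplying $\overset{\circ}{R}\,g_{\mu\nu}$ (inherited from the trace taken in the proof of Proposition~\ref{yanoeinstein}) matches $\overset{\circ}{R}_{\mu\nu}=\tfrac14\overset{\circ}{R}\,g_{\mu\nu}$ exactly because $\dim M=4$, so this standing assumption should be stated explicitly; no other subtlety arises.
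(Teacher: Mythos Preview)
Your proposal is correct and follows exactly the intended route: the paper gives no explicit proof for this corollary, treating it as an immediate consequence of Proposition~\ref{yanoeinstein} once the Einstein condition $\overset{\circ}{R}_{\mu\nu}=\tfrac14\overset{\circ}{R}\,g_{\mu\nu}$ is inserted. Your remark about the dimension $n=4$ being needed for the cancellation is a useful clarification that the paper leaves implicit.
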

Tracing the latter it follows that
\beq
(n-4)\Big[-2\overset{\circ}{\nabla}_{\alpha}\pi^{\alpha}+\pi_{\alpha}\pi^{\alpha} \Big]=0.
\eeq
Note that this is trivialized for $n=4$.

We conclude the section with a surprising non-static generalized Einstein manifold.
\begin{theorem}\label{Einsteinmanifoldtheorem}
    There exists a generalized Einstein manifold $(M,g)$, where $g$ is not static.
\end{theorem}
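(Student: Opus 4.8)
The plan is to realize $(M,g)$ as a spatially flat FLRW spacetime carrying a \emph{semi-symmetric metric} connection whose torsion one-form $\pi$ is aligned with $\df t$, exploiting the following feature of the notion of \emph{generalized} Einstein manifold: the underlying metric $g$ need not be Einstein in the ordinary Levi-Civita sense, because the $\pi$-dependent terms in the symmetrized Ricci tensor are free to absorb the mismatch $\overset{\circ}{R}_{\mu\nu}-\tfrac14\overset{\circ}{R}\,g_{\mu\nu}$. Concretely, I would take $M=I\times\mathbb{R}^{3}$ with $g=-\df t^{2}+a(t)^{2}(\df x^{2}+\df y^{2}+\df z^{2})$ and $\pi=\psi(t)\,\df t$, and substitute this into the characterization of generalized Einstein manifolds with a semi-symmetric connection proved above, i.e.\ into
\[
\overset{\circ}{R}_{\mu\nu}-\overset{\circ}{\nabla}_{\mu}\pi_{\nu}-\overset{\circ}{\nabla}_{\nu}\pi_{\mu}+2\pi_{\mu}\pi_{\nu}+\tfrac12 g_{\mu\nu}\overset{\circ}{\nabla}_{\lambda}\pi^{\lambda}-\tfrac12 g_{\mu\nu}\pi_{\lambda}\pi^{\lambda}=\tfrac14 g_{\mu\nu}\overset{\circ}{R}.
\]

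Writing $H:=\dot a/a$, the ingredients are $\overset{\circ}{R}_{00}=-3(\dot H+H^{2})$, $\overset{\circ}{R}_{ij}=a^{2}(\dot H+3H^{2})\delta_{ij}$, $\overset{\circ}{\nabla}_{0}\pi_{0}=\dot\psi$, $\overset{\circ}{\nabla}_{i}\pi_{j}=-a^{2}H\psi\,\delta_{ij}$, $\overset{\circ}{\nabla}_{0}\pi_{i}=0$ and $\overset{\circ}{\nabla}_{\lambda}\pi^{\lambda}=-\dot\psi-3H\psi$. Inserting these, the mixed $(0i)$ components hold identically, and a short computation shows that \emph{both} the $00$- and the $ij$-components reduce to the single scalar ODE
\[
\dot H=\psi^{2}+H\psi-\dot\psi .
\]
Thus the problem is underdetermined, and an explicit solution is obtained by fixing $\psi\equiv c$ with $c\neq 0$: the linear equation $\dot H-cH=c^{2}$ integrates to $H(t)=Ke^{ct}-c$ with $K\neq 0$, whence $a(t)=\exp\!\big(\tfrac{K}{c}e^{ct}-ct\big)$ on any interval $I$ on which $a$ stays positive and smooth. (More flexibly, one could prescribe \emph{any} non-constant $a$ and solve the resulting Riccati equation for $\psi$ on a short interval.) Either way this yields a bona fide generalized Einstein manifold, and since $\pi\not\equiv0$ its torsion is genuinely non-vanishing.

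What remains is to check that $g$ is not static, and this is the only genuinely delicate step: a conspicuously time-dependent warp factor can still be isometric to a domain inside Minkowski, de Sitter or anti--de Sitter space, each of which is static on suitable patches, and indeed the excluded value $\psi=0$ forces $a\propto e^{Ht}$, i.e.\ de Sitter. To rule this out I would compute the Levi-Civita scalar curvature $\overset{\circ}{R}=6(\dot H+2H^{2})$, which for the solution above equals $12K^{2}e^{2ct}-18Kc\,e^{ct}+12c^{2}$, a non-constant function of $t$ alone (for the softer variant one simply picks $a$, e.g.\ $a(t)=e^{t^{2}}$, so that $\overset{\circ}{R}=12+48t^{2}$). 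If $g$ were static it would admit a timelike Killing field $\xi$; since a Killing field annihilates every curvature scalar, $0=\xi(\overset{\circ}{R})=\xi^{0}\,\partial_{0}\overset{\circ}{R}$, forcing $\xi^{0}\equiv0$ on the open dense set where $\partial_{0}\overset{\circ}{R}\neq0$; but there $\xi=\xi^{i}\partial_{i}$ has $g(\xi,\xi)=a^{2}\delta_{ij}\xi^{i}\xi^{j}\geq0$, so $\xi$ is nowhere timelike, a contradiction. The main labour is the curvature bookkeeping feeding into the single ODE, but that is entirely routine (and the semi-symmetric curvature formulae used were already recorded); the real content is the resulting underdetermination and the Killing-vector argument that pins down non-staticity.
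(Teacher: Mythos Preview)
Your proof is correct and follows the same blueprint as the paper: spatially flat FLRW metric plus a semi-symmetric metric connection with $\pi=\psi(t)\,\df t$, substituted into the characterization $\overset{\circ}{R}_{\mu\nu}-2\overset{\circ}{\nabla}_{(\mu}\pi_{\nu)}+2\pi_{\mu}\pi_{\nu}+\tfrac12 g_{\mu\nu}\overset{\circ}{\nabla}_{\lambda}\pi^{\lambda}-\tfrac12 g_{\mu\nu}\pi_{\lambda}\pi^{\lambda}=\tfrac14 g_{\mu\nu}\overset{\circ}{R}$, which indeed collapses the $00$- and $ij$-components to the single ODE $\dot H=\psi^{2}+H\psi-\dot\psi$.

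The only genuine divergence is in how you exploit the underdetermination. The paper freezes $H=H_{0}$ and solves the resulting Bernoulli equation $\dot\psi=\psi^{2}+H_{0}\psi$ for $\psi$; you freeze $\psi=c$ and integrate the linear equation $\dot H=c^{2}+cH$ for $H$. Your choice is arguably cleaner for the stated purpose: the paper's metric is exactly de Sitter in planar slicing (an ordinary Einstein metric, and one with static patches), so the paper's ``non-static'' claim leans on the informal reading ``time-dependent coefficients'' rather than on the absence of a timelike Killing field. Your solution has non-constant Levi-Civita scalar curvature, and your Killing-field argument (any timelike Killing field must annihilate $\overset{\circ}{R}$, forcing $\xi^{0}=0$ and hence $g(\xi,\xi)\geq 0$) turns this into an honest proof of non-staticity. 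As a bonus, non-constant $\overset{\circ}{R}$ also shows directly that $g$ is not an Einstein metric in the ordinary sense, which is the content of the paper's subsequent corollary.
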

\begin{proof}
    The proof is constructive. We start by choosing $M=\mathbb{R}^{4}$ and equip with with the metric
    \begin{equation}
        ds^2=-dt^2+a(t)^2 \delta_{ij} dx^i dx^j,
    \end{equation}
    where we assume that $\frac{\dot a}{a}=H_0$ is a non-zero constant. Moreover, we equip $\left(\mathbb{R}^{4},ds^2 \right)$ with a semi-symmetric connection. According to proposition , the considered manifold is a generalized Einstein manifold iff the equation
    \begin{equation}
          \overset{\circ}{R}_{\mu \nu} - \overset{\circ}{\nabla}_{\mu} \pi_{\nu} - \overset{\circ}{\nabla}_{\nu} \pi_{\mu}+ 2 \pi_{\nu} \pi_{\mu} +\frac{1}{2} g_{\mu \nu} \overset{\circ}{\nabla}_{\lambda} \pi^{\lambda} - \frac{1}{2} g_{\mu \nu} \pi_{\lambda} \pi^{\lambda}=\frac{1}{4} g_{\mu \nu} \overset{\circ}{R}
    \end{equation}
    is satisfied. As the metric possesses high symmetry, we choose
    \begin{equation}
        \pi_{\mu}=(\psi(t),0,0,0) \iff \pi^{\mu}=(-\psi(t),0,0,0).
    \end{equation}
    Hence, we have to satisfy the following system of differential equations for $\psi(t)$
    \begin{equation}
        -3\frac{\ddot a}{a} -\dot \psi - \dot \psi +2 \psi^2-\frac{1}{2} \left(- \dot \psi -3 \frac{\dot a}{a} \psi \right) - \frac{1}{2}\psi^2=-\frac{6}{4} \left( \frac{\ddot a}{a}+\frac{\dot a^2}{a^2} \right),
    \end{equation}
    \begin{equation}
        a \ddot a +2 \dot a^2 +2 a \dot a \psi +\frac{1}{2}a^2 \left(- \dot \psi - 3 \frac{\dot a}{a}\psi \right) +\frac{1}{2} a^2 \psi^2=\frac{6}{4} a^2\left( \frac{\ddot a}{a}+\frac{\dot a^2}{a^2} \right).
    \end{equation}
    Rearranging, algebraically simplifying, we get
    \begin{equation}
        -\frac{3}{2} \frac{ \ddot a}{a}+\frac{3}{2} \frac{\dot a^2}{a^2} -\frac{3}{2} \dot \psi + \frac{3}{2}\psi^2 +\frac{3}{2} \frac{\dot a}{a} \psi=0,
    \end{equation}
    \begin{equation}
        -\frac{1}{2} \frac{\ddot a}{a} + \frac{1}{2} \frac{\dot a^2}{a^2} + \frac{1}{2} \frac{\dot a}{a} \psi - \frac{1}{2} \dot \psi +\frac{1}{2} \psi^2=0.
    \end{equation}
    By introducing the Hubble parameter $H_0=\frac{\dot a}{a}$, using the assumption that $H_0$ is constant, we obtain
 \begin{equation}
        -\frac{3}{2} \dot \psi + \frac{3}{2} \psi^2 + \frac{3}{2} H_0 \psi=0,
    \end{equation}
    \begin{equation}
        \frac{1}{2} H_0 \psi - \frac{1}{2} \dot \psi + \frac{1}{2} \psi^2=0.
    \end{equation}
    We can see that the two equations are identical. Hence, we can solve for example the first one, i.e.
    \begin{equation}
        \dot \psi - \psi^2 - H_0 \psi=0,
    \end{equation}
   which is a Bernoulli type differential equation. The solution is given by 
    \begin{equation}
        \psi(t)=-\frac{H_0 e^{H_0(t_0+t)}}{e^{H_0(t_0+t)}-1},
    \end{equation}
    where $t_0$ is an arbitrary integration constant fixed by the initial condition $\psi(0)$.
\end{proof}
\begin{corollary}
    There exists a generalized Einstein manifold, whose metric $g$ is not an Einstein metric.
\end{corollary}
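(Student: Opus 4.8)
The plan is to reuse the construction from the proof of Theorem~\ref{Einsteinmanifoldtheorem}, but to drop the hypothesis that $\dot a/a$ is constant. First I would observe that, for the FLRW ansatz $ds^2=-dt^2+a(t)^2\delta_{ij}dx^idx^j$ together with $\pi_\mu=(\psi(t),0,0,0)$ and the associated semi-symmetric connection, the characterization of a generalized Einstein manifold furnished by the proposition preceding Theorem~\ref{Einsteinmanifoldtheorem} reduces --- exactly as in that proof, at the stage \emph{before} the constancy assumption is used --- to the single scalar relation
\begin{equation*}
-\frac{\ddot a}{a}+\frac{\dot a^2}{a^2}-\dot\psi+\psi^2+\frac{\dot a}{a}\psi=0 ,
\end{equation*}
because the $00$ and the spatial $ij$ components of condition $(ii)$ collapse to one and the same equation for an arbitrary profile $a(t)$.

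Next I would exhibit an explicit solution valid for \emph{every} positive $a(t)$: taking $\psi(t)=-\dot a(t)/a(t)$ gives $\dot\psi=-\ddot a/a+\dot a^2/a^2$, and substituting this into the displayed relation turns it into an identity. Hence $(M,g)$, with $M=\{t>0\}\times\mathbb{R}^3$, any positive smooth $a$, the one-form $\pi_\mu=(-\dot a/a,0,0,0)$, and the corresponding semi-symmetric connection, is a generalized Einstein manifold.

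Then I would pick $a$ so that $g$ itself fails to be an Einstein metric, the simplest choice being $a(t)=t$. A short computation with the FLRW Levi-Civita Ricci tensor, $\overset{\circ}{R}_{00}=-3\ddot a/a$ and $\overset{\circ}{R}_{ij}=(a\ddot a+2\dot a^2)\delta_{ij}$, shows that $\overset{\circ}{R}_{\mu\nu}=\lambda g_{\mu\nu}$ forces $\lambda=3\ddot a/a$ from the $00$ component and then $\dot a^2=a\ddot a$ from the spatial components, i.e. $\tfrac{d}{dt}(\dot a/a)=0$; for $a(t)=t$ this is violated (indeed $\overset{\circ}{R}_{00}=0$ but $\overset{\circ}{R}_{ij}=2\delta_{ij}\neq 0$), so $g$ is not an Einstein metric while $(M,g)$ is nonetheless a generalized Einstein manifold, which proves the corollary.

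The only point that genuinely needs care is the reduction of condition $(ii)$ to the single scalar ODE without the constancy hypothesis: one must recheck that the $00$ and spatial equations really do coincide for general $a(t)$ (they do, after dividing the spatial one by $a^2$), and that a non-exponential $a$ such as $a(t)=t$ genuinely violates the Einstein condition. Both verifications are routine algebra, so no real obstacle is anticipated.
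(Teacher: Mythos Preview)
Your argument is correct. The paper offers no separate proof for this corollary, placing it immediately after Theorem~\ref{Einsteinmanifoldtheorem} as though it follows directly; but the metric in that theorem has $\dot a/a=H_0$ constant, i.e.\ it is the de~Sitter metric, and de~Sitter \emph{is} an Einstein metric ($\overset{\circ}{R}_{\mu\nu}=3H_0^{2}\,g_{\mu\nu}$). So the theorem's explicit example does not itself witness the corollary. Your route is genuinely different and supplies what is missing: you observe that in the paper's own derivation the simplified $00$- and $ij$-components of the generalized Einstein condition already coincide (they differ only by an overall factor of~$3$) \emph{before} constancy of $H$ is imposed, so the single scalar ODE holds for arbitrary $a(t)$; you then exhibit the closed-form solution $\psi=-\dot a/a$ valid for every scale factor, and specialize to $a(t)=t$, where $\overset{\circ}{R}_{00}=0$ while $\overset{\circ}{R}_{ij}=2\delta_{ij}$, visibly non-Einstein. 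What your approach buys is an actual proof of the corollary together with a whole family of examples, one for each non-exponential~$a$.
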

\subsection{Kinematics of Curves in Schrödinger Geometry}
In this section, we derive the equations governing the kinematics of congruence of curves on a manifold equipped with a Schrödinger connection. These equations describe the evolution of the irreducible components of the transverse part of the velocity gradient, `velocity' being the tangent vector to the congruence. The equations are essentially geometric identities and thus, they are kinematic in nature: no dynamics are assumed in deriving them. The Raychaudhuri equation (RE) for timelike congruences in metric-affine gravity was first obtained in \cite{Iosifidis_2018}, while further investigations were done in \cite{agakine}. In the present paper, we will derive these equations both for timelike and null congruences in Schrödinger geometry. For null congruences in Riemannian geometry, the equation was first obtained by Sachs \cite{sachs_1961}, hence we will call it the Sachs equation. Finally, we will extend the Lagrangian formulation proposed in \cite{Agashe_2024} to the case of null congruences.

\subsubsection{The Raychaudhuri Equation}
From now on, we adopt the modified gravity notation for the $U$-tensor, as we will discuss physical implications. In this setting, a Schrödinger connection corresponds to an affine connection with zero torsion and non-metricity satisfying
\begin{equation}
    Q_{(\mu \nu \rho)}=0.
\end{equation}
Let $u^{\alpha}$ be a tangent vector field to a congruence of timelike curves on a semi-Riemannian manifold equipped with a Schrödinger connection. We take $u^{\alpha}$ to be normalized
\begin{equation}
    u^\alpha u_\alpha = -1.
\end{equation}
By treating the tangent vector to be of fixed length, we are restricting our calculations to congruences of autoparallel curves with respect to either the Schr\"odinger connection or the Levi-Civita connection.

Since we are interested in the transverse properties of the congruence, we introduce the transverse metric
\begin{equation}
    h_{\alpha\beta} = g_{\alpha\beta} + u_\alpha u_\beta.
\end{equation}
It is easy to check that $h_{\alpha \beta}$ satisfies  $ u^{\alpha} h_{\alpha \beta}=0=h_{\alpha \beta} u^{\beta} $ and $g^{\alpha \beta} h_{\alpha \beta}=3$. With the help of $h_{\alpha \beta}$ we can separate the velocity gradient $\nabla_{\alpha} u^{\beta}$ into transverse and longitudinal parts
\begin{equation}\label{transvelgrad}
    ^{{\rm (T)}}\nabla_\alpha u^\beta = {h^\rho}_\alpha \tensor{h}{^\beta _\epsilon} \nabla_\rho u^\epsilon, \; \;
    ^{{\rm (L)}}\nabla_\alpha u^\beta= \nabla_\alpha u^\beta - ^{{\rm (T)}}\nabla_\alpha u^\beta .
\end{equation}
Using the definition of the transverse metric, one obtains
\begin{equation}
    ^{{\rm (T)}}\nabla_\alpha u^\beta = \nabla_\alpha u^\beta + u_\alpha A^\beta - Q_{\rho\epsilon\alpha}u^\rho u^\epsilon u^\beta,
\end{equation}
where $A^\beta := u^\alpha\nabla_\alpha u^\beta$ is the path-acceleration of the curves. For curves that are autoparallel with respect to the Schr\"odinger connection, the path-acceleration vanishes by definition. On the other hand, for the autoparallel curves with respect to the Levi-Civita connection, we have, $A^\beta = -{Q^\beta}_{\alpha\sigma}u^\alpha u^\sigma$. Since both these curves are geometrically/physically interesting, we will keep the path-acceleration term as it is in the calculations that follow. Then, the kinematics of a particular kind of congruence could be found by replacing the path-acceleration to be either of the two choices. 

The velocity gradient is decomposed into its irreducible components in the following manner
\begin{equation} \label{kinedef}
    \theta := {h^\alpha}_\beta \nabla_\alpha u^\beta , \quad
    {\omega_\alpha}^\beta := {h^\rho}_{[\alpha}{h^{\beta]}}_\epsilon \nabla_\rho u^\epsilon, \quad
    {\sigma_\alpha}^\beta := {h^\rho}_{(\alpha}{h^{\beta)}}_\epsilon \nabla_\rho u^\epsilon - \frac{1}{3} \theta {h_\alpha}^\beta,
\end{equation}
where $\theta$ is the trace part and is called expansion, ${\omega_\alpha}^\beta$ is the anti-symmetric part and is called rotation or vorticity, and ${\sigma_\alpha}^\beta$ is the symmetric-traceless part and is called shear. Using these, one can write

\begin{equation}\label{velgrad}
    \nabla_\alpha u^\beta = \frac{1}{3} \theta {h_\alpha}^\beta + {\omega_\alpha}^\beta + {\sigma_\alpha}^\beta - u_\alpha A^\beta + Q_{\rho\epsilon\alpha}u^\rho u^\epsilon u^\beta.
\end{equation}

The kinematic equations are simply the evolution equations of the three quantities defined above. In the following, we will derive (in detail) the evolution equation for the expansion $\theta$, which is also called the Raychaudhuri equation. The RE (and its null counterpart) is extremely important in relativistic cosmology and many other physical contexts \cite{ellisbook,schneidbook,hawkell,jacob,kar}. Using the definition of the expansion scalar, we can write
\begin{equation}
    \frac{D\theta}{d\lambda} = u^\rho\nabla_\rho \theta =  u^\rho\nabla_\rho \left( {h^\alpha}_\beta \nabla_\alpha u^\beta \right) = \delta^\alpha_\beta u^\rho\nabla_\rho \nabla_\alpha u^\beta ,
\end{equation}
where $\lambda$ is some parameter that parametrizes the curves, and the second equality is obtained by using 
\begin{equation}
    u^\rho A_\rho=u_\rho A^\rho=0, \; \; \text{and} \; \; \nabla_{\rho} \delta^{\alpha}_\beta=0.
\end{equation}
To derive the Raychaudhuri equation, we employ the Ricci identity
\begin{equation}\label{ricciid}
    \nabla_{[\rho} \nabla_{\alpha]} u^\beta = \frac{1}{2} {R^\beta}_{\epsilon\rho\alpha}u^\epsilon.
\end{equation}
Using this, the above equation becomes
\begin{equation}
    \frac{\Df\theta}{\df\lambda} = -R_{\alpha\beta}u^\alpha u^\beta + \nabla_\alpha A^\alpha - \nabla_\alpha u^\beta \nabla_\beta u^\alpha.
\end{equation}
Using equation \eqref{velgrad}, it can be shown that
\begin{equation}
    \nabla_\alpha u^\beta \nabla_\beta u^\alpha = \frac{1}{3}\theta^2 - \omega^2 + \sigma^2 - Q_{\alpha\beta\epsilon}A^\alpha u^\beta u^\epsilon.
\end{equation}
where, we have defined $\omega^2 := {\omega_\alpha}^\beta {\omega^\alpha}_\beta$ and $\sigma^2 := {\sigma_\alpha}^\beta {\sigma^\alpha}_\beta$. This simplifies the previous equation to give
\begin{equation}\label{rayeq}
    \frac{\Df\theta}{\df\lambda} = -R_{\alpha\beta}u^\alpha u^\beta - \frac{1}{3}\theta^2 + \omega^2 - \sigma^2  + \nabla_\alpha A^\alpha + Q_{\alpha\beta\epsilon}A^\alpha u^\beta u^\epsilon,
\end{equation}
which is the famous Raychaudhuri equation. Note that this is essentially a geometric identity. However, it can be converted into a dynamical equation when used in conjunction with (assumed) field equations. In the absence of the non-metricity, it reduces to the usual RE in Riemannian geometry.

\subsubsection{The Sachs Optical Equation}
We employ the same geometric setup as in the previous subsection to derive the null counterpart of the RE. Let $k^\alpha$ be a tangent vector field to a congruence of null curves on a semi-Riemannian manifold $(M,g)$ equipped with a Schrödinger connection characterized by the non-metricity $\boldsymbol{Q}$. As $k^\alpha$ is tangent to a congruence of null curves, we have
\begin{equation}
    k^\alpha k_\alpha = 0.
\end{equation}
In this case, the transverse metric is given by
\begin{equation}
    h_{\alpha\beta} = g_{\alpha\beta} + k_\alpha n_\beta + n_\alpha k_\beta,
\end{equation}
where $n_\alpha$ is an auxiliary null vector field, such that $n^\alpha n_\alpha = 0$ and $n^\alpha k_\alpha = -1$. Once again, it is easy to check that $k^\alpha h_{\alpha\beta} = 0 =h_{\alpha\beta}u^\beta$, $n^\alpha h_{\alpha\beta} = 0 =h_{\alpha\beta}n^\beta$, and $g^{\alpha\beta} h_{\alpha\beta} = 2$.

We note again that the calculations presented here are valid for autoparallel curves with respect to either Schr\"odinger or Levi-Civita connection. This is because only for these two types of curves, the tangent vectors have fixed lengths. Arbitrary null curves may not remain null when parallel transported and thus deriving the kinematic equations will become much more complicated. The same is true in the presence of an arbitrary non-metricity as well.

Using the transverse metric, the velocity gradient can be separated into transverse and longitudinal parts in the same manner as equation \eqref{transvelgrad}. Furthermore, we can write the irreducible components of the transverse part as
\begin{equation} \label{nullexp}
    \Theta := {h^\alpha}_\beta \nabla_\alpha k^\beta ,  \quad
    {\Omega_\alpha}^\beta := {h^\rho}_{[\alpha}{h^{\beta]}}_\epsilon \nabla_\rho k^\epsilon, \quad
    {\Sigma_\alpha}^\beta := {h^\rho}_{(\alpha}{h^{\beta)}}_\epsilon \nabla_\rho k^\epsilon - \frac{1}{2} \Theta {h_\alpha}^\beta,
\end{equation}
where $\Theta,\ \Omega$, and $\Sigma$ are called the expansion, rotation, and shear, respectively. With the help of these, one can write
\begin{equation}\label{velgrad1}
    \nabla_\alpha k^\beta = \frac{1}{2} \Theta {h_\alpha}^\beta + {\Omega_\alpha}^\beta + {\Sigma_\alpha}^\beta - {Y_\alpha}^\beta,
\end{equation}
where we have defined
\begin{multline}\label{ydef}
    {Y_\alpha}^\beta := k^\beta n_\rho \nabla_\alpha k^\rho + n^\beta k_\rho \nabla_\alpha k^\rho + k_\alpha n^\rho \nabla_\rho k^\beta + k_\alpha k^\beta n_\epsilon n^\rho \nabla_\rho k^\epsilon + k_\alpha n^\beta k_\rho n^\epsilon \nabla_\epsilon k^\rho + n_\alpha A^\beta + n_\alpha k^\beta n_\rho A^\rho  .
\end{multline}

Using the definition the expansion scalar \eqref{nullexp}, it can be seen that
\begin{equation}
    \frac{\Df \Theta}{\df \lambda} = k^\rho \nabla_\rho \left( {h^\alpha}_\beta \nabla_\alpha k^\beta \right) = \delta^\alpha_\beta k^\rho\nabla_\rho\nabla_\alpha k^\beta + k^\rho \nabla_\rho \left( n_\alpha A^\alpha - Q_{\alpha\beta\epsilon}k^\alpha k^\beta n^\epsilon \right).
\end{equation}
Using the Ricci identity \eqref{ricciid}, we can evaluate the first term above and write,
\begin{equation}\label{expeqnull1}
    \frac{\Df \Theta}{\df \lambda} = -R_{\alpha\beta}k^\alpha k^\beta + \nabla_\alpha A^\alpha - \nabla_\alpha k^\beta \nabla_\beta k^\alpha + k^\rho\nabla_\rho\left( n_\alpha A^\alpha\right) - k^\rho \nabla_\rho \left( Q_{\alpha\beta\epsilon}k^\alpha k^\beta n^\epsilon \right).
\end{equation}
In a similar manner as in the case of timelike curves, we can use equation \eqref{velgrad1} to show
\begin{equation}\label{sqterms}
    \nabla_\alpha k^\beta \nabla_\beta k^\alpha = \frac{1}{2}\Theta^2 - \Omega^2 + \Sigma^2 + {Y_\alpha}^\beta {Y_\beta}^\alpha,
\end{equation}
where we have defined $\Omega^2 := {\Omega_\alpha}^\beta {\Omega^\alpha}_\beta$, $\Sigma^2 := {\Sigma_\alpha}^\beta {\Sigma^\alpha}_\beta$. Using the definition in equation \eqref{ydef}, the last term on the right hand side turns out to be completely in terms of the path acceleration and the non-metricity
\begin{equation}\label{ysqterm}
    {Y_\alpha}^\beta {Y_\beta}^\alpha = -2n_\alpha A^\beta \nabla_\beta k^\alpha - \left( n_\alpha A^\alpha \right)^2 - Q_{\alpha\beta\rho}k^\beta k^\rho n^\epsilon \nabla_\epsilon k^\alpha - \left( Q_{\alpha\beta\rho}k^\alpha k^\beta n^\rho \right)^2. 
\end{equation}
Using equations \eqref{sqterms} and \eqref{ysqterm} in \eqref{expeqnull1}, we finally get
\begin{multline}\label{sachseq}
    \frac{\Df \Theta}{\df \lambda} = -R_{\alpha\beta}k^\alpha k^\beta - \frac{1}{2}\Theta^2 + \Omega^2 - \Sigma^2 + \nabla_\alpha A^\alpha + k^\rho\nabla_\rho\left( n_\alpha A^\alpha\right) + 2n_\alpha A^\beta \nabla_\beta k^\alpha + \left( n_\alpha A^\alpha \right)^2\\ - k^\rho \nabla_\rho \left( Q_{\alpha\beta\epsilon}k^\alpha k^\beta n^\epsilon \right) + Q_{\alpha\beta\rho}k^\beta k^\rho n^\epsilon \nabla_\epsilon k^\alpha + \left( Q_{\alpha\beta\rho}k^\alpha k^\beta n^\rho \right)^2 ,
\end{multline}
which is the Sachs optical equation within our geometry. Again, it is easy to see that in the absence of non-metricity, the above equation reduces to its usual form in the Riemannian case.

\subsubsection{Lagrangian Formulation}
Although both the Raychaudhuri and Sachs equations are first order differential equations, they can be converted into second order differential equations by relating the timelike and null expansion scalars to the fractional rate of change of cross sectional volume and area, respectively. Doing this enables one to define Lagrangians such that the corresponding Euler-Lagrange equations give the Raychaudhuri and Sachs equations. This has been previously done in Riemannian geometry \cite{chakra,horw,alsa}. A non-Riemannian extension for timelike curves was presented in \cite{Agashe_2024}. Working in Schr\"odinger geometry (fixed length vectors) enables us to extend this to null curves.

In order to relate the expansion scalar of null congruences to their cross sectional area, one first has to introduce the notion of cross section. A formal way of doing this can be found in \cite{poisson}. Following the same treatment, it can be shown that
\begin{equation}\label{areachange}
    \frac{1}{A} \frac{\Df A}{\df \lambda} = \Theta - \left( n_\alpha A^\alpha - Q_{\alpha\beta\rho} k^\alpha k^\beta n^\rho - {Q^\alpha}_{\alpha\rho}k^\rho \right),
\end{equation}
where $A$ is the cross sectional area of the congruence. Let us define a scalar function, $\phi$, such that
\begin{equation}
    \phi = {\rm e}^{\int \left(n_\alpha A^\alpha - Q_{\alpha\beta\rho} k^\alpha k^\beta n^\rho - {Q^\alpha}_{\alpha\rho}k^\rho \right) \df \lambda} \quad \Rightarrow \frac{\Df \ln \phi}{\df \lambda} = n_\alpha A^\alpha - Q_{\alpha\beta\rho} k^\alpha k^\beta n^\rho - {Q^\alpha}_{\alpha\rho}k^\rho.
\end{equation}
Using this, equation \eqref{areachange} becomes
\begin{equation}\label{areachange1}
    \frac{1}{(\phi A)}\frac{\Df (\phi A)}{\df \lambda} = \Theta.
\end{equation}
Taking a derivative with respect to $\lambda$ again, we get
\begin{equation}\label{areader}
    \frac{1}{(\phi A)}\frac{\Df^2 (\phi A)}{\df \lambda^2} - \left[\frac{1}{(\phi A)}\frac{\Df (\phi A)}{\df \lambda}\right]^2 = \frac{\Df \Theta}{\df \lambda}.
\end{equation}
Using equations \eqref{areachange1} and \eqref{areader} in the Sachs equation \eqref{sachseq}, we obtain
\begin{multline}\label{sachssecorder}
    \frac{1}{(\phi A)}\frac{\Df^2 (\phi A)}{\df \lambda^2} - \frac{1}{2}\left[\frac{1}{(\phi A)}\frac{\Df (\phi A)}{\df \lambda}\right]^2 = -R_{\alpha\beta}k^\alpha k^\beta + \Omega^2 - \Sigma^2 + \nabla_\alpha A^\alpha + k^\rho\nabla_\rho\left( n_\alpha A^\alpha\right) + 2n_\alpha A^\beta \nabla_\beta k^\alpha \\+ \left( n_\alpha A^\alpha \right)^2 - k^\rho \nabla_\rho \left( Q_{\alpha\beta\epsilon}k^\alpha k^\beta n^\epsilon \right) + Q_{\alpha\beta\rho}k^\beta k^\rho n^\epsilon \nabla_\epsilon k^\alpha + \left( Q_{\alpha\beta\rho}k^\alpha k^\beta n^\rho \right)^2.
\end{multline}
Thus, the Sachs equation is converted into a second order differential equation.  

Now, treating the scalar, $\phi A =: q$, as a dynamical degree of freedom, one can define a Lagrangian in the following manner
\begin{equation}
    \mathcal{L} = \frac{1}{2} \left[\frac{1}{q}\frac{\Df q}{\df \lambda}\right]^2 - V(q),
\end{equation}
where the first term on the left hand side can be interpreted as a kinetic term (square of first derivative) and $V(q)$ is a potential to be fixed later . It is easy to check that the Euler-Lagrange equation for the above Lagrangian is given by
\begin{equation}\label{eleq}
    \frac{1}{q}\frac{\Df^2 q}{\df \lambda^2} - \frac{1}{2} \left[ \frac{1}{q} \frac{\df q}{\df \lambda} \right]^2 = -\frac{\partial V}{\partial q}.
\end{equation}
Comparing equations \eqref{sachssecorder} and \eqref{eleq}, it is clear that the Euler-Lagrange equation above is precisely the Sachs equation with the potential defined using the following constraint equation
\begin{multline}
    -\frac{\partial V}{\partial q} = R_{\alpha\beta}k^\alpha k^\beta + \Omega^2 - \Sigma^2 + \nabla_\alpha A^\alpha + k^\rho\nabla_\rho\left( n_\alpha A^\alpha\right) + 2n_\alpha A^\beta \nabla_\beta k^\alpha + \left( n_\alpha A^\alpha \right)^2 \\- k^\rho \nabla_\rho \left( Q_{\alpha\beta\epsilon}k^\alpha k^\beta n^\epsilon \right) + Q_{\alpha\beta\rho}k^\beta k^\rho n^\epsilon \nabla_\epsilon k^\alpha + \left( Q_{\alpha\beta\rho}k^\alpha k^\beta n^\rho \right)^2.
\end{multline}

It is important to note here that although the Sachs equation can be written as the Euler-Lagrange equation for the above Lagrangian, it is not a dynamical equation in the sense that it does not tell us the dynamics of the space-time. The dynamics of the space-time are always given by a theory of gravity (field equations). However, in writing the Lagrangian formulation, we are treating the congruence itself as a dynamical system that is evolving according to the geometry it resides in. Therefore, the treatment above will prove useful in studies of geometric flows within the context of Schr\"odinger geometry. Moreover, starting with the Lagrangian given above, one can easily derive a Hamiltonian for such geometric flows which might prove to be useful in canonical quantization procedures.

\subsection{Cosmological Distances}
In cosmology, the light coming from distant sources holds the information about cosmological distances. This makes the distances closely related to the trajectories of bundles of photons. These photon bundles can be treated as null congruences. Then, the angular diameter distance, $D_A$, is proportional to the cross sectional area, $A$, of such congruences. In general, one can write
\begin{equation}
    A \propto D_A^2.
\end{equation}

In the usual formalism, since the cross sectional area is directly related to the expansion scalar $\Theta$, one can derive an equation for the angular diameter distances using the Sachs equation. However, in our case, we have seen in equation \eqref{areachange1} that the expansion scalar is actually related to the area weighted by a scalar factor, $\phi$, that is given in terms of the non-metricity $Q$ of the geometry. Therefore, we modify the above proportionality to write, 
\begin{equation}
    \phi A \propto \phi D_A^2.
\end{equation}
Let us define a `weighted' angular diameter distance as
\begin{equation}
    \Delta_A = \sqrt{\phi} D_A.
\end{equation}
Using this, the above proportionality becomes
\begin{align}
    \phi A &\propto \Delta_A ^2\\
    \Rightarrow\ \frac{\Df (\phi A)}{\df \lambda} &\propto 2 \Delta_A \frac{\Df \Delta_A}{\df \lambda}.
\end{align}
Combining the two proportionalities above, we get
\begin{equation}
    \frac{1}{(\phi A)}\frac{\Df (\phi A)}{\df \lambda} = \frac{2}{\Delta_A} \frac{\Df \Delta_A}{\df \lambda}.
\end{equation}
Using equation \eqref{areachange1} in this, we can write
\begin{equation}
    \frac{2}{\Delta_A} \frac{\Df \Delta_A}{\df \lambda} = \Theta.
\end{equation}
Again taking a derivative on both sides leads to
\begin{equation}
    \frac{2}{\Delta_A} \frac{\Df^2 \Delta_A}{\df \lambda^2} - \frac{1}{2} \left[ \frac{2}{\Delta_A} \frac{\Df \Delta_A}{\df \lambda} \right]^2 = \frac{\Df \Theta}{\df \lambda}.
\end{equation}
Using the two equations above in the Sachs equation \eqref{sachseq}, we finally get an equation relating the angular diameter distance to the geometric quantities
\begin{multline}
    \frac{\Df^2 \Delta_A}{\df \lambda^2} = \frac{1}{2} \left[ -R_{\alpha\beta}k^\alpha k^\beta + \Omega^2 - \Sigma^2 + \nabla_\alpha A^\alpha + k^\rho\nabla_\rho\left( n_\alpha A^\alpha\right) + 2n_\alpha A^\beta \nabla_\beta k^\alpha + \left( n_\alpha A^\alpha \right)^2 \right.\\ \left. - k^\rho \nabla_\rho \left( Q_{\alpha\beta\epsilon}k^\alpha k^\beta n^\epsilon \right) + Q_{\alpha\beta\rho}k^\beta k^\rho n^\epsilon \nabla_\epsilon k^\alpha + \left( Q_{\alpha\beta\rho}k^\alpha k^\beta n^\rho \right)^2 \right] \Delta_A.
\end{multline}
The above equation is quite general and holds for arbitrary cosmological space-times in Schr\"odinger geometry, but it could be specialized to the case of Yano-Schrödinger geometry by taking the non-metricity to be vectorial.

\section{Yano-Schrödinger Cosmology}\label{section4}

We turn our attention to the possibility of understanding the universe, from two distinct perspectives, using a Yano-Schrödinger connection. First, we fix the geometry and propose a model, in which dynamics of the non-metricity is imposed by a physically reasonable assumption. We then adopt a different perspective, inspired by metric-affine gravity, treating the connection and the metric to be related by the connection field equations. In this case, we take as matter a novel type of Perfect Hyperfluid, which sources the Yano-Schrödinger connection. In both approaches, we compare how well our proposed models fit the observational data of the Hubble function and the standard $\Lambda$CDM paradigm.

\subsection{Metric Approach}
Let us start the investigation by fixing the geometry, and consequently the Einstein field equations. Hence, we \textit{postulate} that
\begin{equation}
    R_{(\mu \nu)}-\frac{1}{2}R g_{\mu \nu}=8\pi T_{\mu \nu},
\end{equation}
where $R_{\mu \nu}$ and $R$ are the Ricci tensor and Ricci scalar of the Yano-Schrödinger geometry, respectively. Using equations \eqref{riccitensoryanoschrodinger} and \eqref{ricciscalaryanoschrodinger} we obtain
\begin{equation}\label{einsteinequation}
    \overset{\circ}{R}_{\mu \nu} - \frac{1}{2} g_{\mu \nu} \overset{\circ}{R} -\frac{5}{4} g_{\mu \nu} \overset{\circ}{\nabla}_{\alpha} \pi^{\alpha}+\frac{1}{4} \left(\overset{\circ}{\nabla}_{\mu} \pi_{\nu} + \overset{\circ}{\nabla}_{\nu} \pi_{\mu} \right)+ \frac{5}{8} g_{\mu \nu} \pi^{\alpha} \pi_{\alpha}-\frac{1}{4} \pi_{\nu} \pi_{\mu}=8 \pi T_{\mu \nu}.
\end{equation}

In the limit $\pi \to 0$, i.e. by passing to semi-Riemannian geometry, we recover Einstein's original equations. The extra five terms in our equations come from the considered type of non-metricity, and we think of these as a type of dark energy, which is introduced by the modified geometry. Right now, it might not be obvious why we think of these terms this way. However, this idea will make more sense when we apply the field equations to a specific model of the universe.
\subsubsection{The Generalized Friedmann Equations}\label{CosmologyFriedmann}
The starting point to derive the Friedmann equations in Yano-Schrödinger geometry is the generalized Einstein equation \eqref{einsteinequation}.
We consider an isotropic, homogeneous and spatially flat FLRW metric, which is described by
\begin{equation}
    ds^2=-dt^2+a^2(t) \delta_{ij} dx^i dx^j,
\end{equation}
where latin indices take the values $1,2,3$. For the matter content, we choose a perfect fluid 
\begin{equation}
    T_{\mu \nu}=\rho u_\mu u_\nu +p(u_\mu u_\nu+g_{\mu \nu}).
\end{equation}
The problem is taken into account in a comoving coordinate system, where 
\begin{equation}
    u_{\nu}=(1,0,0,0) \iff u^{\nu}=(-1,0,0,0).
\end{equation}
As we are in a highly symmetric case, we choose
\begin{equation}
    \pi_\nu=(\Pi(t),0,0,0) \iff \pi^{\nu}=(-\Pi(t),0,0,0).
\end{equation}
From now on, we won't write out explicitly the time-dependence of $a(t)$. With the above  assumptions, a derivation detailed in  \ref{appendixB} gives the following evolution equations
\begin{equation}\label{Friedmann1}
    3H^2=8\pi \rho +\frac{3}{4} \dot \Pi +\frac{15}{4} H \Pi - \frac{3}{8} \Pi^2=8\pi(\rho+\rho_{DE}),
\end{equation}
\begin{equation}\label{Friedmann2}
        3H^2+2 \dot{H}=-8\pi p +\frac{5}{4} \dot \Pi +\frac{13}{4} H \Pi - \frac{5}{8} \Pi^2=-8\pi(p+p_{DE}),
\end{equation}
where we have defined the additional terms coming from torsion 
\begin{equation}
    \rho_{DE}=\frac{1}{32\pi} \left(3 \dot \Pi + 15 H \Pi - \frac{3}{2} \Pi^2 \right), \; \; p_{DE}=\frac{1}{32\pi} \left(-5 \dot \Pi -13 H \Pi +\frac{5}{2} \Pi^2\right).
\end{equation}
The energy conservation equation
\begin{equation}\label{energyequation}
    \dot{\rho}+\dot{\rho}_{DE}+3H \left( \rho +\rho_{DE} + p + p_{DE} \right)=0
\end{equation}
can be equivalently reformulated as
\begin{equation}
    \dot\rho +3H(\rho +p) + \frac{1}{32\pi}\frac{d}{dt}  \left(3 \dot \Pi + 15 H \Pi - \frac{3}{2} \Pi^2 \right)+\frac{3H}{32 \pi} \left( -2 \dot \Pi +2 H \Pi +\Pi^2 \right)=0.\end{equation}
As an indicator of the accelerating/decelerating nature of the cosmological expansion we use the deceleration parameter
\begin{equation}
    q=\frac{d}{dt}\frac{1}{H}-1=-\frac{\dot H}{H^2}-1.
\end{equation}
In our concrete case, using the Friedmann equations, this can be computed:
\begin{equation}
    q=\frac{1}{2}+\frac{3}{2} \frac{p + p_{DE}}{\rho+\rho_{DE}}=\frac{1}{2}+\frac{3}{2}\frac{8 \pi p -\frac{1}{4} \left( 5 \dot{\Pi} + 13 H \Pi -\frac{5}{2} \Pi^2\right)}{8 \pi \rho + \frac{3}{4} \left( \dot \Pi + 5 H \Pi - \frac{1}{2} \Pi^2 \right)}.
\end{equation}
\subsubsection{De Sitter Solutions}
De Sitter solutions are of particular interest in cosmology, as they are attractors of cosmological models, described by the constancy of the Hubble function. For our theory, considering dust matter with $p=0, \rho \neq 0$ and $H=H_0=\text{constant}$, the Friedmann equations become
\begin{equation}
    3 H_0^2=8 \pi \rho +\frac{3}{4} \dot \Pi +\frac{15}{4} H_0 \Pi - \frac{3}{8} \Pi^2,
\end{equation}
\begin{equation}\label{friedmannsecond}
    3H_0^2=\frac{5}{4} \dot \Pi +\frac{13}{4} H_0 \Pi - \frac{5}{8} \Pi^2.
\end{equation}
The second Friedmann equation \eqref{friedmannsecond} admits an analytical solution, given by
\begin{equation}
    \Pi(t)=\frac{2 H_0 \left(-2+3 e^{\frac{7}{5}H_0(t+10t_0)} \right)}{-1+5 e^{\frac{7}{5}H_0(t+10t_0)}}.
\end{equation}
In the large time limit, we obtain
\begin{equation}
    \lim_{t \to \infty} \Pi(t)=\frac{6}{5} H_0.
\end{equation}
Then, from the first Friedmann equation, we obtain the time evolution of $\rho$, given as
\begin{equation}
    \rho(t)=\frac{3 H_0^2}{8 \pi} \left(1 - \frac{49  e^{\frac{7}{10}H_0(t+10t_0)}}{10\left( 1-5 e^{\frac{7}{5}H_0(t+10 t_0)} \right)} - \frac{10 \left(-2+3 e^{\frac{7}{5}H_0(t+10t_0)} \right)}{-4+20 e^{\frac{7}{5}H_0(t+10t_0)}} +  \frac{ -2+3 e^{\frac{7}{5}H_0(t+10t_0)} }{2\left(-1+5 e^{\frac{7}{5} H_0 (t+10t_0)} \right)^2}\right).
\end{equation}
In the late universe, the above solution converges to
\begin{equation}
    \lim_{t \to \infty} \rho(t)=\frac{3 H_0^2}{8 \pi} \left( 1 + \frac{49}{50} - \frac{3}{2} \right)=\frac{9}{50 \pi} H_0^2,
\end{equation}
which indicates that the energy condition $\rho>0$ is not violated at large time intervals.

If $\rho=0$, but $p \neq 0$, the first Friedmann equation \eqref{Friedmann1} takes the form
\begin{equation}\label{admitsolution}
    3H_0^2=\frac{3}{4} \dot \Pi +\frac{15}{4} H_0 \Pi -\frac{3}{8} \Pi^2. 
\end{equation}
Equation \eqref{admitsolution} admits an analytical solution, which reads
\begin{equation}
    \Pi(t)=\sqrt{17} H_0 \tanh \left(\frac{1}{2} \left(\sqrt{17} H_0 t_0 - \sqrt{17} H_0 t\right) \right)+5H_{0},
\end{equation}
where $t_0$ is an arbitrary integration constant. In the large time limit, we have
\begin{equation}
    \lim_{t \to \infty} \Pi(t)=-\sqrt{17}H_0+5 H_{0},
\end{equation}
that is,  $\Pi(t)$ takes positive values when $t$ is large. Then, from equation \eqref{Friedmann2}, we obtain the time evolution of the pressure
\begin{equation}
    p(t)=\frac{H_0^2}{8\pi} \Bigg \{-3  -\frac{85\sqrt{17}}{16}  H_0 \operatorname{sech}^2 \left( t_0 - t \right)  + \left( \frac{17}{2} H_0 \tanh \left(  t_0 - t\right) +5\right)\left( \frac{51}{8} -\frac{85}{16} H_0 \tanh \left( t_0 - t\right)  \right)\Bigg \}.
\end{equation}
In the large time limit, the pressure becomes a negative constant fully determined by the Hubble parameter
\begin{equation}
    \lim_{t \to \infty}p(t)=\frac{1}{8 \pi} \left( -3 H_0^2 -\frac{13}{4} \sqrt{17}H_0^2 +\frac{65}{4} H_0^2 -\frac{5}{8} \left(-\sqrt{17} H_0 + 5H_0 \right)^2\right)=\frac{-13+3 \sqrt{17}}{8\pi} H_0^2.
\end{equation}
\subsubsection{Dimensionless and Redshift Representations}
Motivated by comparing our theory with observational data for the Hubble parameter, we first rewrite the Friedmann equations in a dimensionless form by introducing a quintuple $(h,\tau,\Psi,r,P)$ defined as
\begin{equation}
    H=H_0h,\tau=H_0t,\Pi=H_0 \Psi,\rho=\frac{3H_0^2}{8 \pi}r,p=\frac{3 H_0^2}{8 \pi} P.
\end{equation}
Using this parametrization, the dimensionless Friedmann equations are given by
\begin{equation}\label{redshift1}
    h^2=r+\frac{1}{4} \frac{d \Psi}{d \tau}+\frac{5}{4} h \Psi - \frac{1}{8} \Psi^2,
\end{equation}
\begin{equation}\label{redshift2}
    3h^2+2 \frac{d h}{d \tau}=-3P+\frac{5}{4}\frac{d \Psi}{d \tau} +\frac{13}{4}  h\Psi - \frac{5}{8} \Psi^2.
\end{equation}
The energy balance equation \eqref{energyequation} takes the dimensionless form
\begin{equation}\label{redshift3}
\frac{3}{8 \pi}  \frac{dr}{d \tau} +3h\left(\frac{3}{8 \pi} r +\frac{3}{8\pi}P \right) + \frac{1}{32\pi}\frac{d}{d\tau}  \left(3 \frac{d \Psi}{d \tau} + 15 h \Psi - \frac{3}{2} \Psi^2 \right) + \frac{3h}{32 \pi} \left( -2 \frac{d \Psi}{d \tau} +2 h \Psi +\Psi^2\right)=0.
\end{equation}
A straightforward algebraic simplification yields
\begin{equation}
    \frac{dr}{d\tau} +3h(r+P)+\frac{1}{12} \frac{d}{d \tau} \left(3 \frac{d \Psi}{d \tau} + 15 h \Psi -\frac{3}{2} \Psi^2 \right)+\frac{h}{4} \left(-2 \frac{d \Psi}{d \tau}  + 2h \Psi +\Psi^2 \right)=0.
\end{equation}
Having the dimensionless form, we move to redshift space, as the observational data is given by the redshift variable $z$, implicitly defined through
\begin{equation}
    1+z=\frac{1}{a}, \; \; \text{thus} \; \; \frac{d}{d \tau}=-(1+z)h(z)\frac{d}{dz}.
\end{equation}
In redshift space, equations \eqref{redshift1}, \eqref{redshift2}, and \eqref{redshift3} take the following form
\begin{equation}\label{substitutehere}
    h^2(z)=r(z)-\frac{1}{4}(1+z)h(z)\frac{d \Psi(z)}{dz}+\frac{5}{4} h(z) \Psi(z) - \frac{1}{8} \Psi^2(z)
\end{equation}
\begin{equation}
    3h^2(z)-2(1+z)h(z) \frac{dh(z)}{dz}=-3P(z)-\frac{5}{4}(1+z)h(z)\frac{d \Psi(z)}{dz}+\frac{13}{4} h(z) \Psi(z) - \frac{5}{8} \Psi^2(z)
\end{equation}
\begin{equation}
\begin{aligned}
   & -(1+z)h(z) \frac{dr(z)}{dz}+3h(z)(r(z)+P(z))\\
    &-\frac{1}{12}(1+z)h(z) \frac{d}{dz}\left(-3(1+z)h(z) \frac{d\Psi(z)}{dz} +15h(z)\Psi(z)-\frac{3}{2}\Psi^2(z) \right)\\
    &+\frac{h(z)}{4} \left(2(1+z)h(z) \frac{d\Psi(z)}{dz} +2h(z)\Psi(z)+ \Psi^2(z) \right)=0
\end{aligned}
\end{equation}
In the upcoming section, we will test Yano-Schrödinger gravity, by a detailed comparison with the standard $\Lambda$CDM model, which we recall in the following, and a small sample of observational data, obtained for the Hubble parameter. 

In the $\Lambda $CDM model the Hubble function is expressed as
\begin{equation}
H=H_{0}\sqrt{\frac{\Omega _{m}}{a^{3}}+\Omega _{\Lambda }}=H_{0}\sqrt{\Omega
_{m}(1+z)^{3}+\Omega _{\Lambda }},
\end{equation}
where $\Omega _{m}=\Omega _{b}+\Omega _{DM}$,  $\Omega _{b}=\rho
_{b}/\rho _{cr}$, $\Omega _{DM}=\rho _{DM}/\rho _{cr} $ and $\Omega
_{\Lambda }=\Lambda /\rho _{cr}$, where $\rho_{cr}$ is the critical density
of the Universe. $\Omega _{b}$, $\Omega _{DM}$ and $\Omega _{DE}$ represent
the density parameters of the baryonic matter, dark matter, and dark energy,
respectively. The deceleration parameter is given by
\begin{equation}
q(z)=\frac{3(1+z)^{3}\Omega _{m}}{2\left[ \Omega _{\Lambda }+(1+z)^{3}\Omega
_{m}\right] }-1.
\end{equation}

For our analysis of the matter and dark energy density parameters in the $\Lambda$CDM model, we will use the following values: $\Omega_{\Lambda}=0.6847$, $\Omega_{m}=0.3166$, based on \cite{1g}. The observational data, together with the error bars for the Hubble parameter is obtained from \cite{Bou}.

For an in depth comparison with $\Lambda$CDM model, we'll also use the $Om(z)$ diagnostic \cite{Sahni}, a crucial tool for differentiating alternative cosmological models. The $Om(z)$ function is defined as
\begin{equation}
Om (z)=\frac{H^2(z)/H_0^2-1}{(1+z)^3-1}=\frac{h^2(z)-1}{(1+z)^3-1}.
\end{equation}
In the case of the $\Lambda$CDM model, $Om(z)$ is a constant equal to the present day matter density $r(0)=0.3166$. However, in other theories of gravity that differ from the $\Lambda$CDM model, changes in the value of $Om(z)$ over time indicate different types of cosmic evolution. Specifically, if $Om(z)$ increases (positive slope), it suggests a phantom-like evolution. On the other hand, if $Om(z)$ decreases (negative slope), it points to  quintessence-like dynamics.

\subsubsection{Cosmological Model}
In the following analysis, we will consider matter to be a pressureless dust with $p=0$. So far, we lack a dynamical equation for $\Psi$, which leaves our system underdetermined. To be able to solve the evolution equations, we have to provide an equation of state for dark energy. We propose the physically reasonable assumption that matter is conserved
\begin{equation}\label{thisone1}
    \dot \rho+3H \rho=0,
\end{equation}
which immediately implies by the energy conservation equation that
\begin{equation}\label{thisone2}
   \frac{d}{dt}\left(\dot \Pi+5 H \Pi - \frac{1}{2} \Pi^2 \right) -2 H\dot \Pi +2H^2 \Pi+H \Pi^2=0.
\end{equation}
In terms of redshift variables, equations \eqref{thisone1} and \eqref{thisone2} can be equivalently written as
\begin{equation}\label{integratethis}
    -(1+z)h(z) \frac{d r(z)}{dz} +3 h(z) r(z)=0,
\end{equation}
\begin{equation}
\begin{aligned}
    -(1+z)h(z)\frac{d}{dz} \left(-(1+z)h(z) \frac{d \Psi(z)}{dz} + 5h(z) \Psi(z) -\frac{1}{2} \Psi^2(z) \right)\\
    +2h(z)(1+z)h(z) \frac{ d \Psi(z)}{dz} +2h^2(z) \Psi(z)+h(z) \Psi^2(z)=0.
\end{aligned}
\end{equation}
As we treated matter being a pressureless dust, equation \eqref{integratethis} can be integrated to obtain
\begin{equation}
    r(z)=r(0)(1+z)^3,
\end{equation}
where $r(0)$ is the present day matter energy density. Hence, equation \eqref{substitutehere} yields
\begin{equation}
    h^2(z)=r(0)(1+z)^3-\frac{1}{4}(1+z) h(z) \frac{d \Psi(z)}{dz}+\frac{5}{4} h(z) \Psi(z) -\frac{1}{8} \Psi^2(z).
\end{equation}

Therefore, the evolution equations of a Yano-Schrödinger cosmological model with pressureless dust, where both matter and dark matter are conserved take the final form
\begin{equation}\label{cosmology1}
    3h^2(z)-2(1+z)\frac{dh(z)}{dz}=-\frac{5}{4}(1+z)h(z) \frac{d \Psi(z)}{dz}+\frac{13}{4}h(z) \Psi(z)-\frac{5}{8} \Psi^2(z),
\end{equation}
\begin{equation}\label{cosmology2}
    -(1+z)h(z) \frac{d}{dz} \left( 4h(z)^2-4r(0)(1+z)^3 \right)-8 h^3(z)+8r(0)(1+z)^3 h(z) +12 h^2(z) \Psi(z)=0.
\end{equation}
The system of differential equations \eqref{cosmology1}-\eqref{cosmology2} has to be integrated with the initial conditions $h(0)=1$ and $\Psi(0):=\Psi_0$.

\begin{figure*}[htbp]
\centering
\includegraphics[width=0.490\linewidth]{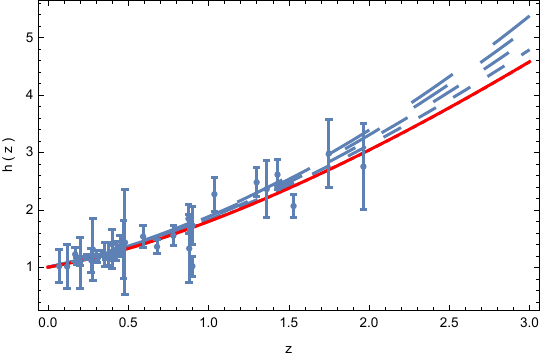} %
\includegraphics[width=0.490\linewidth]{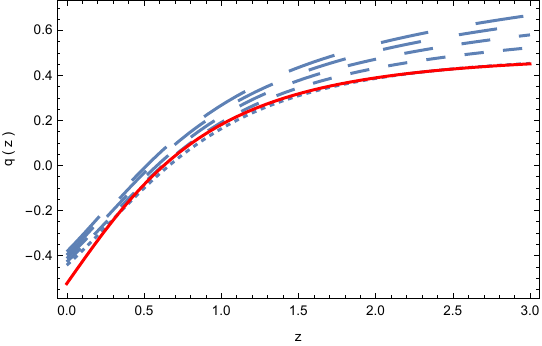}
\caption{Variations of the dimensionless Hubble function $h(z) $(left panel), and of the deceleration parameter $q(z)$ (right panel) for Model 1 with initial conditions $\Psi(0)=0.51$ (dotted curve), $\Psi(0)=0.52$ (short dashed curve), $\Psi(0)=0.53$ (dashed curve) , $\Psi(0)=0.54$ (long dashed curve), $\Psi(0)=0.55$ (ultra-long dashed curve), respectively. The observational data for the Hubble function are represented with their error bars, while the red curve depicts the predictions of the $\Lambda$CDM model.}
\label{fig3}
\end{figure*}
\begin{figure*}[htbp]
\centering
\includegraphics[width=0.490\linewidth]{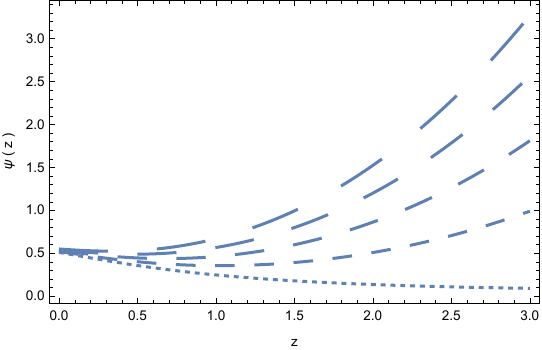} %
\includegraphics[width=0.490\linewidth]{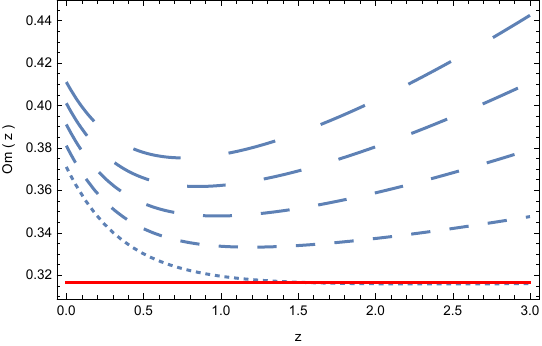}
\caption{Variations of the dimensionless non-metricity $\Psi(z)$ (left panel), and of the $Om(z)$ function (right panel) for Model 2 with initial conditions $\Psi(0)=0.51$ (dotted curve), $\Psi(0)=0.52$ (short dashed curve), $\Psi(0)=0.53$ (dashed curve) , $\Psi(0)=0.54$ (long dashed curve), $\Psi(0)=0.55$ (ultra-long dashed curve), respectively. The red curve represents the predictions of the $\Lambda$CDM model.}
\label{fig4}
\end{figure*}

As can be seen on Figure \ref{fig3}, for the considered range of parameters, the dark energy model with conserved matter density can reproduce the observational data for the Hubble parameter. Moreover, for redshifts $0\leq z \leq 2$, the predictions of the proposed model are in almost exact accordance with those of the standard $\Lambda$CDM paradigm. For $z>2$, there is a small deviation in both the values of the Hubble function $h(z)$ and the deceleration parameter $q(z)$, our model predicting slightly higher values. Up to redshifts $z \simeq 1,5$, the two models basically coincide.

Figure \ref{fig4} shows that the non-metricity vector $\Psi(z)$ for the initial conditions $\Psi(0)>0.51$ is monotone increasing after $z>0.5$ and takes positive values during the cosmological evolution, but its behaviour depends on the initial condition chosen. For instance, in the case of the initial value  $\Psi(0)=0.51$, $\Psi(z)$ is still positive, but monotone decreasing. From the same figure, it can be seen that the $Om(z)$ function is very different from the diagnostic function of the $\Lambda$CDM model, in which $Om(z)$ is a constant. In our case, at lower redshifts, up to $z\simeq 1$, it has a negative slope,  while at higher redshifts the slope becomes positive, indicating a transition from quintessence-like dynamics to a phantom-like evolution. Nevertheless, for the initial condition $\Psi(0)=0.51$, such a transition does not appear, and the model's $Om(z)$ function converges to that of $\Lambda$CDM.


\subsection{Palatini Approach}
In the previous subsection, we postulated the connection, obtained the corresponding field equations, and explored the cosmological implications of the proposed theory. Here we take a metric-affine approach, by first finding the hypermomentum tensor, which sources Schrödinger connections. Having found the hypermomentum tensor, we propose a Lagrangian formulation of a perfect Schrödinger hyperfluid, following the recent work \cite{Iosifidis_2020fluids}.
    
The action considered in this subsection is the standard Einstein $f(R)=R$ action, with the only caveat being that we do not fix $R$ to be the Ricci scalar of the Yano-Schrödinger geometry, but this will be a consequence of the connection field equations. Moreover, we  allow matter to couple to the connection. Of course this part is quite essential and introduces the desired hypermomentum that sources this type of non-metricity.
In this case, our action reads
\begin{equation}\label{proposedaction}
    S=\frac{1}{2 \kappa} \int \sqrt{-g} R+ S_{M}(g,\Gamma,\Phi),
\end{equation}
where the first term is the usual Einstein-Hilbert action and $S_M$ denotes the matter sector. We note that in the matter sector, we allow a matter-geometry coupling, which could yield a non-trivial hypermomentum tensor. This is distinct to the approach being considered in \cite{HarkoSchr}, where hypermomentum was not taken into account.

Variation of the action  \eqref{proposedaction} with respect to the metric and the connection give the following field equations
\begin{equation} \label{connectionfield}
    R_{(\mu \nu)} - \frac{1}{2}R g_{\mu \nu}=\kappa T_{\mu \nu},
 \; \; 
    \tensor{P}{_\lambda ^{(\mu \nu)}}=\kappa \tensor{\Delta}{_\lambda^{(\mu \nu)}},
\end{equation}
where the energy momentum tensor $T$, and hypermomentum tensor $\Delta$ are given by
\begin{equation}\label{hyp}
        T^{\mu \nu}=\frac{2}{\sqrt{-g}} \frac{\delta \left(\sqrt{-g} \mathcal{L}_{M} \right)}{\delta g_{\mu \nu}},\; \; 
    \tensor{\Delta}{_\lambda^{\mu \nu}}=-\frac{2}{\sqrt{-g}} \frac{ \delta \left(\sqrt{-g} \mathcal{L}_{M} \right)}{\delta \tensor{\Gamma}{^\lambda _\mu _\nu}},
\end{equation}
and the Palatini tensor is defined as
\begin{equation}
     \tensor{P}{_\lambda^{(\mu \nu)}}=\frac{1}{2} Q_{\lambda} g^{\mu \nu}- \tensor{Q}{_\lambda ^{\mu \nu}}+ \left(q^{(\mu}-\frac{1}{2} Q^{(\mu} \right) \tensor{\delta}{_\lambda ^{\nu )}}, \; \; \text{with} \; \; Q_{\lambda}:=Q_{\lambda \mu \nu} g^{\mu \nu}\  {\text{and}} \; \; q_{\nu}:=Q_{\lambda \mu \nu}g^{\lambda \mu}.
\end{equation}
Note that the connection field equations \eqref{connectionfield} relate the non-metricity with the hypermomentum of matter. Namely, we can determine the type of hypermomentum, which sources a Schrödinger-type non-metricity.

\subsubsection{Yano-Schrödinger Hyperfluid Cosmology}

Let us now derive the form of the hypermomentum tensor that sources such a Schrödinger-Yano (i.e. length preserving) non-metricity. Firstly, let us briefly introduce the concept of hyperfluid. In Metric-Affine Gravity apart from the usual energy-momentum tensor one also encounters the hypermomentum tensor which is formally defined as the variation of the matter action with respect to the affine connection  (see eq. (\ref{hyp})). In general the hypermomentum describes the microproperties of matter and connects in a nice way the generalized geometry to the microstructure \cite{hehl1976hypermomentum}. For Friedmann-like Cosmological settings the hypermomentum has to be homogeneous and isotropic. It turns out that, in such a highly symmetric space, the latter is described by 5 functions of time and has the covariant form \cite{Iosifidis_2020fluids}
\begin{equation}
    \Delta_{\lambda \mu \nu}=\widetilde{\omega} u_\lambda u_\mu u_\nu + \psi u_\lambda g_{\mu \nu}+\phi u_\nu g_{\lambda \mu}+\chi u_\mu g_{\lambda \nu} +\zeta \tilde{\epsilon}_{\lambda \mu \nu \rho } u^{\rho },
\end{equation}
where $\widetilde{\omega},\psi,\phi$ and $\chi$ are functions of $t$. It  can equivalently be rewritten in a $3+1$ decomposition as
\begin{equation}\label{eq:HMFLRW}
    \Delta_{\lambda \mu \nu}=\omega u_\lambda u_\mu u_\nu + \psi u_\lambda h_{\mu \nu}+\phi u_\nu h_{\lambda \mu}+\chi u_\mu h_{\lambda \nu} +\zeta \tilde{\epsilon}_{\lambda \mu \nu \rho } u^{\rho } \; \; \text{with} \; \; \omega=\widetilde{\omega}-\phi-\psi-\chi.
\end{equation}
The canonical and metrical energy-momentum tensors take the perfect fluid form
\begin{equation}\label{eq:PfluidFLRW}
    t_{\mu \nu}=\rho_c u_{\mu} u_{\nu}+ p_c h_{\mu \nu}, \; \;T_{\mu\nu}=\rho u_\mu u_\nu + p h_{\mu\nu}.
\end{equation}
The above 2 energy related tensors along with the conservation laws of Metric-Affine Gravity, describe the behaviour of the fluid which is dubbed Perfect Cosmological Hyperfluid. 

 We now ask the question: What type of a hyperfluid sources a length-preserving non-metricity? Given the isotropic form of non-metricity, the above isotropic form of hypermomentum and the connection field equations (\ref{connectionfield}), a simple algebraic calculation detailed in \ref{appendixC} yields
\beq \label{connectionfieldequations}
\Delta_{\lambda \mu\nu}=\frac{D(t)}{2 \kappa}\left[ (n-3)u_{\lambda}h_{\mu\nu}+(4-2n) h_{\lambda(\mu}u_{\nu)}+(n-1)u_{\lambda}u_{\mu}u_{\nu} \right].
\eeq
We can thus easily identify the hyperfluid functions, which source the desired non-metricity
\beq
\phi=\chi=\frac{2-n}{2 \kappa}D \;, \; \;\psi=\frac{(n-3)}{2 \kappa }D\;, \;\;\; \omega=\frac{(n-1)}{2 \kappa }D\;, \;\;\; \zeta=0.
\eeq
This allows us to give a proper mathematical definition to a Yano-Schrödinger hyperfluid.
\begin{definition}
  A Perfect Hyperfluid $\left(T_{\mu \nu}, t_{\mu \nu},\Delta_{\lambda \mu \nu}  \right)$ is called a \textbf{Yano-Schrödinger hyperfluid} if its hypermomentum tensor takes the form
\beq 
\Delta_{\lambda \mu\nu}=\frac{D(t)}{2 \kappa}\left[ (n-3)u_{\lambda}h_{\mu\nu}+(4-2n) h_{\lambda(\mu}u_{\nu)}+(n-1)u_{\lambda}u_{\mu}u_{\nu} \right],
\eeq
where $D(t)$ is the smooth function describing a length-preserving non-metricity, i.e. a Yano-Schrödinger connection.
\end{definition}
\begin{remark}
    A Yano-Schrödinger hyperfluid sources a Yano-Schrödinger connection iff the gravitational sector is given by the Ricci scalar only, i.e. $f(R)=R$.
\end{remark}
We are now ready to provide a complete variational description of the Yano-Schrödinger hyperfluid. Following the ideas developed in \cite{iosifidis2023hyperhydrodynamics,Brown_1993} , we  propose 
\begin{equation}
I= \int d^n x \Big[{J}^\mu\Big( \varphi_{,\mu} + s\theta_{,\mu} + 
\beta_K \alpha^K{}_{,\mu}\Big) 
-\frac{\sqrt{-g}}{2} \left(2\rho(n,s,D) +
\frac{(-3 n +7)}{2 }Q_{\mu}D^{\mu}+(n-3) q_\mu D^\mu \right) \Big ]\label{actionhyp}, 
\end{equation}
where $D^{\mu}=\frac{D}{2 \kappa} u^{\mu}$ and $K=1,2,3$.This action describes a Yano-Schrödinger hyperfluid, as detailed in appendix \ref{appendixD}.

 Varying \eqref{actionhyp} with respect to the metric gives the metrical energy-momentum tensor
 \begin{equation}
     T_{\mu \nu}=\rho u_{\mu}u_{\nu}+p h_{\mu\nu}+g_{\mu\nu}\frac{(3 n -7)}{2}\frac{1}{\sqrt{-g}}\partial_{\alpha}(\sqrt{-g}D^{\alpha})+(n-3) \Big[ \xi_{(\mu}D_{\nu)}-\nabla_{(\mu}D_{\nu)} \Big],
 \end{equation}
where 
\beq
p:=n \frac{\partial \rho}{\partial n}+D \frac{\partial \rho}{\partial D}-\rho
\eeq
 and we have also abbreviated $\xi_{\mu}=-q_{\mu}+\frac{1}{2}Q_{\mu}$. Using the Cosmological ansatz the above energy-momentum tensor takes the form
 \beq
  T_{\mu \nu}=\rho_{eff} u_{\mu}u_{\nu}+p_{eff} h_{\mu\nu}
 \eeq
with
\begin{equation}
\rho_{eff} = \rho - \frac{3n - 7}{2} \left( \partial_t \widetilde{D} + (n-1)H \widetilde{D} \right) + (n-3) \left( \partial_t \widetilde{D} - (n-1)\widetilde{D}^{2} \right)
\end{equation}
\begin{equation}
p_{eff}=p+\frac{3n-7}{2}\Big(\partial_t{\widetilde{D}}+(n-1)H\widetilde{D} \Big)-(n-3)\widetilde{D}\left( H-\widetilde{D}\right),
\end{equation}
where we have defined 
\begin{equation}
    \widetilde{D}(t):=-\frac{D(t)}{2 \kappa}.
\end{equation}Using the notation of  Section \eqref{CosmologyFriedmann}, the Friedmann equations of a Yano-Schrödinger hyperfluid read
\begin{equation}
    3H^2=\kappa \rho+\frac{3}{2} \dot \Pi + \frac{15}{2} H \Pi -\frac{9}{8} \Pi^2,
\end{equation}
\begin{equation}
    2 \dot H + 3H^2=-\kappa p + \frac{5}{2} \dot \Pi +4 H \Pi -\frac{3}{8} \Pi^2.
\end{equation}
     Introducing the hypermomentum contributions
\begin{equation}
    \rho_{hyp}=\frac{1}{\kappa} \left( \frac{3}{2} \dot \Pi + \frac{15}{2} H \Pi  - \frac{9}{8} \Pi^2 \right), \; \; p_{hyp}=\frac{1}{\kappa} \left( - \frac{5}{2} \dot \Pi -4 H \Pi + \frac{3}{8} \Pi^2 \right),
\end{equation}
we can rewrite them as
\begin{equation}
    3H^2=\kappa (\rho + \rho_{hyp}),
\end{equation}
\begin{equation}
    2\dot{H}+3H^2=-\kappa(p+p_{hyp}).
\end{equation}
The continuity equation takes the modified form
\begin{equation}
    \dot \rho + \dot \rho_{hyp}+3H(\rho+\rho_{hyp}+p+p_{hyp})=0.
\end{equation}
In the following, we assume that the matter density is connection independent, that is 
\begin{equation}
    \frac{\partial \rho}{\partial D}=0.
\end{equation}
To close the system, we have to impose equations of state relating the ordinary matter density to pressure and the hypermomentum contributions. We assume a pressureless hyperfluid, where the ordinary matter density is conserved:
\begin{equation}
    p=0, \; \; \dot \rho +3H(\rho+p)=0.
\end{equation}
The dynamics for $\Pi$ is therefore encoded in
\begin{equation}
    \frac{3}{2} \ddot \Pi + \frac{15}{2} \dot H \Pi + \frac{15}{2} H \dot \Pi -\frac{9}{4} \Pi \dot \Pi + 3H \left (-\dot \Pi +\frac{7}{2} H \Pi -\frac{3}{4} \Pi^2 \right)=0.
\end{equation}
To directly compare with the observational data, we rewrite everything in redshift variables in two steps as before. In the first step, we introduce
\begin{equation}
H=H_0h, \tau=H_0 t, \Pi=H_0 \Psi, \rho=\frac{3H_0^2}{\kappa} r, p=\frac{3H_0^2}{\kappa} P,u=\frac{d \Psi}{d \tau}.
\end{equation}
In these variables, the Friedmann equations of a barotropic Yano-Schrödinger hyperfluid are given by
\begin{equation}\label{Friedmannhyper}
    3h^2=3r+\frac{3}{2} u + \frac{15}{4} h \psi - \frac{9}{8} \Psi^2,
\end{equation}
\begin{equation}
    2\frac{dh}{d\tau}+3h^2=  \frac{5}{2}u+4h \Psi - \frac{3}{8} \Psi^2.
\end{equation}
They are supplied with the conservation equation
\begin{equation}\label{hyperfluid1}
\frac{3}{2} \frac{ du}{d \tau}+\frac{15}{2} \frac{d h}{d \tau} \Psi + \frac{15}{2} h u - \frac{9}{4} \psi u - 3h u + \frac{21}{2} h^2 \Psi -\frac{9}{4} h \Psi^2=0.
\end{equation}
In the redshift representation, the evolution equations take the form
\begin{equation}\label{hyperfluid2}
    (1+z)h(z) \frac{d \Psi}{dz}+u(z)=0,
\end{equation}
\begin{equation}
    -2(1+z)h(z) \frac{dh(z)}{dz}+3h(z)^2 - \frac{5}{2} u(z) -4h(z) \Psi(z) +\frac{3}{8} \Psi^2(z)=0,
\end{equation}
\begin{equation}\label{hyperfluid3}
\begin{aligned}
    -\frac{3}{2}(1+z)h(z) \frac{du(z)}{dz}&-\frac{15}{2}(1+z)h(z) \frac{d h(z)}{dz} \Psi(z)+\frac{15}{2} h(z) u(z)-\frac{9}{4}\Psi(z) u(z) \\
    &-3h(z)u(z)+\frac{21}{2} h^2(z) \Psi(z) - \frac{9}{4} h(z) \Psi^2(z)=0.
\end{aligned}
\end{equation}
The system of equations \eqref{hyperfluid1}-\eqref{hyperfluid3} has to be solved with the initial conditions $h(0)=1, \Psi(0)=\Psi_0, u(0)=u_0$. Having solved the system, the matter density is obtained from the first Friedmann equation  \eqref{Friedmannhyper} as
\begin{equation}
    r(z)=h^2(z)-\frac{1}{2} u(z) - \frac{5}{4} h(z) \Psi(z) + \frac{3}{8} \Psi^2(z).
\end{equation}

\begin{figure*}[htbp]
\centering
\includegraphics[width=0.490\linewidth]{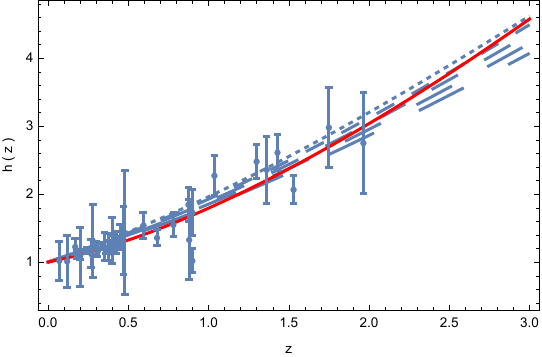} %
\includegraphics[width=0.490\linewidth]{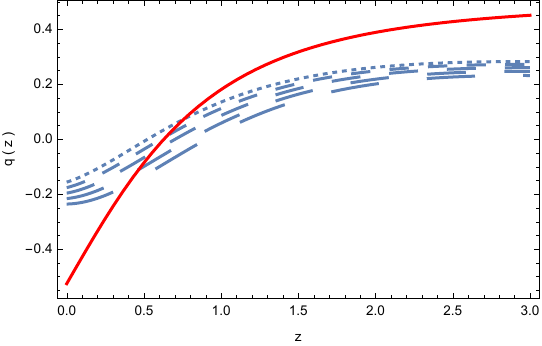}
\caption{Variations of the dimensionless Hubble function $h(z) $(left panel), and of the deceleration parameter $q(z)$ (right panel) for the hyperfluid model with initial conditions $u(0)=0.517$ and several values of $\Psi(0)$: $\Psi(0)=0.005$ (dotted curve), $\Psi(0)=0.015$ (short dashed curve), $\Psi(0)=0.025$ (dashed curve), $\Psi(0)=0.035$ (long dashed curve), $\Psi(0)=0.045$ (ultra-long dashed curve). The observational data for the Hubble function are represented with their error bars, while the red curve depicts the predictions of the $\Lambda$CDM model.}
\label{fig5}
\end{figure*}
\begin{figure*}[htbp]
\centering
\includegraphics[width=0.490\linewidth]{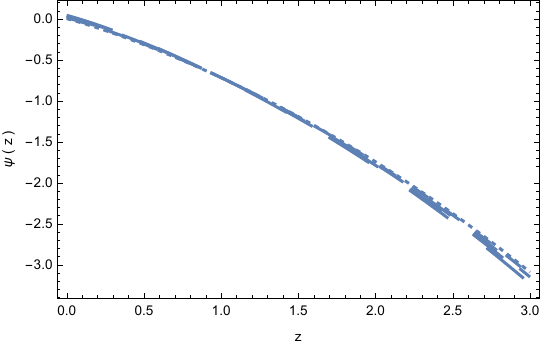} %
\includegraphics[width=0.490\linewidth]{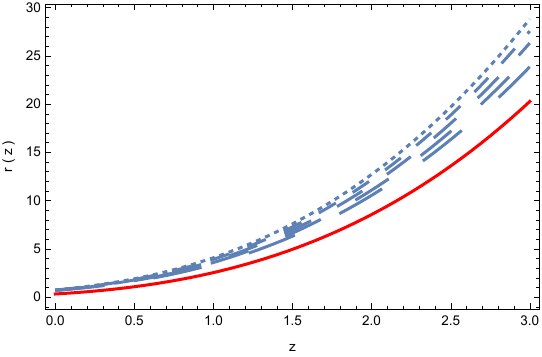}
\caption{Variations of the dimensionless non-metricity $\Psi(z)$ (left panel), and of the matter density (right panel) for the hyperfluid model with initial conditions $u(0)=0.517$ and several values of $\Psi(0)$: $\Psi(0)=0.005$ (dotted curve), $\Psi(0)=0.015$ (short dashed curve), $\Psi(0)=0.025$ (dashed curve), $\Psi(0)=0.035$ (long dashed curve), $\Psi(0)=0.045$ (ultra-long dashed curve). The red curve represents the predictions of the $\Lambda$CDM model.}
\label{fig6}
\end{figure*}

From Figure \ref{fig5} one can observe that for the considered range of parameters, the hyperfluid model describes the data of the Hubble function up to $z=2$. Depending on the initial conditions chosen, for higher redshifts the Hubble function of the model is either below or above the $\Lambda$CDM curve. On the same figure in the right panel, the deceleration parameter of the model can be seen, which behaves differently compared to the one of $\Lambda$CDM. Only in the range $0.5 <z < 0.7$ are the predictions of the two models close for the deceleration parameter.

The non-metricity scalar $\Psi(z)$, which can be seen on the left panel of Figure \ref{fig6}, is a monotone decreasing function of the redshift, and it takes negative values during the cosmological evolution. Moreover, its behaviour does not highly depend on the small shift in the initial conditions $\Psi(0)$. From the right panel, one can observe that the hyperfluid model predicts a slightly larger matter density than the standard $\Lambda$CDM model. The deviation is increasing with redshift and highly depends on the initial conditions chosen.

\begin{figure}[htbp]
\centering
\includegraphics[width=0.5\linewidth]{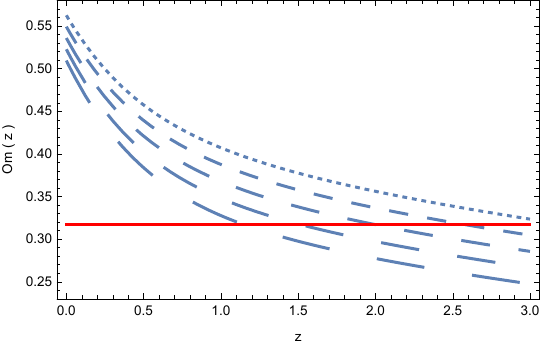}
\caption{Behavior of the function $Om (z)$ for the hyperfluid model with initial conditions $u(0)=0.517$ and several values of $\Psi(0)$: $\Psi(0)=0.005$ (dotted curve), $\Psi(0)=0.015$ (short dashed curve), $\Psi(0)=0.025$ (dashed curve), $\Psi(0)=0.035$ (long dashed curve), $\Psi(0)=0.045$ (ultra-long dashed curve). The predictions of the $\Lambda$CDM model are
represented by the red curve.}
\label{fig7}
\end{figure}

The $Om(z)$ diagnostic function, which is depicted in Figure \ref{fig7} shows a different behaviour compared both to the standard $\Lambda$CDM and Model $1$. In this case, $Om(z)$ is a monotone decreasing function of the redshift, which indicates a quintessence-like evolution. The quintessence-like behaviour seems to be independent of the initial condition chosen for $\Psi(0)$ in the range $0.005<\Psi<0.045$.

\section{Summary and Outlook}\label{section5}
In the present work, we described in a geometric, coordinate-free manner a special class of non-metric affine connections, which were first considered by Schrödinger. These connections, even though being metric-incompatible, preserve the lengths of vectors under autoparallel transport. In general, the non-metricity of a geometry can change lengths, angles, and volumes. It is important to note that there are other types of non-metric connections that preserve one of these three quantities, e.g., Weyl type preserves angles while equi-affine connections preserve volumes (see Proposition 2.1. in \cite{MATSUZOE2006567}).

After outlining the basic properties of Schr\"odinger, we gave examples of such connections, which were realized either through the torsion, non-metricity, or both, of a distinct affine connection. We obtained explicit formulae for the curvature tensors of the Yano-Schrödinger connection, which is closely related to the semi-symmetric connection, first introduced by Friedmann. With the help of the  semi-symmetric connection, we provided, for the first time in the literature, an explicit example of a non-static Einstein manifold with a vectorial torsion. For a general Schrödinger-type geometry, in which the non-metricity is not necessarily vectorial, we derived the Raychaudhuri and the Sachs equations. Then, we presented a Lagrangian formulation of the Sachs equation which was possible due to the length-preserving nature of the non-metricity.

Our results have laid a foundation for the development of a fully metric-affine gravity theory with a length-preserving non-metricity. In this paper, we worked in the Palatini formalism and assumed the connection to be symmetric from the beginning. In a completely metric-affine setting, one would have to impose the non-metricity and torsion conditions using Lagrange multipliers. By solving the constraint equations, one could obtain the desired type of non-metricity, such as the one we have worked with here.

To demonstrate the physical applicability of our analysis, we proposed two geometric generalizations of general relativity, in which the effects of dark energy have a firm geometric foundation. 
Both of these employed the Yano-Schrödinger connection, the only difference being the approach: first, we took a metric approach, and then, a metric-affine approach. In the second approach, we obtained the non-metricity by solving the connection field equations and presented characteristics of a novel Yano-Schrödinger hyperfluid. Similar analysis with torsion was done in \cite{Barrow_2019} and with non-metricity in \cite{Iosifidis_2023}. However, the latter did not consider a length-preserving non-metricity.

After this, we presented preliminary cosmological analysis using a spatially flat FLRW metric. We used the conservation of the energy-momentum to provide dynamics for the non-metricity. By comparing our cosmological models with observational data, we put tentative bounds on the initial values of the non-metricity vector. Moreover, we also showed that the Hubble parameter within our models matches the one in $\Lambda$CDM for the considered data set.

Nevertheless, we would like to note that our cosmological results have limitations. To confirm or disprove the validity of Yano-Schrödinger gravity and the Yano-Schrödinger hyperfluid, a detailed analysis using Monte-Carlo methods for a large number of data sets is necessary. Such data driven phenomenological studies have been carried out in other metric-affine extensions of GR \cite{Mandal_2023}. The physically relevant nature of Yano-Schr\"odinger gravity warrants similar analysis within this novel theory. 

Since it has been historically ignored, there exists a wide range of possible further geometric/physical analyses with Schrödinger connection. It remains an open question, whether any results from study of semi-symmetric connections could be translated or utilized in the theory of Yano-Schrödinger connections, which are realized by a semi-symmetric type of torsion. We hope that the present work will steer the interest of mathematicians and physicists alike towards the rich field of Schr\"odinger connections. 

\appendix
\section{Curvature Tensors}\label{appendixA}
This appendix is devoted to computing the curvature tensor of a general affine connection of the form
\begin{equation}
    \nabla_{X}Y=\overset{\circ}{\nabla}_{X} Y + U(-,X,Y).
\end{equation}
\begin{proposition}
    The Riemann curvature tensor of an affine connection $(\nabla,U)$ of the form
    \begin{equation}
    \nabla_{X}Y=\overset{\circ}{\nabla}_{X} Y + U(-,X,Y).
\end{equation}
is given by 
    \begin{multline*}
        Riem(\omega,Z,X,Y)=\overset{\circ}{Riem}(\omega,Z,X,Y)
+\left(\overset{\circ}{\nabla}_{X} U\right) (\omega,Y,Z)
-\left(\overset{\circ}{\nabla}_{Y}U \right)(\omega,X,Z) +U(\omega,X,U(-,Y,Z))\\
- U(\omega,Y,U(-,X,Z)).
    \end{multline*}
    where $\overset{\circ}{{Riem}}$ denotes the Riemann tensor of the Levi-Civita connection.
\end{proposition}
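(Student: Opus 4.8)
The plan is to compute directly from the definition of the Riemann tensor, $Riem(\omega,Z,X,Y)=\omega\big(\nabla_X\nabla_Y Z-\nabla_Y\nabla_X Z-\nabla_{[X,Y]}Z\big)$, substituting $\nabla=\overset{\circ}{\nabla}+U(-,\cdot,\cdot)$ everywhere and then collecting terms according to how many factors of $U$ they contain (zero, one, or two).

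First I would expand the iterated derivative $\nabla_X\nabla_Y Z$. Writing $V:=\nabla_Y Z=\overset{\circ}{\nabla}_Y Z+U(-,Y,Z)$, which is an honest vector field, one has $\nabla_X V=\overset{\circ}{\nabla}_X V+U(-,X,V)$. The only slightly delicate point is the term $\overset{\circ}{\nabla}_X\big(U(-,Y,Z)\big)$: since $U(-,Y,Z)$ is the contraction of the $(1,2)$-tensor $U$ against the vector fields $Y,Z$, the Leibniz rule for $\overset{\circ}{\nabla}$ gives $\overset{\circ}{\nabla}_X\big(U(-,Y,Z)\big)=(\overset{\circ}{\nabla}_XU)(-,Y,Z)+U(-,\overset{\circ}{\nabla}_XY,Z)+U(-,Y,\overset{\circ}{\nabla}_XZ)$. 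Assembling these pieces yields a five-term expression for $\nabla_X\nabla_Y Z$, and I would write down the analogous expression for $\nabla_Y\nabla_X Z$ by swapping $X\leftrightarrow Y$.

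Next I would form the antisymmetrized combination. In the difference $\nabla_X\nabla_Y Z-\nabla_Y\nabla_X Z$ the ``mixed'' terms $U(-,Y,\overset{\circ}{\nabla}_XZ)$ and $U(-,X,\overset{\circ}{\nabla}_YZ)$ cancel in pairs, leaving the Levi-Civita commutator $\overset{\circ}{\nabla}_X\overset{\circ}{\nabla}_YZ-\overset{\circ}{\nabla}_Y\overset{\circ}{\nabla}_XZ$, the two derivative-of-$U$ terms $(\overset{\circ}{\nabla}_XU)(-,Y,Z)-(\overset{\circ}{\nabla}_YU)(-,X,Z)$, the term $U(-,\overset{\circ}{\nabla}_XY-\overset{\circ}{\nabla}_YX,Z)$, and the two quadratic-in-$U$ terms. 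Finally I would subtract $\nabla_{[X,Y]}Z=\overset{\circ}{\nabla}_{[X,Y]}Z+U(-,[X,Y],Z)$ and invoke torsion-freeness of the Levi-Civita connection, $\overset{\circ}{\nabla}_XY-\overset{\circ}{\nabla}_YX=[X,Y]$: this makes $U(-,\overset{\circ}{\nabla}_XY-\overset{\circ}{\nabla}_YX,Z)-U(-,[X,Y],Z)$ vanish, and simultaneously turns $\overset{\circ}{\nabla}_X\overset{\circ}{\nabla}_YZ-\overset{\circ}{\nabla}_Y\overset{\circ}{\nabla}_XZ-\overset{\circ}{\nabla}_{[X,Y]}Z$ into $\overset{\circ}{Riem}(-,Z,X,Y)$. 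Applying $\omega$ to the remaining vector field produces exactly the claimed identity.

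The argument is essentially bookkeeping; the one place where care is needed is the Leibniz expansion of $\overset{\circ}{\nabla}_X\big(U(-,Y,Z)\big)$ — getting the slots of the correction terms $U(-,\overset{\circ}{\nabla}_XY,Z)$ and $U(-,Y,\overset{\circ}{\nabla}_XZ)$ right — together with the twofold use of torsion-freeness, once to kill the $\overset{\circ}{\nabla}Y$-type terms and once to assemble $\overset{\circ}{Riem}$. A secondary point is fixing the sign and index ordering of $Riem$ to be consistent with the Ricci identity used elsewhere in the paper; once that convention is pinned down there is no genuine obstacle, and the two quadratic terms $U(\omega,X,U(-,Y,Z))-U(\omega,Y,U(-,X,Z))$ simply survive to the end.
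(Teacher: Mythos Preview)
Your proposal is correct and follows essentially the same route as the paper: expand $\nabla_X\nabla_Y Z$ and its counterpart using $\nabla=\overset{\circ}{\nabla}+U$, apply the Leibniz rule to $\overset{\circ}{\nabla}_X\big(U(-,Y,Z)\big)$, and then use torsion-freeness of $\overset{\circ}{\nabla}$ both to assemble $\overset{\circ}{Riem}$ and to cancel the $U(-,[X,Y],Z)$ term against $U(-,\overset{\circ}{\nabla}_XY-\overset{\circ}{\nabla}_YX,Z)$. The paper's proof is a bit more explicit in writing out all intermediate terms (with color-coding for the cancellations), but the structure and the two key observations you flag---the Leibniz expansion and the twofold use of torsion-freeness---are exactly what drives the argument there as well.
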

\begin{proof}
The Riemann tensor of an affine connection $\nabla$ is given by
\begin{equation*}
    Riem(\omega,Z,X,Y)=\omega\left( \nabla_X \nabla_Y Z-\nabla_Y \nabla_X Z-\nabla_{[X, Y]} Z \right)
\end{equation*}
To streamline the notation, we introduce the expression
\begin{equation*}
Riem(\omega,Z,X,Y)=\omega\left(\operatorname{Riem}(-,Z,X,Y) \right).
\end{equation*}
In the following, we evaluate $\operatorname{Riem}(-,X,Y,Z)$ through a series of steps. We begin with the definition, then expand it in two steps using the definition of a Schrödinger connection:
\begin{multline*}
  \operatorname{Riem}(-,Z,X,Y) =  \nabla_X \nabla_Y Z-\nabla_Y \nabla_X Z-\nabla_{[X, Y]} Z
  =  \nabla_X\left(\overset{\circ}{\nabla}_Y Z+U(-,Y, Z)\right) 
 -\nabla_Y\left(\overset{\circ}{\nabla}_X Z+U(-,X, Z)\right)\\  -\overset{\circ}{\nabla}_{[X, Y]} Z-U(-,[X, Y], Z) 
=  \overset{\circ}{\nabla}_X\left(\overset{\circ}{\nabla}_Y Z+U(-,Y, Z)\right)+U\left(-,X, \overset{\circ}{\nabla}_Y Z+U(-,Y, Z)\right)\\
 -\overset{\circ}{\nabla}_Y\left(\overset{\circ}{\nabla}_X Z+U(-,Y, Z)\right)
 -U\left(-,Y, \overset{\circ}{\nabla}_X Z+U(-,X, Z)\right) 
-\overset{\circ}{\nabla}_{[X, Y]} Z-U(-,[X, Y], Z).
\end{multline*}
On the right-hand side, we expand  the brackets using  multilinearity of $U$:
\begin{multline*}
         \textcolor{blue}{\overset{\circ}{\nabla}_X \overset{\circ}{\nabla}_Y Z}+\overset{\circ}{\nabla}_X(U(-,Y, Z))+U\left(-,X, \overset{\circ}{\nabla}_Y Z\right)+U(-,X, U(-,Y, Z)) 
\textcolor{blue}{-\overset{\circ}{\nabla}_Y \overset{\circ}{\nabla}_X Z}\\-\overset{\circ}{\nabla}_Y(U(-,X, Z))-U\left(-,Y, \overset{\circ}{\nabla}_X Z\right)-U(-,Y, U(-,X, Z)) 
 \textcolor{blue}{-\overset{\circ}{\nabla}_{[X, Y]} Z}-U(-,[X, Y], Z).
\end{multline*}
The combined terms in blue can be identified with the Riemann tensor of the Levi-Civita connection, hence
\begin{multline*}
         \overset{\circ}{\operatorname{Riem}}(-,Z,X,Y)  +\textcolor{red}{\overset{\circ}{\nabla}_X(U(-,Y, Z))}+U\left(-,X, \overset{\circ}{\nabla}_Y Z\right)+U(-,X, U(-,Y, Z)) \textcolor{red}{-\overset{\circ}{\nabla}_Y(U(-,X, Z))}\\-U\left(-,Y, \overset{\circ}{\nabla}_X Z\right)
-U(-,Y, U(-,X, Z)) -U(-,[X, Y], Z).
\end{multline*}
Applying the Leibniz Rule to the terms highlighted in red yields
\begin{multline*}
       \overset{\circ}{\operatorname{Riem}}(-,Z,X,Y)
+\left(\overset{\circ}{\nabla}_{X} U\right)(-,Y,Z)\textcolor{OliveGreen}{+U\left(-,\overset{\circ}{\nabla}_{X}Y,Z\right)}
\textcolor{Plum}{+U\left(-,Y,\overset{\circ}{\nabla}_{X}Z\right)}
\textcolor{orange}{+U \left(-,X,\overset{\circ}{\nabla}_{Y}Z \right)}\\+U(-,X,U(-,Y,Z)) 
 -\left( \overset{\circ}{\nabla}_{Y} U \right)(-,X,Z) \textcolor{OliveGreen}{-U \left(-, \overset{\circ}{\nabla}_{Y}X,Z \right)} \textcolor{orange}{- U \left(-,X,\overset{\circ}{\nabla}_{Y}Z \right)}
\textcolor{Plum}{-U\left(-,Y, \overset{\circ}{\nabla}_X Z\right)}\\-U(-,Y, U(-,X, Z)) \textcolor{OliveGreen}{-U(-,[X, Y], Z)}.
\end{multline*}
The green terms disappear due to the torsion-freeness of the Levi-Civita connection. The purple and orange terms cancel out, leaving us with
\begin{multline*}
        \operatorname{Riem}(-,Z,X,Y)=\overset{\circ}{\operatorname{Riem}}(-,X,Y,Z)
+\left(\overset{\circ}{\nabla}_{X} U\right)(-,Y,Z)
-\left(\overset{\circ}{\nabla}_{Y}U \right)(-,X,Z)+U(-,X,U(-,Y,Z))\\
- U(-,Y,U(-,X,Z)).
\end{multline*}
Therefore, for the Riemann tensor, we achieve the desired result
\begin{multline*}
    Riem(\omega,Z,X,Y)=\overset{\circ}{Riem}(\omega,Z,X,Y) + \left( \overset{\circ}{\nabla}_{X}U \right)(\omega,Y,Z)
    - \left(\overset{\circ}{\nabla}_{Y}U \right)(\omega,X,Z)+U(\omega,X,U(-,Y,Z))\\
    -U(\omega,Y,U(-,X,Z)).
\end{multline*}
\end{proof}

\begin{corollary}
    The Ricci curvature of an affine $(\nabla,U)$ is 
    \begin{multline}\label{schrodingerriccitensorcoordinatefree}
        Ric(Z,Y)=\overset{\circ}{Ric}(Z,Y)+ \left(\overset{\circ}{\nabla}_{X} U \right) \left( X^{\flat},Y,Z \right) 
        - \left(\overset{\circ}{\nabla}_{Y} U \right) \left( X^{\flat},X,Z \right) + U\left(X^{\flat},X,U(-,Y,Z) \right)\\
        -U\left(X^{\flat},Y,U(-,X,Z) \right).
    \end{multline}
\end{corollary}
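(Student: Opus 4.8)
The plan is to derive the Ricci tensor simply by tracing the Riemann tensor formula established in the preceding proposition. Recall that the Ricci tensor of an affine connection is the contraction of its Riemann tensor over the one-form slot and the first covariant-derivative slot: fixing a local frame $\{e_a\}$ with dual coframe $\{e^a\}$, one has $Ric(Z,Y)=\sum_a Riem(e^a,Z,e_a,Y)$, which is exactly what the shorthand $Riem(X^\flat,Z,X,Y)$ appearing in the statement abbreviates (the musical map converts the summed frame vector into the corresponding summed coframe element). Thus the entire proof amounts to applying the contraction $\sum_a(\,\cdot\,)(e^a,\dots,e_a,\dots)$ to the five terms of the Riemann formula.

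First I would write down the Riemann tensor identity with $(\omega,Z,X,Y)$ replaced by $(e^a,Z,e_a,Y)$ and sum over $a$. Term by term: the contraction of $\overset{\circ}{Riem}$ is, by definition, $\overset{\circ}{Ric}(Z,Y)$; the contraction of $\left(\overset{\circ}{\nabla}_{X}U\right)(\omega,Y,Z)$ becomes $\left(\overset{\circ}{\nabla}_{X}U\right)(X^\flat,Y,Z)$ in the paper's notation; the contraction of $-\left(\overset{\circ}{\nabla}_{Y}U\right)(\omega,X,Z)$ becomes $-\left(\overset{\circ}{\nabla}_{Y}U\right)(X^\flat,X,Z)$; and the two quadratic terms contract to $U\left(X^\flat,X,U(-,Y,Z)\right)$ and $-U\left(X^\flat,Y,U(-,X,Z)\right)$ respectively. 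Collecting these reproduces the claimed identity verbatim.

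The computation is mechanical, so there is no substantive obstacle; the one point deserving care is the bookkeeping of the contraction — one must keep straight that the one-form argument is paired with the \emph{first} derivative direction $X$ and not with $Y$ — together with the (automatic) fact that pulling the trace through $\overset{\circ}{\nabla}_{X}$ and $\overset{\circ}{\nabla}_{Y}$ is legitimate because $\overset{\circ}{\nabla}$ annihilates the canonical pairing $\sum_a e^a\otimes e_a$ (equivalently $\overset{\circ}{\nabla}\delta=0$), which is precisely what allows the covariant-derivative-of-$U$ terms to retain their form after contraction.
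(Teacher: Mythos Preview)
Your proposal is correct and is exactly the approach the paper intends: the statement is recorded as an immediate corollary of the Riemann tensor formula, with no separate proof given, so the implicit argument is precisely the term-by-term contraction $\omega=X^{\flat}$, $X=X$ (traced) that you spell out. Your remark about $\overset{\circ}{\nabla}\delta=0$ justifying the commutation of trace with the covariant-derivative terms is the only nontrivial bookkeeping point, and you have identified it correctly.
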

\begin{corollary}
The Ricci scalar of an affine connection $(\nabla,U)$ is
     \begin{multline}\label{schrodingerricciscalarcoordinatefree}
        R=\overset{\circ}{R}+ \left(\overset{\circ}{\nabla}_{X} U \right) \left( X^{\flat},Y,Y \right) 
        - \left(\overset{\circ}{\nabla}_{Y} U \right) \left( X^{\flat},X,Y \right) + U\left(X^{\flat},X,U(-,Y,Y) \right)\\
        -U\left(X^{\flat},Y,U(-,X,Y) \right).
    \end{multline}
\end{corollary}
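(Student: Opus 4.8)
The plan is to obtain \eqref{schrodingerricciscalarcoordinatefree} by taking one more metric trace of the Ricci-tensor identity \eqref{schrodingerriccitensorcoordinatefree} proved in the preceding corollary. Recall that the Ricci scalar is the metric trace of the Ricci curvature: fixing a point $p \in M$ and a local frame $\{E_i\}$ that is $g$-orthonormal at $p$, with $g(E_i,E_i)=\epsilon_i=\pm 1$, one has $R=\sum_i \epsilon_i\, Ric(E_i,E_i)$ and, likewise, $\overset{\circ}{R}=\sum_i \epsilon_i\, \overset{\circ}{Ric}(E_i,E_i)$.

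First I would substitute $Z=Y=E_i$ into \eqref{schrodingerriccitensorcoordinatefree}, multiply by $\epsilon_i$, and sum over $i$. The left-hand side becomes $R$ and the term $\overset{\circ}{Ric}(Z,Y)$ becomes $\overset{\circ}{R}$ by the definitions above. Since $\overset{\circ}{\nabla}U$ and $U$ are tensor fields, each of the four remaining terms in \eqref{schrodingerriccitensorcoordinatefree} is $C^{\infty}(M)$-linear in the slot occupied by $Z$ (and, where it appears, in the slot occupied by $Y$), so the operation $\sum_i \epsilon_i (\,\cdot\,)\big|_{Z=Y=E_i}$ is precisely the metric trace over that pair of slots. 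In the shorthand of the excerpt this trace is recorded simply by replacing $Z$ with $Y$, the residual implicit summation now running over the $Y$-frame; this produces exactly $\left(\overset{\circ}{\nabla}_{X}U\right)(X^{\flat},Y,Y)$, $-\left(\overset{\circ}{\nabla}_{Y}U\right)(X^{\flat},X,Y)$, $U(X^{\flat},X,U(-,Y,Y))$ and $-U(X^{\flat},Y,U(-,X,Y))$, which is the claimed identity.

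I do not expect a genuine obstacle here, since the statement is an immediate corollary of the previous one; the only care needed is bookkeeping. One must keep the dummy frame used for the $X$-contraction (inherited from the Ricci tensor) independent of the new frame used for contracting $Z$ with $Y$, so that the two traces do not interfere, and one must check the index and sign conventions so that $\overset{\circ}{Ric}$ traces to $\overset{\circ}{R}$ with no extra sign. As a cross-check, one may instead contract the Riemann-tensor formula of the first proposition of this appendix twice in succession --- first setting $\omega=X^{\flat}$ and summing to recover \eqref{schrodingerriccitensorcoordinatefree}, then setting $Z=Y$ and summing --- which leads to the same expression.
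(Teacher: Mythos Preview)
Your proposal is correct and matches the paper's implicit approach: the paper states this corollary without proof, treating it as the immediate metric trace of the preceding Ricci-tensor identity \eqref{schrodingerriccitensorcoordinatefree}, and then only verifies the result afterwards by passing to a coordinate chart. Your explicit unpacking of the abstract trace notation via an orthonormal frame, together with the bookkeeping remark about keeping the two dummy frames independent, is exactly the content that the paper leaves tacit.
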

The local formulae for the Ricci tensor and Ricci scalar can be obtained by choosing a chart. Let $X=\partial_{\alpha},Y=\partial_{\nu},Z=\partial_{\mu}$ and consider equation \eqref{schrodingerriccitensorcoordinatefree}. This implies
\begin{multline*}
Ric(\partial_{\mu},\partial_{\nu})=\overset{\circ}{Ric}(\partial_{\mu},\partial_{\nu}) + \left(\overset{\circ}{\nabla}_{\partial_\alpha} U \right) \left(dx^{\alpha},\partial_{\nu},\partial_{\mu} \right)
    -\left(\overset{\circ}{\nabla}_{\partial_\nu} U\right) \left(dx^{\alpha},\partial_{\alpha},\partial_{\mu} \right)+U \left(dx^{\alpha},\partial_{\alpha},U(-,\partial_{\nu},\partial_{\mu}) \right) \\
    -U\left(dx^{\alpha},\partial_\nu ,U(-,\partial_\alpha,\partial_\mu) \right).
\end{multline*}
Using the definition of components in a chart, we obtain
\begin{equation}\label{schrodingerricci}
    R_{\mu \nu}=\overset{\circ}{R}_{\mu \nu} +\overset{\circ}{\nabla}_{\alpha} \tensor{U}{^\alpha _\nu _\mu}-\overset{\circ}{\nabla}_{\nu} \tensor{U}{^\alpha _\alpha _\mu}+\tensor{U}{^\alpha_\alpha  _\rho} \tensor{U}{^\rho_\nu_\mu} - \tensor{U}{^\alpha _\nu _\rho}\tensor{U}{^\rho_\alpha_\mu}.
\end{equation}

 \subsection{Coordinate Expressions for Curvature Tensors}\label{coordcurv}
Shortly put, the new results of this section are formulated in the following theorem.
\begin{theorem}\label{curvatureproperties}
    The Yano-Schrödinger connection $(\nabla,U)$ on a semi-Riemannian manifold has the following curvature properties:
    \begin{enumerate}
        \item[$(i)$] Ricci tensor:  $R_{\mu \nu}=\overset{\circ}{R}_{\mu \nu}+ g_{\mu \nu} \overset{\circ}{\nabla}_{\alpha} \pi^{\alpha} - \frac{1}{2} \overset{\circ}{\nabla}_{\mu} \pi_{\nu} + \overset{\circ}{\nabla}_{\nu} \pi_{\mu} - \frac{1}{2} g_{\mu \nu} \pi^{\alpha} \pi_{\alpha} - \frac{1}{4} \pi_{\nu} \pi_{\mu},$
         \item[$(ii)$] Ricci scalar: $ R=\overset{\circ}{R}+ \frac{9}{2} \overset{\circ}{\nabla}_{\mu} \pi^\mu - \frac{9}{4} \pi_{\alpha} \pi^{\alpha}$.
    \end{enumerate}
    
\end{theorem}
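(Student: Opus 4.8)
The plan is to specialize the general coordinate identity \eqref{schrodingerricci} for the Ricci tensor of an affine connection $\nabla_X Y = \overset{\circ}{\nabla}_X Y + U(-,X,Y)$ to the Yano--Schrödinger $U$-tensor, and then contract with $g^{\mu\nu}$ for the scalar; once the local components of $U$ are recorded, everything is elementary tensor algebra. Throughout I work in the physical dimension $n=4$, which is what makes the stated coefficients come out.

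First I would extract $\tensor{U}{^\lambda _\mu _\nu}$ from item $(ii)$ of Theorem \ref{realizationtheorem}: evaluating the defining relation on $\omega = dx^\lambda$, $X = \partial_\mu$, $Y = \partial_\nu$ and using $X^\flat(\,\cdot\,) = g(X,\,\cdot\,)$ and $\omega(\,\cdot\,) = g(\omega^\sharp,\,\cdot\,)$ gives
\begin{equation*}
\tensor{U}{^\lambda _\mu _\nu} = \pi^\lambda g_{\mu\nu} - \tfrac{1}{2}\bigl(\pi_\mu\,\delta^\lambda_\nu + \pi_\nu\,\delta^\lambda_\mu\bigr),
\end{equation*}
which one can cross-check against item $(i)$ applied to the semi-symmetric torsion $\tensor{\overset{A}{T}}{^\lambda _\mu _\nu} = \pi_\nu\delta^\lambda_\mu - \pi_\mu\delta^\lambda_\nu$. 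From here the trace over the first pair of indices is $\tensor{U}{^\alpha _\alpha _\mu} = \bigl(1 - \tfrac{n}{2} - \tfrac{1}{2}\bigr)\pi_\mu = \tfrac{1-n}{2}\,\pi_\mu$, and since $\overset{\circ}{\nabla}$ annihilates $g$ and $\delta$ one obtains at once
\begin{equation*}
\overset{\circ}{\nabla}_\alpha\tensor{U}{^\alpha _\nu _\mu} = g_{\mu\nu}\overset{\circ}{\nabla}_\alpha\pi^\alpha - \tfrac{1}{2}\overset{\circ}{\nabla}_\mu\pi_\nu - \tfrac{1}{2}\overset{\circ}{\nabla}_\nu\pi_\mu, \qquad \overset{\circ}{\nabla}_\nu\tensor{U}{^\alpha _\alpha _\mu} = \tfrac{1-n}{2}\overset{\circ}{\nabla}_\nu\pi_\mu .
\end{equation*}

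Next I would work out the two quadratic contractions in \eqref{schrodingerricci}. The term $\tensor{U}{^\alpha _\alpha _\rho}\tensor{U}{^\rho _\nu _\mu}$ is immediate, equal to $\tfrac{1-n}{2}\bigl(g_{\mu\nu}\pi^\alpha\pi_\alpha - \pi_\mu\pi_\nu\bigr)$. The term $\tensor{U}{^\alpha _\nu _\rho}\tensor{U}{^\rho _\alpha _\mu}$ is the one place where care is needed: expanding each factor into its three monomials produces nine terms, several of which generate a factor $\delta^\alpha_\alpha = n$, so the bookkeeping of the nested Kronecker deltas must be done carefully; the result is $\tfrac{3+n}{4}\pi_\mu\pi_\nu - g_{\mu\nu}\pi^\alpha\pi_\alpha$. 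This is the main obstacle, since a sign slip or a dropped dimension factor here would spoil the final coefficients; everything else is routine.

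Finally I would substitute the four pieces into \eqref{schrodingerricci} and set $n = 4$. The coefficient of $\overset{\circ}{\nabla}_\nu\pi_\mu$ is then $-\tfrac{1}{2} - \tfrac{1-n}{2} = 1$, that of $g_{\mu\nu}\pi^\alpha\pi_\alpha$ is $\tfrac{1-n}{2} + 1 = -\tfrac{1}{2}$, and that of $\pi_\mu\pi_\nu$ is $-\tfrac{1-n}{2} - \tfrac{3+n}{4} = -\tfrac{1}{4}$, while the $g_{\mu\nu}\overset{\circ}{\nabla}_\alpha\pi^\alpha$ and $\overset{\circ}{\nabla}_\mu\pi_\nu$ coefficients are $1$ and $-\tfrac{1}{2}$ already; this is exactly $(i)$, and symmetrizing it recovers the expression for $R_{(\mu\nu)}$ used in Proposition \ref{yanoeinstein}, a useful consistency check. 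Contracting $(i)$ with $g^{\mu\nu}$ and using $g^{\mu\nu}g_{\mu\nu} = 4$ then collapses the derivative part to $\bigl(4 - \tfrac{1}{2} + 1\bigr)\overset{\circ}{\nabla}_\mu\pi^\mu = \tfrac{9}{2}\overset{\circ}{\nabla}_\mu\pi^\mu$ and the quadratic part to $\bigl(-2 - \tfrac{1}{4}\bigr)\pi_\alpha\pi^\alpha = -\tfrac{9}{4}\pi_\alpha\pi^\alpha$, which is $(ii)$.
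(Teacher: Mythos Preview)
Your proposal is correct and follows essentially the same route as the paper: specialize the general coordinate identity \eqref{schrodingerricci} to the Yano--Schr\"odinger $U$-tensor \eqref{connectionmine}, expand the four contributions, and contract for the scalar. The only cosmetic difference is that you keep the spacetime dimension $n$ as a free parameter and set $n=4$ at the end, whereas the paper works in four dimensions from the outset; your bookkeeping of the nine-term product $\tensor{U}{^\alpha _\nu _\rho}\tensor{U}{^\rho _\alpha _\mu}$ is actually cleaner than the paper's displayed intermediate step.
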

Before proving the theorem, we show that the coordinate-free definition of a Schrödinger connection  \eqref{schrfree} reproduces the well-known physics formula \eqref{Schrodingerconsideration}. Choosing a local chart $X=\partial_{\nu},Y=\partial_{\mu}$ and employing the definition of the connection coefficients yields
\begin{equation}
\begin{aligned}
    \tensor{\Gamma}{^\rho}_{\mu \nu} \partial_{\rho}&=\tensor{\gamma}{^\rho_\mu_\nu} \partial_{\rho}+ U(-,\partial_{\nu},\partial_\mu).
\end{aligned}
\end{equation}
To extract the components, we act with a covector field $dx^{\lambda}$, dual to the vector fields
\begin{equation}\label{con}
    \tensor{\Gamma}{^\lambda _\mu _\nu}=\tensor{\gamma}{^\lambda _\mu _\nu} + \tensor{U}{^\lambda_\nu_\mu}.
\end{equation}
We relate $U$ and $S$ in coordinates, by expressing the coordinate-free definition of the Schrödinger tensor in the same chart
\begin{equation}
    S \left(\partial_{\lambda}, \partial_{\nu},\partial_{\mu} \right)=U \left(g(\partial_{\lambda},-),\partial_{\nu}, \partial_{\mu} \right) \iff S_{\lambda \nu \mu}=g_{\lambda \beta} \tensor{U}{^\beta _\nu _\mu} \iff g^{\lambda \rho} S_{\lambda \nu \mu}=\tensor{U}{^\rho_ \nu _\mu}.
\end{equation}
Hence, equation \eqref{con} can be written as
\begin{equation}
    \tensor{\Gamma}{^\lambda _\mu _\nu}=\tensor{\gamma}{^\lambda _\mu _\nu} + g^{\rho \lambda} S_{\rho \nu \mu}.
\end{equation}
It's important to note that we cannot yet conclude that $S_{\rho \nu \mu}$ corresponds to the Schrödinger tensor in physics.  Proposition \ref{Schrodingersymmetry} in the local chart $A=\partial_{\rho},X=\partial_{\nu},Y=\partial_{\mu}$ directly implies
\begin{equation*}
    S(\partial_{\rho},\partial_{\nu},\partial_{\mu})=S(\partial_{\rho},\partial_{\mu},\partial_{\nu}), \; \;S(\partial_{\rho},\partial_{\nu},\partial_\mu)+S(\partial_\mu,\partial_\rho, \partial_\nu)+S(\partial_\nu,\partial_\mu,\partial_\rho)=0,
\end{equation*}
which is in perfect accordance with \eqref{Schrodingerconsideration}
\begin{equation*}
    S_{\rho \nu \mu}=S_{\rho \mu \nu}, \; \; S_{(\rho\nu \mu)}=0.
\end{equation*}
Hence, our approach reproduces the general framework of Schrödinger connections present in the physics literature. We now turn our attention to the special case of the Yano-Schrödinger connection. Recall, this connection is given by
\begin{equation}
     U(\omega,X,Y)=\frac{1}{2} \left(X^{\flat} \left( \pi \left( \omega^{\sharp} \right) Y-\pi(Y) \omega^{\sharp} \right) + Y^{\flat} \left(\pi \left( \omega^{\sharp}\right)X -\pi(X) \omega^{\sharp}\right)\right).
\end{equation}
Choosing a local chart $\omega=dx^{\alpha},X=\partial_{\mu},Y=\partial_{\nu}$ leads to
\begin{equation}\label{connectionmine}
\begin{aligned}
U\left(dx^{\alpha},\partial_{\mu},\partial_{\nu} \right)=g_{\mu \nu} \pi^{\alpha} - \frac{1}{2} \left(\delta^{\alpha}_{\mu} \pi_{\nu} + \delta^{\alpha}_{\nu} \pi_{\mu} \right).
\end{aligned}
\end{equation}
which is in accordance with the result obtained in \cite{Iosifidisthesis} in a completely different manner. In some more detail, after raising indices in formula $(1.87)$ of \cite{Iosifidisthesis}, and taking care of sign conventions, one obtains  \eqref{connectionmine}. As we will consider physical applications later on, we will interchangeably use the notations $ \tensor{U}{^\alpha_\mu_\nu}, \; \; \tensor{Q}{^\alpha_\mu_\nu}$ for a Yano-Schrödinger connection of the form \eqref{connectionmine}, and we might call this type of connection a length-preserving non-metricity.

We now move on to proving theorem \ref{curvatureproperties} by computing the Ricci tensor and Ricci scalar, which was not computed in the reference. In our convention, the Ricci tensor can be obtained by formally substituting \eqref{connectionmine} into \eqref{schrodingerricci}, and simplifying. Given the Ricci tensor, we obtain the Ricci scalar by contracting with the inverse metric. The computation for the Ricci tensor reads

\allowdisplaybreaks
\begin{align}\label{riccitensoryanoschrodinger}
    R_{\mu \nu}
    &=\overset{\circ}{R}_{\mu \nu} + g_{\mu \nu} \overset{\circ}{\nabla}_{\alpha} \pi^{\alpha} - \frac{1}{2} \overset{\circ}{\nabla}_{\mu} \pi_{\nu} + \overset{\circ}{\nabla}_{\nu} \pi_{\mu} - \frac{3}{2}g_{\mu \nu} \pi_{\rho} \pi^{\rho} + \frac{3}{4} \pi_{\mu} \pi_{\nu} + \frac{3}{4} \pi_{\nu} \pi_{\mu} - \pi_{\nu} \pi_{\mu} \nonumber\\
    &+\frac{1}{2} \left( \pi^{\alpha} \pi_{\alpha} g_{\mu \nu} + \pi_{\nu} \pi_{\mu} + \pi_{\nu} \pi_{\mu} + g_{\mu \nu}\pi_{\rho} \pi^{\rho} \right)-\frac{1}{4} \left( \pi_{\nu} \pi_{\mu}+4 \pi_{\nu} \pi_{\mu} +\pi_{\nu} \pi_{\mu} + \pi_{\nu} \pi_{\mu} \right) \nonumber\\
    &=\overset{\circ}{R}_{\mu \nu} + g_{\mu \nu} \overset{\circ}{\nabla}_{\alpha} \pi^{\alpha} - \frac{1}{2} \overset{\circ}{\nabla}_{\mu} \pi_{\nu} +\overset{\circ}{\nabla}_{\nu} \pi_{\mu} - \frac{1}{2}g_{\mu \nu} \pi^{\alpha} \pi_{\alpha} - \frac{1}{4}\pi_{\nu} \pi_{\mu}.
\end{align}

Hence, the Ricci scalar takes the form
\begin{equation}\label{ricciscalaryanoschrodinger}
    R=\overset{\circ}{R} + 4 \overset{\circ}{\nabla}_{\alpha} \pi^{\alpha} - \frac{1}{2} \overset{\circ}{\nabla}_{\mu} \pi^{\mu} + \overset{\circ}{\nabla}_{\nu} \pi^{\nu} - 2 \pi^{\alpha} \pi_{\alpha} - \frac{1}{4} \pi_{\nu} \pi^{\nu} =\overset{\circ}{R} + \frac{9}{2} \overset{\circ}{\nabla}_{\mu} \pi^{\mu} - \frac{9}{4} \pi_{\alpha} \pi^{\alpha}.
\end{equation}

\section{Derivation of Friedmann Equations}\label{appendixB}
In this appendix we derive the generalized Friedmann equations in Schrödinger-Yano gravity, using the assumptions from section \ref{CosmologyFriedmann}. As a starting point, recall that in the standard Riemannian geometry for the Levi-Civita connection the components of the Ricci tensor are given by
\begin{equation}
    \overset{\circ}{R}_{00}=-3\frac{\ddot a}{a}, \; \; \overset{\circ}{R}_{11}=\overset{\circ}{R}_{22}=\overset{\circ}{R}_{33}=a \ddot{a} +2 \dot{a}^2.
\end{equation}
The Ricci scalar takes the well-known form
\begin{equation}
    \overset{\circ}{R}=6\left(\frac{\ddot a}{a}+\frac{\dot a^2}{a^2} \right), 
\end{equation}
while the non-vanishing Christoffel symbols read
\begin{equation}
    \tensor{\gamma}{^0_i_j}=a \dot{a} \delta_{ij},
\quad
    \tensor{\gamma}{^i_0_j}=\frac{\dot a}{a} \delta^{i}_{j}, \; \; i,j=1,2,3.
\end{equation}
The generalized Einstein equation \eqref{einsteinequation} in the $00$ component takes the form
\begin{equation}\label{einstein00}
    \overset{\circ}{R}_{00}-\frac{1}{2} g_{00} \overset{\circ}{R} -\frac{5}{4} g_{00} \overset{\circ}{\nabla}_{\alpha}  \pi^{\alpha} + \frac{1}{4} \left(\overset{\circ}{\nabla}_{0} \pi_{0}+\overset{\circ}{\nabla}_{0} \pi_0 \right) +\frac{5}{8} g_{00} \pi^{\alpha} \pi_{\alpha} - \frac{1}{4} \pi_0 \pi_0=8 \pi T_{00}
\end{equation}
Before evaluating the above equation, we will compute some terms that appear explicitly. With our assumptions, we have
\begin{equation}
    \pi_{0}=\Pi, \; \; \pi^{0}=-\Pi, \; \; T_{00}=\rho,
\end{equation}
from which we get
\begin{equation}
    \overset{\circ}{\nabla}_0 \pi_0=\dot{\Pi}.
\end{equation}
The covariant divergence of $\pi$ is given by
\begin{equation}
    \overset{\circ}{\nabla}_{\alpha} \pi^{\alpha}=-\dot \Pi - 3 \frac{\dot a}{a} \Pi.
\end{equation}
Substituting everything back into \eqref{einstein00} we obtain
\begin{equation}
    -3\frac{\ddot a}{a}+\frac{1}{2}6\left(\frac{\ddot a}{a}+\frac{\dot a^2}{a^2} \right)+\frac{5}{4}\left(-\dot \Pi -3 \frac{\dot a}{a} \Pi \right) +\frac{1}{4}\left(2 \dot \Pi \right)+\frac{5}{8} \Pi^2 - \frac{1}{4} \Pi^2=8\pi \rho.
\end{equation}
A straightforward algebraic simplification leads to
\begin{equation}
    3\frac{\dot a^2}{a^2}-\frac{3}{4} \dot \Pi -\frac{15}{4} \frac{\dot a}{a} \Pi + \frac{3}{8} \Pi^2=8 \pi \rho.
\end{equation}
By introducing the Hubble function $H=\frac{\dot a}{a}$, we have the first Friedmann equation
\begin{equation}
    3H^2=8\pi \rho +\frac{3}{4} \dot \Pi +\frac{15}{4} H \Pi - \frac{3}{8} \Pi^2.
\end{equation}
For the second Friedmann equation, we consider the $ii$ components of the Einstein equation \eqref{einsteinequation}
\begin{equation}\label{einsteinii}
    \overset{\circ}{R}_{ii}-\frac{1}{2} g_{ii} \overset{\circ}{R} -\frac{5}{4} g_{ii} \overset{\circ}{\nabla}_{\alpha}  \pi^{\alpha} + \frac{1}{4} \left(\overset{\circ}{\nabla}_{i} \pi_{i}+\overset{\circ}{\nabla}_{i} \pi_i \right) +\frac{5}{8} g_{ii} \pi^{\alpha} \pi_{\alpha} -\frac{1}{4} \pi_{i} \pi_{i}=8 \pi T_{ii}.
\end{equation}
Thanks to the symmetry the equations will take the same form for all spatial indices $i$. For simplicity, we consider $i=1$. From our conventions, it can be seen that
\begin{equation}
    \pi_1=0, \; \; T_{11}=pa^2.
\end{equation}
So for the Einstein equation in the $11$ component  we have
\begin{equation}
    a \ddot a +2 \dot{a}^2 -a^2\frac{1}{2}6\left(\frac{\ddot a}{a}+\frac{\dot a^2}{a^2} \right)-\frac{5}{4} a^2 \left( - \dot \Pi -3 \frac{\dot a}{a} \Pi \right)+\frac{1}{4} \left( 2(-a \dot a \Pi)\right)-\frac{5}{8}a^2 \Pi^2=8\pi p a^2.
\end{equation}
Multiplying the terms out yields
\begin{equation}
    a \ddot a +2 \dot{a}^2 -3 a \ddot a -3 \dot a^2 +\frac{5}{4}  a^2 \dot \Pi +\frac{15}{4} a \dot a \Pi -\frac{1}{2} a \dot a \Pi-\frac{5}{8}a^2 \Pi^2=8 \pi p a^2.
\end{equation}
Collecting the terms and dividing by $a^2$ results in
\begin{equation}
    -2\frac{\ddot a}{a}-\frac{\dot a^2}{a^2}+ \frac{5}{4} \dot{\Pi} + \frac{13}{4} \frac{\dot a}{a} \Pi - \frac{5}{8} \Pi^2=8\pi p,
\end{equation}
which can be equivalently rewritten as
\begin{equation}
    -2 \dot{H} -2H^2 - H^2 +\frac{5}{4} \dot \Pi +\frac{13}{4} H \Pi -\frac{5}{8} \Pi^2 =8\pi p.
\end{equation}
Thus the final form of the second Friedmann equation is obtained
\begin{equation}
    3H^2+2 \dot{H}=-8\pi p +\frac{5}{4} \dot \Pi +\frac{13}{4} H \Pi -\frac{5}{8} \Pi^2.
\end{equation}
\section{Connection Field Equations}\label{appendixC}
In this section we provide the steps, which lead to equation \eqref{connectionfieldequations}. First of all, recall that a length-preserving non-metricity, or as we called it, a Yano-Schrödinger connection is characterized by
\begin{equation}
    \tensor{Q}{^\alpha _\mu _\nu}=g_{\mu \nu} \pi^{\alpha}-\frac{1}{2} \left( \delta^{\alpha}_{\mu} \pi_\nu + \delta^\alpha _\nu \pi_\mu \right).
\end{equation}
In a cosmological setting, it is characterized by a time-dependent function, and it can be written as
\begin{equation}
   \tensor{Q}{^\alpha _\mu _\nu}=g_{\mu \nu} D(t) u^{\alpha}-\frac{1}{2} \left( \delta^{\alpha}_{\mu} D(t) u_\nu + \delta^\alpha _\nu D(t) u_\mu \right).
\end{equation}
The corresponding non-metricity vectors read
\begin{equation}
    Q_{\lambda}=Q_{\lambda \mu \nu} g^{\mu \nu}=(n-1) D(t) u_{\lambda}, \; \; q_{\sigma}= Q_{\lambda \rho \sigma} g^{\lambda \rho}=\frac{-n+1}{2} D(t).
\end{equation}
Therefore, straightforward substitutions into the Palatini tensor yield
\begin{equation}
\begin{aligned}
     \tensor{P}{_\lambda^{(\rho \sigma)}}=&\frac{n-1}{2} D(t) u_{\lambda} g^{\rho \sigma} - D(t) u_\lambda g^{\rho \sigma} + \frac{1}{2} D(t) \delta^{\rho}_{\lambda} u^{\sigma} +\frac{1}{2} D(t) \delta^{\sigma}_{\lambda} u^{\rho}
     +\frac{1}{2}\frac{-n+1}{2}D(t)\left(u^\rho \delta^\sigma_\lambda + u^\sigma \delta^\rho_\lambda \right)\\
     &- \frac{1}{4} (n-1) D(t) \left( u^\rho \delta ^\sigma_\lambda + u^\sigma \delta^\rho_\lambda \right).
\end{aligned}
 \end{equation}
This expression can be simplified to obtain
\begin{equation}
     \tensor{P}{_\lambda^{(\rho \sigma)}}=\frac{n-3}{2}D(t) u_{\lambda} g^{\rho \sigma}+\frac{2-n}{2} D(t) \delta^{\rho}_{\lambda} u^{\sigma}+\frac{2-n}{2} D(t) \delta^{\sigma}_{\lambda} u^{\rho},
\end{equation}
or equivalently, by lowering indices we have
\begin{equation}
     P_{\lambda \mu \nu}
     =\frac{n-3}{2}D(t) u_{\lambda} g_{\mu \nu}+\frac{2-n}{2} D(t) g_{\lambda \mu} u_\nu+\frac{2-n}{2} D(t) g_{\nu \lambda} u_\mu.
 \end{equation}
 Employing the connection field equations $P_{\lambda \mu \nu}= \kappa \Delta_{\lambda \mu \nu}$ gives
 \begin{equation}
     \Delta_{\lambda \mu \nu}=\frac{D(t)}{2\kappa} \left( (n-3) u_\lambda g_{\mu \nu}+ (2-n) \left(g_{\lambda \mu} u_{\nu} + g_{\nu \lambda} u_{\mu} \right)\right).
 \end{equation}
 Using the $3+1$ decomposition of the metric in the cosmological setting, the hypermomentum can be equivalently written as
 \begin{equation}
      \Delta_{\lambda \mu \nu}=\frac{D(t)}{2\kappa} \left( (n-3) u_\lambda h_{\mu \nu}+ (2-n) \left(h_{\lambda \mu} u_{\nu} + h_{\nu \lambda} u_{\mu} \right)+(3-n+n-2+n-2) u_\lambda u_\mu u_\nu \right).
 \end{equation}
 Straightforward algebraic simplifications lead to the desired form
 \begin{equation}
         \Delta_{\lambda \mu \nu}=\frac{D(t)}{2 \kappa} \left(  (n-3) u_\lambda h_{\mu \nu} + (4-2n) h_{\lambda (\mu} u_{\nu)} +(n-1) u_\lambda u_\mu u_\nu \right) .
 \end{equation}
\section{Palatini Variation of the Hyperfluid Action} \label{appendixD}
To compute the variation of the action \eqref{actionhyp} with respect to the connection, we employ the identities
\begin{equation}\label{connectionidentities}
\begin{aligned}
    \delta_{\Gamma}{Q}_{\mu}=2 \delta^{(\rho}_{\mu} \delta^{\sigma)}_{\lambda} \delta \tensor{\Gamma}{^\lambda_\rho_\sigma}\;\; \text{and} \; \;
    \delta_{\Gamma} q_{\mu}=\left( g^{\rho \sigma} g_{\mu \lambda}+ \delta^{(\rho}_{\mu} \delta^{\sigma)}_{\lambda} \right) \delta \tensor{\Gamma}{^\lambda _\rho _\sigma}.
\end{aligned}
\end{equation}
Using \eqref{connectionidentities},  varying \eqref{actionhyp} with respect to the connection gives
\begin{equation}
\
    \frac{\delta I}{\delta \tensor{\Gamma}{^\lambda _\rho _\sigma}}=- \frac{\sqrt{-g}}{2} \Big[ (-3n+7) D^{(\rho} \delta^{\sigma)}_{\lambda}+(n-3)\left( g^{\rho \sigma} D_{\lambda}+ D^{(\rho} \delta^{\sigma)}_{\lambda}\right) \Big].
\end{equation}
Recollecting the terms, and using the definition of the hypermomentum, we obtain
\begin{equation}
    \tensor{\Delta}{_\lambda ^\rho ^\sigma}=(-2n+4) D^{(\rho} \delta^{\sigma)}_{\lambda}+(n-3) g^{\rho \sigma} D_{\lambda}=\frac{D}{2 \kappa} \Big[ (-2n+4) u^{(\rho} \delta^{\sigma)}_{\lambda} + (n-3) g^{\rho \sigma} u_{\lambda} \Big].
\end{equation}
Lowering the indices, we have
\begin{equation}
\begin{aligned}
    \tensor{\Delta}{_\lambda _\mu _\nu}&=g_{\rho \mu} g_{\sigma \nu} \tensor{\Delta}{_\lambda ^\rho ^\sigma}=\frac{D}{2 \kappa} \left[(n-3) g_{\mu \nu} u_{\lambda} + (4-2n) u_{(\mu} g_{\nu) \lambda} \right]\\
    &=\frac{D}{2 \kappa} \left[ (n-3) h_{\mu \nu} u_{\lambda}+(4-2n) h_{\lambda (\nu} u_{\mu)} +(3-n+2n-4) u_{\mu} u_{\nu} u_{\lambda} \right].
\end{aligned}
\end{equation}
Finally, a straightforward algebraic simplification yields the desired result
 \begin{equation}
     \Delta_{\lambda \mu \nu}=\frac{D}{2\kappa} \left[ (n-3) h_{\mu \nu} u_{\lambda} + (4-2n) h_{\lambda (\nu} u_{\mu)} +(n-1) u_\mu u_\nu u_\lambda \right].
 \end{equation}
\section*{Acknowledgments}
L.Cs. would like to thank Collegium Talentum Hungary and the StarUBB institute, for the research scholarships provided. Discussions with Prof. Tiberiu Harko and Tudor Patuleanu are highly appreciated. L.Cs. would also like to thank Aden Shaw for the guidance regarding creating illustrations. AA acknowledges support from an internal grant at the St. Mary's College of Maryland. DI's work was supported by the Estonian Research Council grant (SJD14).

\printbibliography

\end{document}